\newtheorem{thm}{Theorem}
\newtheorem{prop}{Proposition}
\numberwithin{equation}{section}
\newcommand{\fX}{{\mathfrak X}}
\newcommand{\N}{{\mathcal N}}
\newcommand{\Z}{{\mathbb Z}}
\newcommand{\C}{{\mathbb C}}
\newcommand{\R}{{\mathbb R}}
\newcommand{\I}{{\mathrm i}}
\newcommand{\re}{{\mathrm {Re}}}
\newcommand{\im}{{\mathrm {Im}}}
\newcommand{\BPS}{{\mathrm {BPS}}}
\newcommand{\RG}{{\mathbf {RG}}}
\newcommand{\abs}[1]{\lvert#1\rvert}
\newcommand{\HH}{{\mathcal H}}
\newcommand{\cO}{{\mathcal O}}
\newcommand{\inprod}[1]{\langle#1\rangle}
\newcommand\fro{{\overline{\underline{\Omega}}}}
\begin{document}

\date{February, 2014}

\institution{Fellows}{\centerline{${}^{1}$Society of Fellows, Harvard University, Cambridge, MA, USA}}
\institution{Austin}{\centerline{${}^{2}$Department of Mathematics, University of Texas at Austin, Austin, TX, USA}}

\title{Line Defects, Tropicalization, and Multi-Centered Quiver Quantum Mechanics}

\authors{Clay C\'{o}rdova\worksat{\Fellows}\footnote{e-mail: {\tt cordova@physics.harvard.edu}}  and Andrew Neitzke \worksat{\Austin}\footnote{e-mail: {\tt neitzke@math.utexas.edu}} }

\abstract{We study BPS line defects in $\N=2$ supersymmetric four-dimensional field theories.  We focus on theories of ``quiver type,'' those for which the BPS particle spectrum can be computed using quiver quantum mechanics. For a wide class of models, the renormalization group flow between defects defined in the ultraviolet and in the infrared is bijective.  Using this fact, we propose a way to compute the BPS Hilbert space of a defect defined in the ultraviolet, using only infrared data.  In some cases our proposal reduces to studying representations of a ``framed'' quiver, with one extra node representing the defect.  In general, though, it is different. As applications, we derive a formula for the discontinuities in the defect renormalization group map under variations of moduli, and show that the operator product algebra of line defects contains distinguished subalgebras with universal multiplication rules.  We illustrate our results in several explicit examples.}

\maketitle

\tableofcontents

\bibliographystyle{utphys}

\enlargethispage{\baselineskip}

\section{Introduction and Summary}
\label{intro}

Non-local operators are useful tools in quantum field theory.  In the most well-known example, the Wilson loop and its generalizations can be used as order parameters to study phases of gauge theories \cite{Wilson:1974sk, 'tHooft:1977hy}.   More generally, line defects --- objects extended along a timelike world-line --- model infinitely massive particles, and the response of a field theory to such sources can teach us about the spectrum of excitations of the original system \cite{Kapustin:2005py, Drukker:2010jp, Aharony:2013hda, Billo:2013jda}.

In this paper we study line defects in the context of four-dimensional supersymmetric field theories with $\mathcal{N}=2$ supersymmetry.  In such field theories there exist supersymmetric line defects preserving one-half of the supercharges.  Such defects
have been previously studied e.g. in \cite{Kapustin:2006hi,Drukker:2009id,Drukker:2009tz,Drukker:2010jp,Gaiotto:2010be}.  Familiar examples are provided by supersymmetric Wilson-'t Hooft lines in supersymmetric gauge theories, but the concept is more general: all evidence accumulated so far suggests that every $\mathcal{N}=2$ theory, even if it has no known Lagrangian, supports a wide variety of BPS line defects.  The Hilbert space of the quantum field theory in the presence of such supersymmetric defects supports a set of supersymmetric states known as \emph{framed BPS states} \cite{Gaiotto:2010be}.  As for the usual BPS states, these framed BPS states saturate a BPS bound and thus they are
the lowest-energy states in their charge sectors.

An important special case of supersymmetric line defects occurs in abelian gauge theories.  Such a theory has an integral electromagnetic charge lattice $\Gamma$, and supersymmetric line defects are labelled by a charge $\gamma \in \Gamma$.  Physically, the defect corresponding to a charge $\gamma$
models the insertion of an infinitely massive BPS dyon carrying charge $\gamma$.
As the abelian theory is free, the spectrum of BPS excitations around each defect is easy to describe, and there is exactly one framed BPS state, describing the vacuum in the presence of the defect.

In general, our method for studying line defects in an interacting $\mathcal{N}=2$ theory is to deform onto the Coulomb branch and follow the defect to the infrared.  The low-energy effective field theory after this flow is a free abelian gauge theory.  Correspondingly, the IR limit of a given line defect is a superposition of line defects of the abelian theory \cite{Gaiotto:2009hg}.   The coefficients in this superposition are given precisely by the counts of framed BPS states.  We extract the framed BPS state of lowest energy, which may be thought of as the ground state of the theory in the presence of the defect.  This ground state carries a charge, and the assignment of this charge to a UV line defect determines a renormalization group map:
\begin{equation}
\mathbf{RG}: \{\mathrm{UV \ Line \ Defects}\}\rightarrow \{\mathrm{IR \ Line \ Defects}\}=\Gamma. \label{RGmapintro}
\end{equation}
Many of the results of this paper are a consequence of the remarkable properties of this renormalization group flow.  In particular the map $\mathbf{RG}$ is conjecturally bijective \cite{Gaiotto:2010be}, and hence yields IR techniques for studying UV line defects.  In this paper we provide further evidence for this idea by illustrating a method for reconstructing UV line defects (or at least their complete spectra of framed BPS states) from their IR charges, in a wide class of $\mathcal{N}=2$ theories.

The correspondence \eqref{RGmapintro} has significant implications in both physics and mathematics.  Physically, it implies that screening of BPS UV line defects does not occur: any two UV defects can be distinguished by low-energy observations.  Mathematically, the $\mathbf{RG}$ map has an interpretation in terms of the geometry of the Coulomb branch.  The four-dimensional theory reduced on a circle has a moduli space $\fX$, and the expectation value of a line defect wrapping the circle leads to a holomorphic function on this space.  The conjectured bijection \eqref{RGmapintro} can then be interpreted as stating that the set of such functions can be labelled by integral points of a \emph{tropical} version of $\fX$; this seems to give a physical realization of a proposal of \cite{MR2233852,MR2349682} (as also discussed in \cite{Gaiotto:2010be}.)  We sketch these ideas in \S \ref{tropical}, but leave a full exploration of this subject to future work.

In the remainder of \S \ref{genlineops} we survey the general properties of BPS line defects including their framed BPS spectra and the renormalization group flow described above.  In addition, we review how framed BPS states can be used to compute the operator product algebra of line defects defined in \cite{Kapustin:2006pk} and further studied in \cite{Kapustin:2006hi, Drukker:2009tz, Gomis:2009ir, Gomis:2009xg, Gaiotto:2010be, Cecotti:2010fi, Ito:2011ea, Saulina:2011qr, Moraru:2012nu, Xie:2013lca}.

In \S \ref{quiverconstruct} we introduce our main class of examples, $\mathcal{N}=2$ theories of quiver type \cite{Douglas:1996sw, Douglas:2000ah,  Douglas:2000qw, Alim:2011kw}.  These are $\mathcal{N}=2$ systems whose BPS spectra may be computed from non-relativistic multi-particle quantum mechanics encoded by a quiver.  Not all $\mathcal{N}=2$ field theories admit such a simple description of their BPS states.  However, included in this class are gauge theories with matter, many strongly interacting non-lagrangian  conformal field theories, and theories described by M5-branes on punctured Riemann surfaces \cite{Fiol:2000pd, Fiol:2000wx, Fiol:2006jz, Cecotti:2011rv, Cecotti:2011gu, Alim:2011ae,    DelZotto:2011an, Xie:2012dw, Cecotti:2012va,  Cecotti:2012sf, Cecotti:2012gh, Saidi:2012gi, Cecotti:2012jx, Xie:2012jd, Cecotti:2012se, Cecotti:2012kv, Cecotti:2013lda, Cecotti:2013sza, Galakhov:2013oja}.  In any $\N=2$ theory of quiver type, we ask: how can one generalize the quantum-mechanical description of BPS states to framed BPS states?

There is a natural candidate answer to this question.  To model a defect, one may extend the quiver by adding one new node, with charge dictated by the $\mathbf{RG}$ map.  We refer to these extended quivers as \emph{framed quivers}.  Framed quivers have appeared previously in work on framed BPS states \cite{Chuang:2013wt, Cirafici:2013bha}, and in the context of Donaldson-Thomas theory \cite{MR2403807, MR2592501, Ooguri:2008yb}.   A surprising observation of this paper is that \emph{a naive calculation of framed BPS states from a framed quiver in general produces an incorrect spectrum}.\footnote{In the restricted class of examples studied in \cite{Chuang:2013wt, Cirafici:2013bha} the framed quiver produces the correct answer. }

To understand why the naive computation gives the wrong answer, we recall that in theories of quiver type there are two distinct descriptions of BPS states.  These complementary perspectives arise respectively from quantization of the Higgs and Coulomb branches of the quiver quantum mechanics \cite{Denef:2002ru}.  The Higgs branch provides a simpler mathematical formalism, and relates BPS states to cohomology of moduli spaces of quiver representations \cite{Douglas:1996sw, Douglas:2000ah, Douglas:2000qw}.  It describes a configuration of BPS particles with nearly coincident positions.   By contrast, on the Coulomb branch, BPS states are obtained by quantization of classically stable multi-particle configurations, and the resulting wavefunctions physically describe a kind of $\mathcal{N}=2$ molecule where BPS states are bound together by electromagnetic and scalar exchange, typically orbiting each other at non-zero semiclassical radius \cite{Denef:2000nb}.

There is a subtle relationship between these two pictures of BPS particles.  As parameters are varied, one or the other descriptions is more natural.  In simple cases, one
can follow a state represented by a particular wavefunction on the Higgs branch to a corresponding multi-centered configuration on the Coulomb branch; however, in general,
a wavefunction on the Higgs branch may correspond on the Coulomb branch to a degenerate molecule where semiclassically some of the particles have coincident positions \cite{Denef:2007vg, deBoer:2008zn}.   This is a field theory version of the supergravity problem of scaling configurations of multi-centered black holes.

Now let us return to the problem of interest in this paper, the computation of framed BPS states.  In the examples we consider, we find that the correct spectrum is obtained by counting only a subset of the Higgs branch states for our framed quiver.  Roughly speaking, these are the states whose spatial description on the Coulomb branch involves a \emph{core charge} consisting of an infinitely massive defect, and a \emph{halo charge} consisting of finite mass BPS particles, where moreover the \emph{halo particles remain at non-zero spatial separation from the core}.  A typical framed BPS state is illustrated in Figure \ref{fig:ptolemyintro}.
\begin{figure}[here!]
  \centering
  \includegraphics[width=0.4\textwidth]{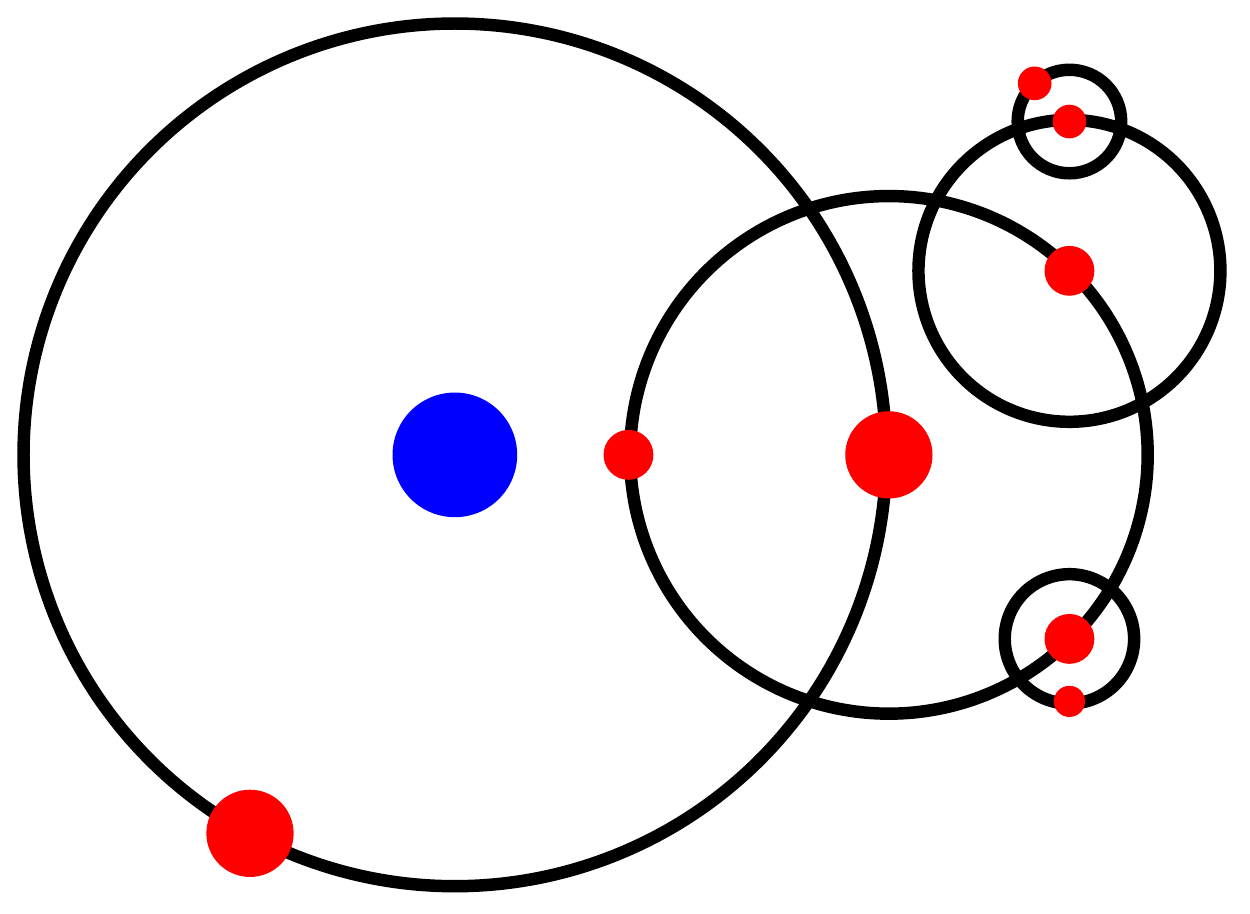}
  \caption{A semiclassical cartoon of a framed BPS state.  The blue object indicates the infinitely massive core charge.  It is modeled in the quiver description by a new node with charge determined by the $\mathbf{RG}$ map.  The red objects are ordinary BPS states of finite mass.  They comprise the halo charge which is bound to the defect, and orbit the core in a multi-centered Ptolemaean system.  The core charge is spatially separated from any constituent of the halo.}
  \label{fig:ptolemyintro}
\end{figure}
In other words, from the spectrum computed by the framed quiver, we discard those states where the halo and core charges become spatially coincident.

This rough idea may be made precise using a localization formula for the counting of Coulomb branch states \cite{Manschot:2011xc, Manschot:2012rx, Manschot:2013sya}, which enables us to consistently separate the unwanted configurations.  From this, we obtain our main proposal, in \S \ref{framing}: a conjectural algorithm for calculating the framed BPS states of any BPS line defect in any theory of quiver type.  Our proposal satisfies two consistency checks.  First, it obeys the wall-crossing formula \cite{KS1, KS2, KS3, JS, J} which governs decays of framed BPS states.  Second, it reproduces known results in explicit examples.  In \S \ref{framedquivers} we also describe conditions under which our proposal reduces to a more traditional calculation using the framed quiver.

In \S \ref{properties} we study some interesting implications of our proposal.  A first result concerns the behavior of the $\mathbf{RG}$ map as moduli are varied.  Locally in moduli space the $\mathbf{RG}$ map is constant; however, at certain codimension one loci, dubbed \emph{anti-walls}, the ground state of a defect may become non-unique.  Typically, the degenerate ground states have distinct electromagnetic charges, and hence exactly at the anti-wall the $\mathbf{RG}$ map is ill-defined.  On either side of the anti-wall, the degeneracy is lifted, but in crossing the wall the $\mathbf{RG}$ map changes discontinuously.  For a large class of anti-walls we explain how these discontinuities fit into the more general scheme of quantum-mechanical dualities described by quiver mutations.  This enables us to derive a uniform formula for the discontinuities at these anti-walls in terms of certain piecewise linear transformations on the infrared charge lattice $\Gamma$, matching a proposal from \cite{Gaiotto:2010be}.

This result implies that the collection of UV line defects of an $\N =2$ theory of quiver type is endowed with a natural piecewise linear structure.  More concretely, there is an atlas of possible ``charts'', each isomorphic to the charge lattice $\Gamma$, which via the inverse of $\mathbf{RG}$, may be used to label BPS UV line defects.  The maps which govern the changes of coordinates are exactly the piecewise linear discontinuities described above.

In \S \ref{dualseeds} we identify a class of defects with simple properties.  At a fixed generic point in moduli space, we label UV defects by their ground state charge as $L_{\gamma}$.  We identify a distinguished cone, $\check{C}$, inside the charge lattice and using the inverse of $\mathbf{RG}$ it defines a cone inside the set of BPS UV line defects.  We prove that line defects with ground state charge in $\check{C}$ obey the universal algebra
\begin{equation}
\gamma_{1}, \gamma_{2} \in \check{\mathcal{C}}\Longrightarrow L_{\gamma_{1}} * L_{\gamma_{2}}=y^{\langle \gamma_{1}, \gamma_{2}\rangle}L_{\gamma_{1}+\gamma_{2}}, \label{universalintro}
\end{equation}
where in the above, $*$ is the non-commutative OPE of line defects with formal non-commutative parameter $y$, and $\langle \cdot, \cdot \rangle$ indicates the symplectic product on the charge lattice.  In the context of theories of quiver type, \eqref{universalintro} implies a nontrivial recursion relation on the cohomology of framed quiver moduli spaces.

In \S \ref{mutateddualseeds}, we extend the result \eqref{universalintro} by making use of quantum mechanical dualities encoded by quiver mutations.   We show that there are in fact many distinct cones each obeying the simple subalgebra.  The geometry of these cones of line defects provides significant information about the defect OPE algebra.

Finally, in \S \ref{examples}, we illustrate our results with explicit examples.  We focus on systems described by the so-called Kronecker quivers: those with two nodes, and an arbitrary number $k$ of unidirectional arrows.  For $k=1,$ the quiver describes the BPS states of the Argyres-Douglas CFT defined in \cite{Argyres:1995jj, Argyres:1995xn}.  For $k=2$ it describes the BPS states in $SU(2)$ super Yang-Mills studied in \cite{Seiberg:1994rs, Seiberg:1994aj}.  For $k>2$ this quiver is not known to occur as a complete description of the spectrum of any UV theory; however, it does occur as a subsector of many $\mathcal{N}=2$ theories of quiver type, including in particular $SU(n)$ super Yang-Mills with $n>2$ \cite{Galakhov:2013oja}.

In \S \ref{dualseedsex} we work out the geometry of the distinguished cones of line defects.  The cases $k=1$, $k=2$ and $k>2$ exemplify three possible phenomena.  When $k=1$ the cones are finite in number and fill the entire charge lattice.  When $k=2,$ there are infinitely many cones and they accumulate.  The complement of the set of  cones is a single ray.  Finally, when $k>2,$ there are infinitely many cones which accumulate and whose complement is an open set.

In \S \ref{explicitspectra} we evaluate explicit examples of framed spectra and OPEs.  For the case of the Argyres-Douglas theory, we evaluate all OPEs and hence determine the complete non-commutative algebra of line defects.  For the case of $SU(2)$ super Yang-Mills, we determine the spectrum supported by a Wilson line in an arbitrary representation of $SU(2)$.  This example is especially significant.  A naive calculation of the framed BPS spectrum using a framed quiver produces an incorrect result.  By contrast, using our conjecture for the framed BPS states we are able to reproduce the expected semi-classical spectrum and OPE algebra, thus providing a nontrivial check on our proposal.

\medskip

Finally, continuing a thread from \cite{Gaiotto:2010be, Cecotti:2010fi, Cirafici:2013bha}, we should remark that many of the statements about line defects in this paper closely resemble statements which have appeared in the cluster literature.  Here is one example. The idea that cluster algebras should have particularly nice ``canonical bases'' was
part of the original motivation of the notion of cluster algebra \cite{MR1887642}, appears in some form in the work of Fock and Goncharov already mentioned \cite{MR2349682,MR2233852} and in many other places.  The canonical basis for a cluster algebra seems to be something close to the set of simple line defects in an $\N=2$ theory.
Now, in \cite{MR2295199} one finds the remark that the canonical basis is expected to contain all cluster monomials, i.e. all monomials of the form $x_1^{a_1} \cdots x_n^{a_n}$ where the $x_i$ are variables in a single cluster and all $a_i \ge 0$. This remark closely resembles one of the observations made in this paper, that at least in $\N=2$ theories of quiver type, there are some canonically defined subsets (cones) in the set of simple line defects, consisting of defects obtained by operator products of the form $L_1^{a_1} \cdots L_n^{a_n}$.  However, the precise details of the relation between these statements are not clear yet, at least to us.  It would be very interesting to understand the connection more precisely.

\section{BPS Line Defects}
\label{genlineops}

In this section we review the general theory of BPS line defects in four-dimensional $\N=2$ field theories.  These theories have a Coulomb branch where the IR physics is governed by an abelian gauge theory \cite{Seiberg:1994aj, Seiberg:1994rs}.  We let $u$ denote a point in the Coulomb branch and $\Gamma_{u}$ the total electromagnetic and flavor charge lattice.  This lattice is equipped with an antisymmetric bilinear pairing
\begin{equation}
 \langle \cdot, \cdot \rangle : \Gamma_{u} \times \Gamma_{u} \rightarrow \Z,
\end{equation}
which is the Dirac inner product on charges.  The physics is controlled by the central charge map
\begin{equation}
Z(u): \Gamma_{u} \rightarrow \C,
\end{equation}
which is a linear function of the charge variable.\footnote{In the following we frequently suppress the dependence of the charge lattice and central charge on the point $u$.}

In \S \ref{susyalg} we review the basic definition of BPS line defects and their associated Hilbert spaces of framed BPS states.

In \S \ref{abelian}, as a prelude to the discussion of defect renormalization group flow, we describe in more explicit terms the BPS line defects of free abelian gauge theories.

In \S \ref{rgdecompose}, we turn to a detailed study of defect renormalization group flow.  We introduce the map $\RG$ of equation \eqref{RGmapintro}, which assigns a low-energy label to each UV line defect by extracting the charge of its ground state.  We describe the behavior of the map $\RG$ explicitly for the familiar case of $SU(2)$ Yang-Mills theory where the UV line defects can be explicitly enumerated.  We also sketch some connections of $\RG$ to tropical geometry.

Finally, in \S \ref{OPEsec} we review the theory of line defect OPE algebras, and explain how to extract the OPE coefficients from the framed BPS spectrum.

\subsection{Definition of Supersymmetric Line Defects}
\label{susyalg}

Choose a rest frame, which we hold fixed for the rest of the paper. The $\N=2$ supersymmetry algebra contains a distinguished family of subalgebras $A_\zeta$, labeled by $\zeta \in \C^\times$.  Each $A_\zeta$ has real dimension four, and contains time translations but no other translations. See \cite{Gaiotto:2010be} for the generators.  $U(1)_R$ acts nontrivially on the set of subalgebras $A_\zeta$, taking $\zeta \mapsto e^{\I \alpha} \zeta$ (mirroring its action on the central charge by $Z \mapsto e^{\I \alpha} Z$).

For any $\zeta \in \C^\times$, define a \emph{$\zeta$-supersymmetric line defect} to be a defect in our theory which preserves the following symmetries.
\begin{itemize}
\item The supersymmetry algebra $A_\zeta$ (thus the defect is one-half BPS).
\item The spatial $SO(3)$ rotation symmetry about the point where the defect is inserted.
 \item The $SU(2)_R$ symmetry.
\end{itemize}

When $\abs{\zeta} = 1$, the algebra $A_\zeta$ and associated $\zeta$-supersymmetric line defects have a simple physical interpretation.  $A_\zeta$ consists exactly of those supercharges which annihilate a BPS particle at rest, and whose central charge satisfies $Z / \zeta \in \R_+$.   Hence if  $\abs{\zeta} = 1,$ we may interpret $\zeta$-supersymmetric line defects as representing heavy external BPS source particles. If $\abs{\zeta}\neq1$, the corresponding defects do not have this kind of interpretation.

\subsubsection{Defect Hilbert Space and Framed BPS States}
\label{framedhilbert}

Fix a $\zeta$-supersymmetric defect $L$ and a point $u$ of the Coulomb branch. Let $\HH_{L,u}$ denote the Hilbert space of the theory with the defect $L$ inserted. $\HH_{L,u}$ is a representation of the remaining supersymmetry $A_\zeta$.  The BPS bound in the presence of the defect is\footnote{Our conventions here differ from those  of \cite{Gaiotto:2010be} by a sign.} \cite{Gaiotto:2010be}
\begin{equation} \label{eq:bps-bound}
M \ge \re(Z / \zeta).
\end{equation}
States which are annihilated by all of $A_\zeta$, or equivalently states which saturate \eqref{eq:bps-bound}, are called \emph{framed BPS states} and make up a sub-Hilbert space $\HH^\BPS_{L,u}$.  The framed BPS spectrum is one of the main objects of study in this paper.

From the point of view of the original $\N=2$ theory, framed BPS states might be thought of as one-half BPS, just as the defect $L$ itself.  However, strictly speaking they are not states of the original theory at all, but rather states of the modified theory with $L$ inserted.  As a result, the framed BPS states do \emph{not} transform in representations of the super Poincar\'{e} generators broken by the defect $L$; those generators are simply not present in the theory with $L$ inserted.  Note in particular that the broken generators include spatial translations and thus the framed BPS states do not have any traditional interpretation in terms of particles.

As usual, the Hilbert space decomposes into sectors labeled by the total electromagnetic and flavor charge
\begin{equation} \label{eq:h-decomp}
 \HH_{L,u} = \bigoplus_{\gamma \in \Gamma_u} \HH_{L,u,\gamma}, \qquad \HH^\BPS_{L,u} = \bigoplus_{\gamma \in \Gamma_u} \HH^\BPS_{L,u,\gamma}.
\end{equation}
The framed BPS degeneracies may be assembled into an index, the \emph{framed protected spin character} defined in \cite{Gaiotto:2010be}.  Let $J_{3}$ denote a Cartan generator of the $SO(3)$ rotation group, and $I_{3}$ a Cartan generator of the $SU(2)_{R}$ symmetry.  The framed protected spin character is given by
\begin{equation}
\fro(L, \gamma,  u, y) = {\rm Tr}_{\HH_{L, u,\gamma}^{\rm BPS}} y^{2J_3} (-y)^{2I_3}. \label{frodef}
\end{equation}
This quantity is invariant under small deformations of the modulus $u$.  At walls of marginal stability, it jumps according to the wall-crossing formula \cite{KS1, KS2, KS3, JS, J}.\footnote{In \eqref{eq:h-decomp} and \eqref{frodef}, there is an unpleasant proliferation of subscripts and arguments. We drop some of them when we can do so without creating too much confusion.}

There is an important positivity phenomenon for the framed protected spin characters:  in a UV complete $\N=2$ theory in four dimensions, all framed BPS states should
have $I_3 = 0$.  This is the so-called ``no-exotics conjecture'' formulated in \cite{Gaiotto:2010be}, and proven for the case of $SU(n)$ super Yang-Mills in \cite{Chuang:2013wt}.

If the no-exotics conjecture is correct, the framed protected spin character \eqref{frodef} of any BPS line defect becomes an Laurent series in $y$ with non-negative integral coefficients where the coefficient of $y^n$ literally counts, \emph{without signs}, framed BPS states with angular momentum $\frac{n}{2}$ in the $x^3$ direction.  In particular, specializing to $y=1$ the framed protected spin character reduces to the dimension dimension of the framed BPS Hilbert space in a given charge sector:
\begin{equation}
\fro(L, \gamma,  u, y)|_{y=1}\equiv \fro(L, \gamma,  u)=\dim \HH_{L,u,\gamma}^{\BPS}.
\end{equation}
We assume the absence of exotics in the remainder of this paper, most crucially in our discussion of defect renormalization group flow in \S \ref{rgdecompose}.

\subsubsection{Classifying Line Defects}
\label{simpledefects}

In $\N=2$ super Yang-Mills, basic examples of supersymmetric line defects are provided by supersymmetric Wilson and 't Hooft lines, or more generally Wilson-'t Hooft lines; these defects are labeled by pairs $\alpha = (\lambda,\lambda^*)$ consisting of a weight and coweight modulo simultaneous Weyl reflection as described in \cite{Kapustin:2005py}.

For a general field theory, we will similarly introduce a discrete label $\alpha$ which keeps track of the type of line defect.  However, to get a useful classification we have to restrict our attention to special line defects, as emphasized in \cite{Gaiotto:2010be}. For example, in $\N=2$ super Yang-Mills with gauge group $G$, the line defects labeled by $\alpha = (\lambda, 0)$ are the Wilson lines corresponding to \emph{irreducible} representations of $G$, not \emph{arbitrary} representations.

We make a similar distinction in a general field theory. Thus we call a $\zeta$-supersymmetric line defect $L$ \emph{composite} if the theory with $L$ inserted splits up into superselection sectors, i.e. if there exist nontrivial line defects $L_1$ and $L_2$ such that
\begin{equation}
\inprod{L \cO} = \inprod{L_1 \cO} + \inprod{L_2 \cO}
\end{equation}
where $\cO$ represents any combination of $\zeta$-supersymmetric operators.  In this case we write simply
\begin{equation}
L = L_1 + L_2.
\end{equation}
An equivalent way of expressing the same concept is to note that the framed BPS Hilbert space is a representation of the $\zeta$-supersymmetric operators.  If the line defect $L$ is composite, then the associated representation is reducible and we write
\begin{equation}
 \HH_{L,u} = \HH_{L_1,u} \oplus \HH_{L_2,u}.
\end{equation}

We call $L$ \emph{simple} if it is not composite.  Our index $\alpha$ will run over all simple $\zeta$-supersymmetric line defects.  More precisely, we will assume that for each fixed choice of $\alpha$ there is a continuous family of simple line defects $L(\alpha, \zeta)$ parameterized by $\zeta \in \C^\times$,\footnote{Actually, in some asymptotically free theories there is monodromy acting on the family of line defects when $\zeta$ goes around zero \cite{Gaiotto:2010be}.  Thus, strictly speaking, in those theories $L(\alpha, \zeta)$ depends not only on $\zeta \in \C^\times$ but on a choice of $\log \zeta$.  We will suppress this dependence in the notation.} where $L(\alpha, \zeta)$ is $\zeta$-supersymmetric.   This statement can be verified directly in large classes of field theories, such as the theories of class $S[A_1]$ \cite{Gaiotto:2010be}.  As we describe in \S \ref{rgdecompose}, the existence of the $\mathbb{C}^{\times}$ family would also follow from the expected properties of the defect renormalization group flow.

\subsection{BPS Line Defects in Abelian Gauge Theories}
\label{abelian}

For a general $\N=2$ quantum field theory, it may difficult to explicitly describe the collection of simple BPS line defects parameterized by the label $\alpha$.  However, in the special case of an abelian gauge theory,  the discussion is much more concrete: simple supersymmetric line defects are parameterized by the lattice $\Gamma$ of electromagnetic and flavor charges.

Indeed, for every charge $\gamma \in \Gamma$, and every $\zeta \in \C^\times$, there is a corresponding abelian $\zeta$-supersymmetric Wilson-'t Hooft line defect $L(\gamma, \zeta)$.\footnote{In the general discussion above we used the letter $\alpha$ for the labels of line operators, while we now use $\gamma$; the reason for this change of notation will be apparent when we discuss RG flows in Section \ref{rgdecompose} below.} Moreover, in the case where $\gamma$ is a pure electric charge, it is easy to write the $\zeta$-supersymmetric Wilson line explicitly.  For notational convenience let us write it in the case of a rank one theory, where the bosonic sector consists of a single $u(1)$ gauge field $A$ and a complex scalar $\phi$. Then the supersymmetric Wilson line defect is
\begin{equation} \label{eq:susy-wilson-abelian}
 L(\gamma, \zeta) = \exp \, \left[i \gamma \int A + \frac{1}{2} (\zeta^{-1} \phi + \zeta \bar\phi) \right]
\end{equation}
where the integral runs over the time direction.  One checks directly that this defect is indeed invariant under the algebra $A_\zeta$.  More general Wilson-'t Hooft line defects can be obtained from this one by electric-magnetic duality.

In particular, \eqref{eq:susy-wilson-abelian} shows directly that in abelian theories line operators come in $\mathbb{C}^{\times}$ families parameterized by $\zeta$, as we proposed more generally above.

\subsubsection{Framed BPS States in Abelian Theories}
\label{bpsabelian}

We may also obtain an explicit description of the framed BPS Hilbert space in the presence of a dyonic defect $L(\gamma, \zeta)$.

As mentioned in \S \ref{susyalg}, when $|\zeta| = 1$ one way of thinking about $\zeta$-supersymmetric line defects is as heavy one-half BPS source particles whose central charge satisfies $Z/\zeta \in \mathbb{R}_{+}$.  In the case of an abelian theory we can realize this idea concretely as follows.  Let $T$ denote an abelian $\N=2$ theory, and $\gamma \in \Gamma$ a charge in that theory.  Fix an electric-magnetic duality frame for which $\gamma$ is a purely electric charge. Then let $T[\gamma, M\zeta]$ denote an augmented theory obtained by coupling $T$ to one hypermultiplet field $\psi$, with charge $\gamma$ and complex mass $M\zeta$, where $M$ is a real positive parameter.

The $\psi$ particle number is conserved in this augmented theory, and we focus on the sector consisting of one-$\psi$-particle states.  We consider the limit $M\rightarrow \infty,$ with $\zeta$ fixed.  In this regime, the $\psi$ particles become very massive, with central charge along the ray determined by $\zeta$. Therefore there is a correspondence
\begin{center}
\mbox{one-$\psi$-particle rest states of $T[\gamma, \infty\zeta]$} \ $\leftrightarrow$ \ \mbox{states of $T$ with $L(\gamma,\zeta)$ inserted}.
\end{center}
In particular, this gives an alternative way of viewing the framed BPS states in the theory $T$:  they are ordinary one-$\psi$-particle BPS states of $T[\gamma, \infty\zeta]$, at rest.  The term at rest means we diagonalize the translation generators, and also choose a particular component of the multiplet generated by the broken supersymmetry generators:  thus on both sides of the above, the states are only in representations of the $SO(3)$ group of rotations, not a super-extension thereof.

Finally, note that as $T$ is a free abelian theory and the augmented theory is arbitrarily weakly coupled, the Hilbert space is the perturbative Fock space and we can describe  the one-$\psi$-particle part explicitly.  There is a  single BPS state, carrying charge $\gamma$.  Thus our correspondence says that
\begin{equation}
\dim \HH_{L(\gamma,\zeta),u,\gamma'}^{\BPS} = \delta_{\gamma, \gamma'}. \label{defectabelianH}
\end{equation}
This relationship between framed BPS states in the presence of a defect and ordinary one-particle states of an augmented theory foreshadows our construction of framed BPS states in interacting $\N=2$ theories of quiver type in \S \ref{framing} below.

\subsection{Defect Renormalization Group Flow}
\label{rgdecompose}

Our aim in this section is to understand  how the renormalization group acts on BPS line defects. Thus we return to the situation of a general UV complete $\N=2$ theory, at a point $u$ on its Coulomb branch.  The IR physics is then described by an abelian gauge theory and possesses dyonic BPS line defects discussed in \S \ref{abelian}.

Fix a simple line defect $L(\alpha, \zeta)$ in the UV theory.  Flowing to the IR, we obtain a line defect of the IR theory. However, importantly, there is no reason why this IR line defect should be simple:  typically it will decompose as a nontrivial sum of simple line defects \cite{Gaiotto:2009hg}.  The reason for this is that the IR theory has fewer local operators than the UV, so it is easier for a line defect to be decomposable in the IR.  In other words, the theory with the line defect inserted may break up into superselection sectors in the IR, which in  the UV theory are coupled by massive fields.

A concrete example is provided by the $\zeta$-supersymmetric Wilson line defect attached to a representation $R$ of the gauge group $G$ in pure $\N=2$ gauge theory.  In the IR, $G$ is broken to $U(1)^r$, and the representation $R$ decomposes as a direct sum of $\dim(R)$ $1$-dimensional representations of $U(1)^r$. Correspondingly, the single UV Wilson line defect $L(\alpha, \zeta)$ decomposes (in the classical limit) into $\dim (R)$ IR Wilson line defects.\footnote{Interestingly, quantum mechanically, this result is modified:  even in the weakly coupled region of the Coulomb branch, the decomposition of a UV Wilson line defect can include IR Wilson-'t Hooft line defects as well as the expected IR Wilson defects \cite{Gaiotto:2010be}. See also \S \ref{wilsonlineex} below.} The operators which couple these IR line defects into a single UV line defect are the $W$ bosons, which are integrated out in flowing to the IR.

While $L(\alpha, \zeta)$ need not be simple in the IR, it can be expanded in terms of simple defects in the IR theory.  We claim that this expansion can be written in terms of the framed protected spin characters as
\begin{equation}
 L(\alpha, \zeta) \rightsquigarrow \sum_{\gamma \in \Gamma} \fro(\alpha, \gamma, \zeta, u) L(\gamma,\zeta). \label{eq:defect-expansion}
\end{equation}
This decomposition follows from matching dimensions of BPS Hilbert spaces in each charge sector.

To prove this statement, consider first the right-hand-side of the above.  We make use of \eqref{defectabelianH} which states that the framed BPS Hilbert space of the IR abelian theory in the presence of the defect $L(\gamma,\zeta)$ consists of a single state of charge $\gamma$.  Therefore, in the charge sector $\gamma,$ the framed BPS Hilbert space of the composite defect specified in \eqref{eq:defect-expansion} has dimension given by the coefficient of the defect  $L(\gamma,\zeta),$ namely $\fro(\alpha, \gamma, \zeta, u)$.

Similarly we can also consider left-hand-side of the claimed decomposition.  According to the absence of exotics assumption of \S \ref{framedhilbert} the quantity $\fro(\alpha, \gamma, \zeta, u)$ is the dimension of the BPS Hilbert space of the defect  $ L(\alpha, \zeta)$ in charge sector $\gamma$.  This dimension is defined in the UV.  However $\fro(\alpha, \gamma, \zeta, u)$ is also an index, and is invariant under RG flow.  Thus establishing \eqref{eq:defect-expansion}.

As is clear from the previous argument, we may also express the effect of the RG flow at the level of Hilbert spaces as
\begin{equation}
 \HH^\BPS_{L(\alpha, \zeta)} \rightsquigarrow \bigoplus_{\gamma \in \Gamma} \HH^\BPS_{L(\alpha,\zeta),u,\gamma } \otimes \HH^{\BPS}_{L(\gamma,\zeta)}.   \label{eq:hilbert-decomp}
\end{equation}
This expression points to a refined version of the decomposition \eqref{eq:defect-expansion} which keeps track of the spin content.  Indeed, the IR operators $L(\gamma,\zeta)$ are singlets under the group $SO(3)$ of spatial rotations.  However there is no reason why the framed BPS states of the UV operators $L(\alpha, \zeta)$ need to be singlets of $SO(3)$, and in general they are not.  Moreover, the $SO(3)$ content is also invariant under the RG flow, like the dimension.  Thus we expect that \eqref{eq:hilbert-decomp} continues to hold even if we regard  both sides as $SO(3)$ representations.  We can then uplift to a spin-sensitive version of the decomposition \eqref{eq:defect-expansion} by making use of the full data of the framed protected spin characters:
\begin{equation}
 L(\alpha, \zeta) \rightsquigarrow \sum_{\gamma \in \Gamma} \fro(\alpha, \gamma, \zeta, u, y) L(\gamma,\zeta).   \label{eq:ref-defect-expansion}
\end{equation}

In \eqref{eq:defect-expansion}, the equivalence relation $\rightsquigarrow $ specified by RG flow implies that the given defects are are equal in any correlation function of $\zeta$ supersymmetric operators of the IR abelian theory.  Similarly, in \eqref{eq:ref-defect-expansion} the defects are equal in any correlation function of $\zeta$ supersymmetric operators of the IR abelian theory with an additional explicit insertion of $y^{2J_{3}}(-y)^{2I_{3}}.$

\subsubsection{IR Labels for Line Defects}
\label{troplicallabels}

In this section, we introduce one of the fundamental concepts of the defect renormalization group flow, namely an IR labeling scheme for UV defects.

As we have just reviewed, a UV line defect $L(\alpha, \zeta)$ generally decomposes at low energy into a sum of IR line defects $L(\gamma,\zeta)$.  There is however a sense in which one of the terms in this decomposition is dominant, namely the $L(\gamma, \zeta)$ for which $\Re(Z_\gamma / \zeta)$ is smallest.  One way to understand this dominance is to note that according to the BPS bound \eqref{eq:bps-bound}, framed BPS states obey
\begin{equation}
E = \re(Z / \zeta). \label{Elabels}
\end{equation}
Thus the IR line defect with smallest $\re(Z / \zeta)$ corresponds to the framed BPS state with minimum energy, which may be viewed as the global ground state of the original UV  defect $L(\alpha, \zeta)$.

We let $\RG(\alpha,\zeta,u)$ denote the IR charge $\gamma$ of the ground state of $L(\alpha, \zeta)$.  This defines a map
\begin{equation}
\RG(\cdot, \zeta, u): \{\mathrm{UV \ Line \ Defects}\}\rightarrow \{\mathrm{IR \ Line \ Defects}\}=\Gamma, \label{RGmapbody}
\end{equation}
which assigns to each UV line defect, an IR label determined by the simple dyonic defect associated to its ground state.

For a given UV defect $L(\alpha, \zeta),$ we sometimes refer to the image $\RG(\alpha,\zeta,u)$ as the \emph{core charge} of the defect.  This terminology anticipates a physical picture developed in \S \ref{framing} where an IR observer views the defect as an infinitely massive dyonic core charge, corresponding to the dominant term in the expansion \eqref{eq:hilbert-decomp}, and the remaining terms  in \eqref{eq:hilbert-decomp} are viewed as excitations.

Now let us assume that the point $(\zeta,u)$ is generic.  It was observed in \cite{Gaiotto:2010be} that in many $\N=2$ theories the $\RG(\cdot, \zeta, u)$ map has the following properties.
\begin{itemize}
\item For a given UV defect $L(\alpha, \zeta),$ the ground state is unique and carries no spin, i.e. the corresponding coefficient $\fro(\alpha, \gamma, \zeta, u, y) $ equals one.

\item The map $\RG(\cdot,\zeta,u)$ is a bijection between the set of UV line defect labels $\alpha$ and the set of IR line defect labels $\gamma.$
\end{itemize}
The bijectivity of the $\RG$ map is particularly striking.  It implies that a UV line defect $L(\alpha, \zeta)$ can be reconstructed uniquely from its IR ground state $\RG(\alpha, \zeta,u).$  More physically, this means that at least for supersymmetric defects the phenomenon of screening does not occur: any two UV defects can be distinguished by low-energy observations.  Thus, for these defects, the renormalization group flow, which is typically irreversible, is in fact reversible.

In more practical terms, the bijectivity of the $\RG$ map has several useful consequences.  First, it implies the conjecture we mentioned in \S \ref{simpledefects}, that UV line defects come in families parameterized by $\zeta\in \mathbb{C}^{\times}$ (since we have already noted in \S \ref{abelian} that IR ones do).  Second, it implies that one can reconstruct a UV line defect $L(\alpha)$ from the collection of IR line operators into which it decomposes, or equivalently from its framed BPS spectrum.  Third, it is a fundamental ingredient in any approach to computing the framed BPS states of UV line defects using IR physics.

The above discussion is accurate provided that we remain at a fixed generic point $(\zeta,u)$.  However if the parameters $(\zeta,u)$ are varied we encounter an important subtlety in the $\RG$ map between UV and IR defect labels.  The IR label of a given UV defect $L(\alpha, \zeta)$ may jump discontinuously at those loci in parameter space where the ground state of the defect becomes degenerate.  As the energy of each ground state is given by the saturated BPS bound \eqref{Elabels}, such an exchange of dominance would occur when $\re(Z_{\gamma_1} / \zeta) = \re(Z_{\gamma_2} / \zeta)$, or letting $\gamma = \gamma_1 - \gamma_2$, $\re(Z_\gamma / \zeta) = 0$.  Thus, define an \emph{anti-wall} to be a locus in the $(u, \zeta)$ parameter space where some $\gamma \in \Gamma$ has $\re(Z_\gamma / \zeta) = 0$.  The map $\RG(\cdot, \zeta, u)$ is piecewise constant away from anti-walls, but can jump at anti-walls.

We should emphasize that this is \emph{not} the framed wall-crossing phenomenon: anti-walls are where some $\re(Z_\gamma / \zeta) = 0$; in contrast, framed wall-crossing occurs at the walls where  some $\im(Z_\gamma / \zeta) = 0$ for some $\gamma$.  At the latter type of  walls some framed BPS states can appear  or disappear, but these are never the ones of lowest energy, so the ground state does not change.\footnote{In the context of  the asymptotic expansion of line defect vevs, the anti-walls are anti-Stokes lines, while walls are Stokes lines; the two play very different roles in the theory.}

If we want to understand UV line operators using IR computations, it would be useful to understand how this jumping of labels occurs.  A partial result in this direction was obtained in \cite{Gaiotto:2010be} in theories of class $S[A_1]$.  Specifically, consider the restricted set of anti-walls where the charge $\gamma$ satisfying $\re(Z_\gamma / \zeta) = 0$ has $\Omega(\gamma)=1$.  Let $\RG(\alpha)_{\pm}$ denote the value of the $\RG$ map on either side of the anti-wall, with $+$ sign indicating the region where $\re(Z_\gamma / \zeta)>0$.  Then the discontinuity takes the form of a universal piecewise linear transformation on the charge lattice $\Gamma$:
\begin{equation}
\RG(\alpha)_{-} = \RG(\alpha)_{+} + \mathrm{Max}\{ \langle \gamma , \RG(\alpha)_{+}\rangle,0\}\gamma. \label{tropicallabeljump}
\end{equation}
An analogous formula for the jump
at more general anti-walls or in more general theories is not known, and it would be very interesting to develop a more general understanding of this feature of
the $\RG$ map.  We provide some hints in this direction in \S \ref{puresu2trop}.

In \S\ref{quiverconstruct}-\ref{examples} below, we find evidence that the properties of $\RG$ enumerated above, namely uniqueness of the ground state, and bijectivity, are true for every $\N=2$ theory of quiver type.  More specifically, we will assume these properties and see that a consistent picture emerges, allowing us in particular to reconstruct physically well behaved candidate UV defect vevs from IR calculations.  In that context, we will also obtain an independent derivation of the discontinuity formula \eqref{tropicallabeljump}.

\subsubsection{The Case of Pure $SU(2)$ Gauge Theory}
\label{puresu2trop}

A useful explicit example of the $\RG$ map occurs in the case of Wilson line defects in pure $\N=2$ gauge theory with gauge group $G = SU(2)$.  Such defects are classified by a finite-dimensional representation $R$ of $SU(2)$.

Let $a$ denote the central charge of a purely electric state;
$a$ is a function of the Coulomb branch
parameter $u$.  In \cite{Gaiotto:2010be} it was shown that when $R$ is the fundamental representation, at  any $(u,\zeta)$ for which $\Re(a/\zeta) \neq 0$, the Wilson line defect supports three framed BPS states.\footnote{In \S \ref{wilsonlineex} below, we derive this result independently using quiver quantum mechanics.} Two of these carry purely electric $U(1)$ charge and correspond to the two weights of the representation $R$, i.e. they simply come from  decomposing $R$ under $U(1) \subset SU(2)$.  The remaining framed BPS state is a dyon, whose electric charge depends on $(u,\zeta)$:  there are various walls in the $(u,\zeta)$ parameter space where one dyonic bound state decays and simultaneously another, with a different charge, forms. Moreover this dyon is always heavier than the purely electric states. Thus the lightest framed BPS state is one of the two electrically charged ones, corresponding to the ``lowest weight'' of the representation $R$.  A similar story holds for a general representation $R$ (and presumably also for other gauge groups $G$, although it has not been investigated in detail.)

As we have remarked in the previous section, in many $\N=2$ theories --- including this one --- for generic $(u,\zeta)$ the map $\RG$ is a bijection between the sets of UV line defect labels $\alpha$ and IR charges $\gamma$. What we have just described is that, when restricted to the set of Wilson line defects, $\RG$ becomes the map which assigns to each representation $R$ a purely electric charge $\gamma$ corresponding to the lowest weight of $R$. Note however that this does not exhaust the set of purely electric charges in the IR theory: rather, it covers only one Weyl chamber, i.e. only the positive electric charges (in some basis).   The negative charges are also in the image of the $\RG$ map, but do not correspond to Wilson lines.

So far we have discussed what happens at generic $(u,\zeta)$, which is all we will use in the rest of the paper.  As an aside, though, it is worth noting an interesting phenomenon which occurs at the locus $\Re(a/\zeta) = 0$ in the weakly coupled region.  This locus lies on an anti-wall, so we would expect that the $\RG$ map is not defined exactly there.  What if we look very near this anti-wall, on one side or the other? Naively there is a further difficulty:  there is a collection of infinitely many anti-walls which accumulate here, corresponding to the infinite set of BPS dyons in the weakly coupled region \cite{Seiberg:1994rs}.  Thus any finite point will always be separated from $\Re(a/\zeta) = 0$ by some anti-walls (in fact infinitely many of them). Nevertheless, let us use the notation $(u, \zeta)_+$ formally to represent a point ``infinitesimally close'' to the anti-wall $\Re(a/\zeta) = 0$, which we reach by crossing the infinite chain of dyon anti-walls but stopping short of $\Re(a/\zeta) = 0$.  We will try to make sense of the $\RG$ map at the formal point $(u,\zeta)_+$.

One approach to defining it is to start at some generic $(u,\zeta)$ --- where we understand the $\RG$ map ---and then compose with the infinite chain of jumps \eqref{tropicallabeljump} corresponding to the infinite chain of anti-walls. Introduce electric and magnetic charges $(e,m)$ around weak coupling.  The electric charge $e$ is an arbitrary integer, while the magnetic charge $m$ is an arbitrary even integer.\footnote{In these conventions, the Dirac pairing takes the form $\langle (e, m), (e',m')\rangle=\frac{1}{2}(em'-e'm)$.}  The charges of the infinite collection of dyons are of the form $(2n, -2)$ for $n \ge 0.$

Consider a UV line defect with IR label $(e,m).$  At the anti-wall corresponding to the $n$-th dyon, the IR labels of line defects jump by \eqref{tropicallabeljump}
\begin{equation}
(e,m) \mapsto (e,m) + \mathrm{Max}\{e+nm,0\} (2n,-2).
\end{equation}

Now we may ask whether the composition of these infinitely many transformations is convergent. It turns out that it is, but it is not a bijection: rather it maps the whole charge lattice onto the union of the half-space, where $m < 0$ and $e$ is arbitrary, with the ray, where $m= 0$ and $e \leq 0$. Thus it seems to make sense to define the $\RG$ map at the point $(u, \zeta)_+$, but unlike what happens at ordinary points $(u, \zeta)$, here $\RG$ is not a bijection.

Moreover, at the point $(u, \zeta)_+$ the image of  $\RG$ agrees exactly with the usual UV labeling of line defects.  Indeed  from the UV point of view, 't Hooft-Wilson lines are labelled by a pair  $(e,m) \in \Z \times 2\Z$ modulo the simultaneous Weyl equivalence $(e,m) \sim (-e,-m)$ \cite{Kapustin:2005py}. Thus the $\RG$ map at $(u,\zeta)_+$ just maps such equivalence classes into a preferred fundamental domain.

Now what will happen if we cross the anti-wall at $\Re(a/\zeta) = 0$, passing to a point $(u,\zeta)_-$  which is again ``infinitesimally close'' but now on the other side?  We do not have the tools to study this directly, but we can again approach it indirectly, by moving through the $(u,\zeta)$ parameter space along a path which goes into the strong coupling region and back again out  to weak coupling.  If we do so, we find a simple result:  the IR labels at $(u,\zeta)_+$ and  $(u,\zeta)_-$ are related by a transformation which preserves the open half-space $(e,m)$ with $m < 0$ but acts on the ray $(e,0)$ with $e \leq0, $ as  $(e,0) \mapsto (-e,0)$.\footnote{This reflection phenomenon was also noticed by D. Gaiotto.} One might think of this as a kind of Weyl reflection:  it acts nontrivially only on the IR labels of pure Wilson defects, and on these it swaps the highest and lowest weights of the corresponding $SU(2)$ representations.

\subsubsection{Tropical Duality}
\label{tropical}

A formula very similar to \eqref{tropicallabeljump} arose in mathematical work of Fock and Goncharov \cite{MR2233852,MR2349682}, which seems to be closely related to the physics of line defects. We sketch this relationship here, but leave a more complete analysis to future work.

Fock and Goncharov studied a particular class of complex spaces called cluster varieties. A cluster variety over $\C^\times$ is a space obtained by gluing together complex tori, each isomorphic to $(\C^\times)^n$, using transition functions of a particularly simple form.  In particular, the transition functions involve only the operations $+$, $\times$ and $/$, but not the operation $-$.

As a result of these restricted transition functions, one can define various different versions of cluster varieties by replacing $\C^\times$ in the above construction by other algebraic structures.  We denote these spaces by $\fX[S]$ where $S$ is any of the possible places where the coordinates are valued.  For instance, the case of $\C^{\times}$ defined above leads to a cluster variety $\fX[\C^{\times}]$.  One may define a real version of the cluster variety, $\fX[\R^{\times}]$, by using coordinates valued in $\R^\times$.  Similarly, one may define a positive version, $\fX[\R^\times_+]$, by using coordinates valued in $\R^\times_+$.

More exotically, one may also consider the coordinates to take values in a semifield, an algebraic structure where the $-$ operation does not even exist.  Relevant examples are the \emph{tropical} semifields $\R_t,$ and $\Z_t$.  As sets these have the same elements as $\R \cup \{-\infty\}$ and $\Z \cup \{-\infty\}$ respectively, but they have addition, $\oplus,$ and multiplication, $\otimes,$ defined by
\begin{equation}
a \oplus b =\mathrm{Max}\{a,b\}, \hspace{.5in} a \otimes b =a+b.
\end{equation}
The transition functions for a cluster variety make sense for coordinates valued in $\R_{t}$ and $\Z_{t}$ if we replace ordinary addition by $\oplus,$ and ordinary multiplication by $\otimes$.

Tropical semifields arise naturally when studying the behavior geometric spaces at large values of their coordinates.  For instance, consider a function $f(a_{1}, \cdots, a_{n})$ of ordinary real or complex variables $a_{i}$ which can be written without subtraction operations.  Then, one obtains a tropical form of the function, $f_{t},$ which is a function of tropical variables $\mathfrak{a}_{i},$ by the limiting procedure
\begin{equation}
f_{t}(\mathfrak{a}_{1}, \mathfrak{a}_{2}, \cdots, \mathfrak{a}_{n})\equiv \lim_{\epsilon \rightarrow \infty} \frac{1}{\epsilon} \log \left[f \left(\exp(\epsilon \mathfrak{a}_{1}), \exp(\epsilon \mathfrak{a}_{2}), \cdots, \exp(\epsilon \mathfrak{a}_{n})\right)\right]. \label{troplim}
\end{equation}
This limit has the effect of replacing ordinary addition by $\oplus,$ and ordinary multiplication by $\otimes.$  In particular, given any cluster variety $\fX[\C^\times],$ the tropical cluster space $\fX^\vee[\R_t]$ may be thought of as parametrizing points at infinity of $\fX[\C^\times].$

Returning to the general theory, Fock and Goncharov proposed that given any cluster variety $\fX[\C^\times]$ there should exist a canonical vector space basis for the ring of global regular functions on $\fX[\C^\times]$.  Moreover this basis should be naturally parameterized by the integer tropical points of a different cluster variety $\fX^\vee$, i.e. by $\fX^\vee[\Z_t]$. A similar statement is also supposed to hold with the roles of $\fX$ and $\fX^\vee$ interchanged. Thus we have a kind of \emph{tropical duality} between $\fX$ and $\fX^\vee$.  Moreover, the relation between $\fX[\C^\times]$ and $\fX^\vee[\C^\times]$ is expected to be some version of mirror symmetry \cite{ghki,goncharovtoappear}.

What does this have to do with four-dimensional $\N=2$ supersymmetric field theories?  Suppose we have such a theory and a given simple line defect $L(\alpha)$.  We consider compactifying on a two-torus, with $L(\alpha)$ wrapped on one of the cycles, say the $A$ cycle.  This compactification can be viewed in two different ways; comparing the two will lead to the desired duality.

Suppose we first compactify on the $A$ cycle and flow to the IR, to get a three-dimensional theory.   As the original defect wrapped the $A$ cycle, it descends to a local operator $O(\alpha)$.  The resulting theory can be described as a sigma model into a hyperkahler space \cite{Seiberg:1996nz}.   This space has the same essential features as a cluster variety, and is provably a cluster variety in many cases \cite{Gaiotto:2008cd}. Indeed, it has patches with $\C^{\times}$ valued coordinates $X_{\gamma}$, and the changes of coordinate maps are given by Kontsevich-Soibelman symplectomorphisms, the simplest of which is a cluster transformation:
\begin{equation}
 X_{\gamma'} \to X_{\gamma'}(1 + X_\gamma)^{\inprod{\gamma,\gamma'}}. \label{kstrans}
\end{equation}
With this in mind we call this space $\fX[\C^\times]$. The expectation value of the supersymmetric local operator $O(\alpha)$ is represented in the sigma model by a distinguished holomorphic function $F(\alpha)$ on $\fX[\C^\times]$. Upon further compactification on the $B$ cycle we arrive finally at a two-dimensional sigma model into $\fX[\C^\times]$.

Now let us begin again in four dimensions and perform the compactification in the opposite order:  so we begin by compactifying on the $B$ cycle and flowing to the IR. We again get a three-dimensional theory, this time with a line defect $\tilde L(\alpha)$. The three-dimensional theory can again be described as a sigma model into a hyperkahler space, again a cluster variety, which we call $\fX^\vee[\C^\times]$. The spaces, $\fX$ and $\fX^\vee$ are in general distinct.  For instance, if the $A$ and $B$ cycles have different radii, then  $\fX$ and $\fX^\vee$ are not even isometric.

The line defect $\tilde L(\alpha)$ can be thought of as imposing some boundary condition on the sigma model fields, requiring that they have a singularity as we approach the defect.  Roughly speaking, such  a singularity condition amounts to specifying a particular direction to infinity in $\fX^\vee[\C^\times]$.  On the other hand, as sketched in \eqref{troplim}, the tropical cluster space $\fX^\vee[\R_t]$ parameterizes such directions to infinity.  Thus it seems reasonable to propose that the line defect $\tilde L(\alpha)$ will indeed correspond to a point of $\fX^\vee[\Z_t]$. Upon further compactification on the $A$ cycle we will again arrive at a two-dimensional sigma model, this time into $\fX^\vee[\C^\times]$.

Assuming this proposal is correct, then we see that a simple line defect $L(\alpha)$ induces on the one hand a canonical function $F(\alpha)$ on $\fX[\C^\times]$, and on the other hand a point of $\fX^\vee[\Z_t]$. Moreover, it is known that the two orders of compactification give rise to sigma models which are mirror to one another \cite{Bershadsky:1995vm, Harvey:1995tg}.  Thus $\fX[\C^\times]$ and $\fX^\vee[\C^\times]$ are mirror to one another.  This matches well with the expected picture of tropical duality described by Fock and Goncharov; moreover our field-theoretic way of realizing this picture is similar to one described in \cite{goncharovtoappear}.

Concretely, the singularity of the IR fields near a line defect $L(\alpha)$ should be determined by its IR label $\RG(\alpha)$.  Thus what we are saying is that the IR label $\RG(\alpha)$ of a given line defect $L(\alpha)$ should be globally understood as being  valued in a tropical cluster space $\fX^\vee[\Z_t]$.  (This  should be viewed as a slight elaboration of a proposal which appeared in \cite{Gaiotto:2010be}). This tropical space has charts each of which identifies the set of line defects  with the charge lattice $\Gamma$, and the formula  \eqref{tropicallabeljump} for the discontinuity of the $\RG$ map gives a change of coordinates from one chart to another.

\subsection{Line Defect OPE Algebra}
\label{OPEsec}

In this section, we review the definition of the OPE algebra of line defects and its connection to framed BPS states following \cite{Gaiotto:2010be}.  In the case of $\N=4$ gauge theory, this algebra was defined in \cite{Kapustin:2006pk}.  It has been further studied in a variety of contexts in \cite{Kapustin:2006hi, Drukker:2009tz, Gomis:2009ir, Gomis:2009xg, Gaiotto:2010be, Cecotti:2010fi, Ito:2011ea, Saulina:2011qr, Moraru:2012nu, Xie:2013lca}.

Suppose we have two $\zeta$-supersymmetric line defects $L = L(\alpha,\zeta)$ and $L' = L(\alpha',\zeta)$, located at two points $x$, $x'$ of the spatial $\R^3$. This combined configuration preserves the supersymmetry algebra $A_\zeta$ and a $U(1)$ rotational symmetry around the axis connecting $x$ to $x'$. Thanks to the supersymmetry, the correlation functions between $L$, $L'$ and other $A_\zeta$-invariant defects are independent of $x'$ \cite{Gaiotto:2010be}. In particular the limit $x' \to x$ is actually nonsingular and independent of how $x'$ approaches $x$ (since any two directions can be continuously connected while keeping $x' \neq x$). Moreover, in this limit the rotational symmetry is enhanced to $SO(3)$, so we have obtained a new $\zeta$-supersymmetric line defect, which we could call either $LL'$ or $L'L$.

The defect $LL'$ may be composite, but we can expand it in terms of simple line defects as
\begin{equation} \label{eq:ope}
L(\alpha,\zeta) L(\alpha',\zeta) = \sum_{\beta} c(\alpha,\alpha',\beta) L(\beta, \zeta).
\end{equation}
Let us assume moreover that this expansion is unique.  If so, then the coefficients $c(\alpha,\alpha',\beta)$ are non-negative integers and can be interpreted as structure constants for OPE of line defects. We emphasize that they are defined in the UV and in particular do not depend on a point $u$ of the Coulomb branch.  Moreover, since the defects depend continuously on $\zeta$, it follows that the structure constants do as well; on the other hand they are integers, and hence must actually be independent of $\zeta$.

One way to understand the uniqueness and positivity result for the structure constants $c(\alpha,\alpha',\beta)$ is to uplift the defect OPE to the level of Hilbert spaces.  The configuration with both $L(\alpha)$ and $L(\alpha')$ inserted is independent of the separation between the defects.  Thus, the BPS Hilbert space associated to the insertion of the two defects is the tensor product of the individual Hilbert spaces.   The Hilbert space version of the defect OPE then asserts that this tensor product may be expressed uniquely as a direct sum.
\begin{equation}
\HH_{L(\alpha)}^{BPS}\otimes \HH_{L(\alpha')}^{BPS}=\bigoplus_{\beta} N_{\alpha, \alpha'}^{\beta}\otimes \HH_{L(\beta)}^{BPS}. \label{vectorope}
\end{equation}
The ``coefficients," $N_{\alpha, \alpha'}^{\beta}$ appearing in the above are now themselves vector spaces.  There is no known direct definition of these coefficient spaces.  It would be interesting to formulate one, perhaps by studying the field theory in the presence of three defects.

Assuming the existence of the vector space valued OPE coefficients, we can now motivate the positivity of the structure constants $c(\alpha,\alpha',\beta).$  The spaces $N_{\alpha, \alpha'}^{\beta}$ should be representations of the $SU(2)_{R}$ symmetry group and the structure constants $c(\alpha,\alpha',\beta)$ should be extracted as an index
\begin{equation}
c(\alpha,\alpha',\beta)=\mathrm{Tr}_{N_{\alpha, \alpha'}^{\beta}}(-1)^{2I_{3}}.
\end{equation}
Then, if a no-exotics conjecture a la \S \ref{framedhilbert} holds for the $N_{\alpha, \alpha'}^{\beta}$, i.e. if these spaces are all trivial representations of the $R$-symmetry, we obtain the simplification
\begin{equation}
c(\alpha,\alpha',\beta)=\mathrm{dim}N_{\alpha, \alpha'}^{\beta},
\end{equation}
which is indeed non-negative.

As usual, the a good starting point for understanding the operator products is the case of abelian gauge theory.  In that situation, \eqref{eq:ope} collapses to
\begin{equation} \label{eq:ope-simple}
 L(\gamma, \zeta) L(\gamma', \zeta) = L(\gamma + \gamma', \zeta).
\end{equation}
In other words, the structure constants are simply given by the Kronecker delta,
\begin{equation}
c(\gamma, \gamma', \gamma'') = \delta_{\gamma + \gamma', \gamma''}.
\end{equation}
This equation merely states that for the dyonic defects of the abelian theory the charge of the source is additive, a statement usually referred to a Gauss' law.  We can mimic the algebra of these abelian defects by introducing a collection of formal variables $X_{\gamma}$ for $\gamma \in \Gamma$ and imposing the multiplication law\footnote{The notational similarity between these $X_{\gamma}$ and the cluster coordinates of \eqref{kstrans} is not an accident.  Indeed the cluster coordinates $X_{\gamma}$ are the expectation values of the dyonic defects of the abelian theory \cite{Gaiotto:2010be}.}
\begin{equation}
X_\gamma X_{\gamma'} = X_{\gamma + \gamma'}. \label{algebraxg}
\end{equation}

Using these results for abelian defects, together with the defect renormalization group flow of \S \ref{rgdecompose}, we can give a scheme for computing operator products between line defects in a general $\N=2$ theory.  To each defect $L$ we associate a generating function in  the formal variables $X_\gamma$
\begin{equation}
F(L) = \sum_\gamma \fro(L, \gamma) X_\gamma, \label{generatorcommute}
\end{equation}
which is a generating functional version of the defect decomposition of $L$ stated in \eqref{eq:defect-expansion}.

As remarked above the correlation functions of $\zeta$-supersymmetric operators in the presence of a pair of BPS defects $L$ and $L'$ are in fact independent of the separation between the defects.  Thus, the OPE can be computed in the UV by going to short distances, or in the IR abelian theory by going to long distances.  In particular, it follows that the generating functionals must obey
\begin{equation} \label{eq:FL-ope}
F(LL') = F(L) F(L').
\end{equation}
As a nice consistency check, observe that as a consequence of the framed wall-crossing phenomenon, the generating functional \eqref{generatorcommute} jumps at walls of marginal stability.  However, according the wall-crossing formula of \cite{KS1, KS2}, the transformation of the indices $\fro(L,\gamma)$ is an automorphism of the algebra \eqref{algebraxg}.   Thus, the algebra obeyed by the generating functionals is in fact an invariant of the UV field theory.

Assuming the bijectivity of the map $\RG$ described in \S \ref{troplicallabels}, we can reconstruct $LL'$ from $F_{LL'}$. Thus, if we are able to compute $F_L$ for every $L$, \eqref{eq:FL-ope} gives an algorithm for computing the operator product coefficients.  This method was used in \cite{Gaiotto:2010be} for some theories of class $S[A_1]$; we will use it \S \ref{explicitspectra} below as well.

\subsubsection{Noncommutative Defect OPEs}
\label{OPEnoncomm}

The OPE multiplication discussed in the previous section admits an important noncommutative deformation, as follows.  Fix an axis in the spatial $\R^3$, say the $x^3$-axis. Suppose that we are interested in computing correlation functions between $\zeta$-supersymmetric line defects $L$ and $L'$ as before, but we restrict all of these defects to be inserted only along this axis.  In that case the theory with the defects inserted has a $U(1)$ rotation symmetry around this axis; let $J_3$ denote the generator of this $U(1)$, and as before let $I_3$ denote one of the rotation generators in $SU(2)_R$.

Now we consider correlation functions between $\zeta$-supersymmetric line defects with an extra overall insertion of the operator $(y)^{2J_3}(-y)^{2 I_3}.$  All correlation functions will be promoted to functions of $y$, which, when specialized to $y $ equals one, reduce to the previous case with no extra insertion. Such correlation functions are still independent of the points where the defects are inserted.  Thus in particular the limit $x' \to x$ along the $x^3$-axis is nonsingular as before, but now it may depend on the direction in which $x' \to x$, since there are only two possible directions along the $x^3$-axis, and no way to interpolate from one to the other.

We thus obtain a noncommutative OPE, where the order in which we take the product corresponds to the ordering along the $x^3$-axis:
\begin{equation} \label{eq:ope-nc}
L(\alpha,\zeta) * L(\alpha',\zeta) = \sum_{\beta} c(\alpha,\alpha',\beta,y) L(\beta, \zeta)
\end{equation}
This is a noncommutative deformation of the product \eqref{eq:ope}.  The coefficients $c(\alpha,\alpha',\beta,y) $ are now valued in $\Z_{\ge 0}[y,y^{-1}]$.  They are evidently symmetric under the simultaneous exchange $\alpha \leftrightarrow \alpha'$, $y \leftrightarrow y^{-1}$.

In terms of the putative vector space valued OPE coefficients $N_{\alpha, \alpha'}^{\beta}$ appearing in \eqref{vectorope}, the $y$-dependent structure constants $c(\alpha,\alpha',\beta,y)$ should be obtained from a protected spin character
\begin{equation}
c(\alpha,\alpha',\beta,y)=\mathrm{Tr}_{N_{\alpha, \alpha'}^{\beta}}(y)^{2J_{3}}(-y)^{2I_{3}}.
\end{equation}
And again a no-exotics type conjecture motivates the fact that the structure constants are valued in $\Z_{\ge 0}[y,y^{-1}].$

In abelian theories, just as in the commutative case, the OPE is particularly simple.  It is the noncommutative analog of this simple result \eqref{eq:ope-simple}, derived in \cite{Gaiotto:2010be}
\begin{equation}
 L(\gamma, \zeta) * L(\gamma', \zeta) = y^{\inprod{\gamma,\gamma'}} L(\gamma + \gamma', \zeta),
\end{equation}
or in other words,
\begin{equation}
c(\gamma, \gamma', \gamma'', y) = y^{\inprod{\gamma,\gamma'}} \delta_{\gamma + \gamma', \gamma''}.
\end{equation}
This expression reflects the fact that, in comparing the configuration of a pair of separated defects $ L(\gamma, \zeta)$ and $ L(\gamma', \zeta),$ to the situation with the single defect $ L(\gamma + \gamma', \zeta),$ there is a quantized angular momentum carried in the electromagnetic fields and pointing along the axis separating the pair of sources.

As in the commutative case it is natural to study these OPEs by attaching a generating function to each line defect,
\begin{equation}
F_L = \sum_\gamma \fro(L, \gamma;y) X_\gamma,
\end{equation}
where now the product law is a noncommutative Heisenberg algebra
\begin{equation}
X_\gamma * X_{\gamma'} = y^{\inprod{\gamma,\gamma'}} X_{\gamma + \gamma'}. \label{heisenberg}
\end{equation}
As before, the operator product can be computed either in the UV or IR and we obtain the relation
\begin{equation} \label{eq:FL-ope-nc}
F_{LL'} = F_L * F_{L'}.
\end{equation}
Thus, if we are able to compute $F_L$ for every $L$, \eqref{eq:FL-ope-nc} gives an algorithm for computing the $y$-deformed operator product coefficients.  We study examples of this noncommutative algebra in \S \ref{explicitspectra}.

\section{Framed BPS States in Theories of Quiver Type}
\label{quiverconstruct}

In this section we introduce a technique for studying line defects in a large class of $\mathcal{N}=2$ theories: those of quiver type \cite{Douglas:1996sw, Douglas:2000ah, Douglas:2000qw, Alim:2011kw}.  Thus for the remainder of this paper we assume that the (unframed) BPS states of the theory may be described by non-relativistic quiver quantum mechanics.  Not all $\N=2$ theories admit such a simple description of their BPS states. However, the class of quiver type theories is quite broad and includes gauge theories with hypermultiplet matter, Argyres-Douglas type conformal field theories, and theories described by M5-branes on punctured Riemann surfaces  \cite{Fiol:2000pd, Fiol:2000wx, Fiol:2006jz, Cecotti:2011rv, Cecotti:2011gu, Alim:2011ae,    DelZotto:2011an, Xie:2012dw, Cecotti:2012va,  Cecotti:2012sf, Cecotti:2012gh, Saidi:2012gi, Cecotti:2012jx, Xie:2012jd, Cecotti:2012se, Cecotti:2012kv, Cecotti:2013lda, Cecotti:2013sza, Galakhov:2013oja}.

We begin in \S \ref{quivereview} by briefly recalling the quiver theory of ordinary BPS states.  Particularly important is the existence of two distinct descriptions of such states: the Higgs branch, associated to quiver representation theory, and the Coulomb branch, associated to quantization of the classical space of supersymmetric multi-centered particle configurations.

In \S \ref{framing} we describe our proposal \eqref{framedquivconj} for computing the framed BPS states of a line defect in theories of quiver type: such states are to be computed using the Coulomb branch of an extended particle system.  We utilize a recent localization formula \cite{Manschot:2011xc, Manschot:2012rx, Manschot:2013sya} to give precise meaning to our conjecture.

In \S \ref{framedquivers} we describe a related construction of an extended framed quiver.  We further state circumstances under which our conjectured multi-centered bound state formula for framed BPS degeneracies reduces to a more familiar quiver representation theory problem utilizing the framed quiver.

\subsection{Quiver Review}
\label{quivereview}

Fix an $\mathcal{N}=2$ theory and a vacuum $u$ on its Coulomb branch.  The Hilbert space of the theory supports BPS states carrying electromagnetic charges $\gamma \in \Gamma$.  For each fixed occupied charge, the BPS Hilbert space in that sector is a representation of $su(2)_{R} \times su(2)_{J}$, the algebra of $R$-symmetries and rotations.  This representation takes the form
\begin{equation}
\left[\left(\frac{1}{2},0\right)\oplus \left(0,\frac{1}{2}\right)\right] \otimes \mathcal{H}_{\gamma}.
\end{equation}
The representation in brackets above gives the center of mass degrees of freedom of a BPS particle.  The space $\mathcal{H}_{\gamma}$ comprises the internal degrees of freedom of the particle.  We count states by forming an index, the unframed protected spin character
\begin{equation}
\Omega(\gamma, y)\equiv \mathrm{Tr}_{\mathcal{H}_{\gamma}}y^{2J_{3}}(-y)^{2I_{3}}.
\end{equation}
$\Omega(\gamma, y)$ is protected by supersymmetry under small deformations of the vacuum $u$.  It is this quantity that we aim to reproduce from quiver quantum mechanics.

The starting point for a constructing a quiver description of the BPS spectrum is to take the non-relativistic limit.  In this regime, particle number of the massive BPS states is conserved and hence the description simplifies.  For a large class of $\mathcal{N}=2$ theories, the resulting non-relativistic system may be usefully organized into a simple effective field theory.  This is achieved by identifying a finite number of massive BPS particle states which form the elementary fields in the non-relativistic limit.  All such particles are assumed to contain no internal degrees of freedom and form half-hypermultiplet representations of the superalgebra.  The remaining BPS particles are then viewed as non-relativistic bound states of the elementary quanta.

In general, there is a complicated relationship between the elementary states of the non-relativistic theory and the original ultraviolet fields defining the $\mathcal{N}=2$ system.  For example, if the UV quantum field theory is a non-abelian gauge theory, a non-relativistic elementary field might be a monopole, a highly non-linear function of the original degrees of freedom.  Thus, the identification of the elementary fields in the non-relativistic limit requires physical insight into the dynamics.  Nevertheless, for a wide class of models, the effective non-relativistic theory can be written explicitly.

In all such cases, the resulting system is a supersymmetric quantum mechanics with four supercharges.  Supersymmetry constrains the form the interactions of the non-relativistic elementary fields and yields a simple relationship between the spectrum of supersymmetric bound states computed in the non-relativistic theory and the BPS states of the original ultraviolet theory.

To describe the connection, we split the set of one-particle states of the Hilbert space of the $\mathcal{N}=2$ field theory into a set of \emph{particles} and \emph{antiparticles}.  This splitting is not unique.  To specify it, what is required is a choice of half-plane inside the central charge plane.  Such a half-space may be labeled by a choice of angle $\vartheta$ defining its boundary as
\begin{equation}
\mathfrak{h}_{\vartheta}=\left\{Z\in \mathbb{C}|\vartheta+\pi>\arg(Z)>\vartheta \right \}, \hspace{.5in} \vartheta \in[0,2\pi).
\end{equation}
Given a fixed angle $\vartheta$, particles are defined to to be those states whose central charge lies in $\mathfrak{h}_{\vartheta}$, while antiparticles are defined to be those states whose central charge lies in the complement of $\mathfrak{h}_{\vartheta}$.

The spectrum of bound states computed by a non-relativistic quantum mechanics is not the complete set of one-particle BPS representations in the Hilbert space of the theory.  Instead, the spectrum determined by the quantum mechanics is merely the set of particles. The fact that the quantum mechanics computes only the set of particles entails no loss of information due to CPT invariance of the UV field theory.  The antiparticles consist precisely of states with equal degeneracies and spin but opposite electromagnetic charges from the particles.

The structure of the non-relativistic quantum mechanics is conveniently encoded in terms of a directed graph $Q$ known as a quiver.  The quiver has the following features.
\begin{itemize}
\item Nodes:  For each elementary non-relativistic particle there is a node.  Each node is equipped with a charge $\gamma_{i}\in \Gamma$.  The minimum number of such nodes is equal to the rank of the lattice $\Gamma$.
\item Arrows: Between any two nodes $i$ and $j$ there are a number of arrows.  The net number of arrows is given by the symplectic product $\langle \gamma_{i}, \gamma_{j} \rangle$. Such arrows become fields in the non-relativistic quantum mechanics and model the forces between the elementary BPS states.
\item Superpotential:  If the arrows of the quiver may be concatenated to form an oriented cycle, the data must be supplemented by a potential $\mathcal{W}$ which is a formal function of such cycles.
\item Central Charges: Each node is equipped with its central charge $Z_{i}=Z(\gamma_{i})$.  These are constrained to satisfy $Z_{i} \in \mathfrak{h}_{\vartheta}$.
\end{itemize}

Given this data, the problem of determining all BPS particles of the theory maps to the problem of determining the supersymmetric ground states in the quantum mechanics encoded by the quiver.  There are two distinct regimes where this problem may be studied, which we describe in detail below.

\subsubsection{Higgs Branch}
\label{Higgsreview}

The ``Higgs branch'' of the quiver quantum mechanics refers to the effective description of the bifundamental (arrow) fields obtained when the vector multiplet fields, describing the positions of the particles, are integrated out.

We can phrase the calculation of ground states of this system in the language of quiver representation theory.   Fix a charge $\gamma \in \Gamma$.  We wish to know whether this charge is occupied by BPS particles, and if so, to determine their degeneracy and spin content.  To begin, we express $\gamma$ in terms of the charges of the elementary BPS states:
\begin{equation}
\gamma=\sum_{i}n_{i}\gamma_{i}, \hspace{.5in}n_{i}\in \mathbb{Z}_{\geq 0}.
\end{equation}
The vector $\vec{n}=(n_{1}, n_{2}, \cdots)$ of non-negative integers is referred to as the dimension vector of $\gamma$.

Next, we construct the moduli space of stable quiver representations $\mathcal{M}_{\vec{n}}$.  In detail this moduli space is constructed by the following procedure  \cite{Douglas:1996sw, Douglas:2000ah, Douglas:2000qw}.
\begin{itemize}
\item To the $i$th node of the quiver we assign a complex vector space $V_{i}$ with complex dimension $n_{i}$ .   For each arrow from node $i$ to node $j$ we assign a complex linear map $\Phi: V_{i}\rightarrow V_{j}$.  The set of vector spaces and linear maps is known as a representation of the quiver $Q$ with dimension vector $\vec{n}$.
\item Constrain the maps $\Phi$ associated to arrows by requiring that the superpotential is extremized
\begin{equation}
\frac{\partial{\mathcal{W}}}{\partial{\Phi}}=0, \hspace{.4in} \forall \Phi.
\end{equation}
Representations satisfying this condition are said to be flat.
\item Impose stability.  Define the central charge of a representation $R$ by linearity
\begin{equation}
Z_{R}=\sum_{i}n_{i}Z_{i}.
\end{equation}
Given any representation $R$, a subrepresentation $S\subset R$ is a representation of $Q$ with vector spaces $v_{i}$ and morphisms $\phi$ such that $v_{i}$ is a vector subspace of $V_{i}$ and $\phi$ the restriction of $v_{i}$ of the associated linear map $\Phi$.  Every representation $R$ has two trivial subrepresentations given by the zero representation which assigns a zero-dimensional vector space to each node, and $R$ itself.  $R$ is called stable if for all nontrivial subrepresentations
\begin{equation}
\arg \left(Z_{R}\right)>\arg \left(Z_{S}\right).
\end{equation}
\item The set of stable flat representations is acted upon naturally by the group $\prod_{i}Gl(V_{i},\mathbb{C})$.  If $\Phi$ is a map from $V_{i}$ to $V_{j}$ the action is
\begin{equation}
\Phi \rightarrow G_{j}^{-1}\Phi G_{i}, \hspace{.5in} G_{i}\in Gl(V_{i},\mathbb{C}).
\end{equation}
We define the moduli space $\mathcal{M}_{\vec{n}}$ as the quotient of the set of stable flat representations by the the above group action.
\end{itemize}

The moduli space $\mathcal{M}_{\vec{n}}$ is the space of classical supersymmetric Higgs branch ground states of the non-relativistic quantum mechanics involving $n_{i}$ particles of type $i$.  To determine the BPS spectrum of the theory in the charge sector $\gamma$ we must now quantize the space $\mathcal{M}_{\vec{n}}$.  Mathematically this means that we must pass to cohomology.  The moduli space $\mathcal{M}_{\vec{n}}$ is a K\"{a}hler manifold of complex dimension $d$, and hence its cohomology admits a Hodge decomposition with Hodge numbers $h^{p,q}(\mathcal{M}_{\vec{n}})$. The Cartan generators $I_{3}$ and $J_{3}$ of $su(2)_{R}\times su(2)_{J}$ are realized on the cohomology as
\begin{equation}
2J_{3}=p+q-d, \hspace{.5in}2I_{3}=p-q.
\end{equation}
From this we conclude that the desired index of BPS states $\Omega(y,\gamma)$ can be extracted from a specialization of the Hodge polynomial
\begin{equation}
\Omega_{\mathrm{Higgs}}(\gamma, y)=\sum_{p,q=0}^{d}h^{p,q}\left(\mathcal{M}_{\vec{n}}\right)\left(-1\right)^{p-q}y^{2p-d}. \label{indexform}
\end{equation}
Equation \eqref{indexform} states a precise relationship between the BPS indices of the four dimensional quantum field theory and the mathematical invariants of quiver moduli spaces.\footnote{For most cases of interest in this paper the moduli spaces are compact and smooth and the discussion above is accurate.  In complete generality however, the moduli spaces are non-compact and possess singularities.  In that case the extraction of the invariants from the moduli space is more subtle and requires the machinery of \cite{KS1, KS2, KS3}.}  It provides the Higgs branch prediction for the protected spin characters.

\subsubsection{Coulomb Branch}
\label{Coulombreview}

The ``Coulomb branch'' of the quiver quantum mechanics refers to the effective description of the vector multiplet fields obtained when the bifundamental (arrow) fields are integrated out.  In the regime of validity of this approximation, wavefunctions extracted by quantization describe multi-centered particle configurations.  Such configurations have been studied in the context of supergravity in \cite{Denef:2000nb, Denef:2002ru, Denef:2007vg}.

We again fix a charge $\gamma \in \Gamma$ with dimension vector $\vec{n}=(n_{1}, n_{2}, \cdots)$.  The $i$th node of the quiver supports a $U(n_{i})$ vector multiplet.  There are three real scalars in this multiplet each transforming in the adjoint of $U(n_{i}).$ We view them as a three-component vector $\vec{r}_{i}.$   Additionally each $U(n_{i})$ gauge group has an associated real Fayet-Iliopoulos parameter $\chi_{i}$ extracted from the central charge as
\begin{equation}
\chi_{i}=\mathrm{Im}\left(\frac{n_{i}Z_{i}}{\sum_{i}n_{i}Z_{i}}\right).
\end{equation}
By construction, these parameters satisfy the constraint $\sum_{i}\chi_{i}=0.$

On the Coulomb branch of classical ground states each non-abelian gauge group $U(n_{i})$ is broken to $U(1)^{n_{i}}$.  The scalars are commuting and may be simultaneously diagonalized.  The eigenvalues of the scalars are denoted by $\vec{r}_{i,j}$ where the index $j$ ranges from $1$ to $n_{i}$.  The eigenvalue $\vec{r}_{i,j}$ physically describes the classical position of the $j$-th particle of type $i$.  There is a residual gauged permutation symmetry $S_{n_{i}}$ at each node which descends from the Weyl group of each gauge group.  This discrete quotient accounts for the fact that the $n_{i}$ particles of type $i$ are identical.

The values of the scalars $\vec{r}_{i,j}$ parameterize the Coulomb branch.  As derived in \cite{Denef:2000nb}, the $\vec{r}_{i,j}$ are subject to the constraints that for each node $i$,
\begin{equation}
\sum_{k\neq i}\sum_{j=1}^{n_{i}}\sum_{\ell=1}^{n_{k}} \frac{\langle \gamma_{i},\gamma_{k}\rangle}{|\vec{r}_{i,j}-\vec{r}_{k,\ell}|}=\chi_{i}. \label{denefeq}
\end{equation}
As a consequence of the constraint on the FI parameters, the sum of these equations vanishes.  We define the Coulomb branch manifold $\mathcal{C}_{\vec{n}}$ as the solutions to \eqref{denefeq} modulo the action of the permutation gauge groups.
\begin{equation}
\mathcal{C}_{\vec{n}}=\left(\left\{\vec{r}_{ij}\in \mathbb{R}^{3}|\sum_{k\neq i}\sum_{j=1}^{n_{i}}\sum_{\ell=1}^{n_{k}} \frac{\langle \gamma_{i},\gamma_{k}\rangle}{|\vec{r}_{i,j}-\vec{r}_{k,\ell}|}=\chi_{i}\right\}\right)/\prod_{i}S_{n_{i}}
\end{equation}
An $SO(3)$ group of rotations acts on $\mathcal{C}_{\vec{n}}$ via simultaneous rotations on the position variables $\vec{r}_{ij}.$

The manifold $\mathcal{C}_{\vec{n}}$ is the classical moduli space that we quantize to extract BPS states.  As the constraint equations \eqref{denefeq} depend only on the relative positions, $\mathcal{C}_{\vec{n}}$ naturally contains a copy of $\mathbb{R}^{3}.$ The quantization of this non-compact space leads to plane wave states of definite momentum describing the motion of the center of mass of the multi-particle system.

To determine the states from the remaining manifold $\mathcal{C}_{\vec{n}}$ modulo the center of mass $\mathbb{R}^{3}$, we note that the fermionic superpartners to the position fields $\vec{r}_{ij}$ define a spinor bundle $E$ over the moduli space.  The BPS states extracted from the Coulomb branch are defined to be in one-to-one correspondence with the harmonic spinors in $E$.  Such harmonic spinors arise in representations of the group $SO(3)$ which is interpreted as the angular momentum action on the associated state. See \cite{deBoer:2008zn, Manschot:2011xc} for additional details.

\subsubsection{The Higgs-Coulomb Relationship}
\label{MPSsec}

We have now described two distinct methods for extracting BPS states from a quiver quantum mechanics system: the Higgs branch and associated quiver representation theory, and the Coulomb branch and associated harmonic spinors.  For quivers without oriented loops, it is known that the two prescriptions yield equivalent predictions for BPS states \cite{Denef:2002ru}.  However, more generally in stating that an $\N=2$ theory is of quiver type, we mean, by definition, that a Higgs branch calculation in quiver quantum mechanics reproduces the correct BPS states predicted by the UV field theory.\footnote{For a rigorous derivation of this fact in the case of theories of type $S[A_{1}],$ and $SU(n)$ SYM coupled to fundamental hypermultiplet mater see \cite{Bridgeland1} and \cite{Chuang:2013wt} respectively.}  Thus in theories of quiver type we have
\begin{equation}
\Omega_{\mathrm{Higgs}}(\gamma,y)=\Omega(\gamma,y),
\end{equation}
where the right-hand-side denotes the protected spin characters of the UV field theory in question.  The connection with the Coulomb branch is subtle and is described below.

The difficulty arises from the fact that when a quiver diagram has oriented loops, the Coulomb branch equations \eqref{denefeq} typically admit \emph{scaling solutions} where groups of three or more position variables become small simultaneously.  These are non-compact regions of the moduli space $\mathcal{C}_{\vec{n}}$ and when such regions exist the Coulomb branch calculation becomes incomplete.\footnote{Indeed in a scaling region of moduli space, the assumption that the bifundamental fields are massive and may be integrated out is invalid and hence additional light degrees of freedom become relevant. }  Said differently, the Coulomb branch description of BPS states does not give a prediction for the number of quantum wavefunctions which are supported at the end of the scaling region where, semiclassically, the centers in question coincide.

One may encapsulate the ambiguity described above into a \emph{single-centered degeneracy} $\Omega_{S}(\gamma,y)$ which is the protected spin character of BPS states with charge $\gamma$ whose Coulomb branch description consists of an object with a single position variable.  More explicitly, we may consider a region of parameter space where all central charges $Z_{i}$ are nearly aligned.  In that case the Coulomb branch description of BPS states becomes parametrically accurate and all BPS particles have a molecular description, similar to a hydrogen atom, with a characteristic Bohr radius.  As the central charges are further aligned all BPS particles except for the single-centered states contributing to $\Omega_{S}(\gamma,y)$ have the property that their characteristic size diverges.

The Coulomb branch does not give any prediction for the values of the $\Omega_{S}(\gamma,y).$  However, if these single-centered invariants are specified, the Coulomb branch may be quantized uniquely to obtain an unambiguous prediction for the set of multi-centered states.  The utility of this result is that, while the complete BPS degeneracies $\Omega(\gamma,y)$ depend on the FI parameters $\chi_{i}$ and undergo wall-crossing, the single-centered degeneracies $\Omega_{S}(\gamma,y)$ do not jump across walls of marginal stability.

Recently, a recursive algorithm for extracting the multi-centered total degeneracies $\Omega(\gamma,y)$ from the single-centered degeneracies $\Omega_{S}(\gamma,y)$ was derived \cite{Manschot:2011xc, Manschot:2012rx, Manschot:2013sya}.  This result is similar to the application of the Harder-Narasimhan recursion formula \cite{MR0364254} derived in \cite{Reineke1} for quiver representation theory.

In the special case of a primitive dimension vector, i.e. when the integers $(n_{1}, n_{2}, \cdots)$ have no common divisor, we may express the result of \cite{Manschot:2013sya} as follows.  Let $\gamma\in \Gamma$ indicate the charge of interest, and let $n$ be a positive integers.  By the notation $\{\alpha_{1}, \alpha_{2}, \cdots, \alpha_{n}\}$ we mean a partition of the charge $\gamma$ into $n$ pieces $\alpha_{i},$ each of which has a non-negative integral expansion in terms of the node charges of the quiver in question.  The Coulomb branch degeneracy $\Omega_{\mathrm{Coulomb}}(\gamma,y)$ is expressed in terms of contributions from all such partitions as
\begin{equation}
\Omega_{\mathrm{Coulomb}}(\gamma,y)=\Omega_{S}(\gamma,y)+\sum_{n>1}\sum_{\{\alpha_{1}, \cdots, \alpha_{n}\}} U(\{\alpha_{i}\}, \chi, y)\prod_{k=1}^{n} \sum_{m_{i}|\alpha_{i}}\frac{y-y^{-1}}{m_{i}(y^{m_{i}}-y^{-m_{i}})}V(\alpha_{i}/m_{i},y^{m_{i}}).  \label{MPS}
\end{equation}
The functions $U$ and $V$ appearing above, depend only on the arguments indicated explicitly.  $U$ is extracted from the geometry of Coulomb moduli spaces $\mathcal{C}_{\vec{n}},$ while $V$ is determined recursively from the geometry of $\mathcal{C}_{\vec{n}},$ as well as the values of the postulated single-centered degeneracies.  For a detailed discussion see \cite{Manschot:2013sya}.

One significant feature of the above is that for any choice of single-centered degeneracies $\Omega_{S}(\gamma,y)$ the definition \eqref{MPS} satisfies the Kontsevich-Soibelman wall-crossing formula and has the property that the single-centered invariants $\Omega_{S}(\gamma,y)$ are stable across all walls of marginal stability \cite{Manschot:2010qz, Sen:2011aa, Reineke5}.

Let us now return to the Higgs branch.  The quiver representation theory problem does not suffer from any of the ambiguities plaguing the Coulomb branch.  In particular, in $\mathcal{N}=2$ theories of quiver type, the Higgs branch yields complete results for the protected spin characters $\Omega(\gamma,y).$  If we enforce the equality
\begin{equation}
\Omega(\gamma,y)=\Omega_{\mathrm{Higgs}}(\gamma,y)=\Omega_{\mathrm{Coulomb}}(\gamma,y),
\end{equation}
then via comparison with \eqref{MPS}, the Higgs branch predicts values for the single-centered invariants $\Omega_{S}(\gamma,y)$.  As these quantities are stable under wall-crossing they define a new class of quiver invariants.  The states contributing to $\Omega_{S}(\gamma,y)$  are sometimes referred to as \emph{pure Higgs} states.  Several examples of these invariants are described in \cite{Denef:2007vg, Bena:2012hf, Lee:2012sc, Lee:2012naa}.  They satisfy two basic properties:
\begin{itemize}
\item If $\gamma_{i}$ is the charge of a node of the quiver $Q$, then $\Omega_{S}(\gamma_{i},y)=1.$
\item If the quiver $Q$ has no oriented loops and $\gamma\neq \gamma_{i}$ for all nodes then $\Omega_{S}(\gamma,y)=0.$
\end{itemize}

In general there is no known method, other than direct comparison to \eqref{MPS}, for determining these single-centered invariants.  We view a formulation of these quantities in terms of intrinsic Higgs branch data as an interesting open problem for future work.  For the remainder of this paper we assume that the single-centered degeneracies of a theory of quiver type have been computed, and we explain how to use them to determine the framed BPS spectra associated to an arbitrary line defect.

\subsection{Framed BPS States from Multiparticle Quantum Mechanics}
\label{framing}

In this section we present our proposal for the framed BPS states in theories of quiver type.  The basic observation which makes our analysis possible is that the quiver quantum mechanics description of the BPS spectrum is an intrinsically abelian effective field theory.  Thus, to study the framed BPS states using quivers we make use of the abelian line operators described in \S \ref{bpsabelian}.

Fix a line defect $L(\alpha, \zeta)$ defined in the ultraviolet.   We assume that $\zeta \in \mathbb{C}^{\times}$ lies on the unit circle. As stated in \S \ref{susyalg}, in that case the superalgebra preserved by the line defect is the same as that preserved by a massive BPS particle of with central charge of phase $\zeta$.  At a point $u$ on the Coulomb branch an infrared observer describes this defect, to leading order, as an infinitely massive dyon with electromagnetic core charge $\gamma_{c} \in \Gamma$ determined by the UV-IR renormalization group flow map described in \S \ref{troplicallabels},
\begin{equation}
\gamma_{c} = {\mathbf{RG}}(\alpha, \zeta, u). \label{coredef}
\end{equation}
We thus wish to couple the BPS quiver which governs the ordinary BPS states to an infinitely massive particle of charge $\gamma_{c}$.  We attempt to include the remaining framed BPS states appearing in the IR decomposition of the UV defect as excitations bound to the core dyon.

Since we work directly in the infrared description of the $\mathcal{N}=2$ theory, coupling our theory to a an infinitely massive dyon is straightforward:  we simply couple to a dyon of finite mass $M$ and take the limit $M\rightarrow \infty$.  This may be viewed as a generalization to the case of an interacting theory, of the construction in \S \ref{bpsabelian} of framed BPS states in abelian gauge theories.

To carry out this procedure explicitly, we begin by extending our charge lattice $\Gamma$ to include an extra direction.  The extended charge lattice is therefore
\begin{equation}
\Gamma \oplus \mathbb{Z}[\gamma_{F}].
\end{equation}
We further extend the symplectic product trivially by declaring that $\gamma_{F}$ has vanishing pairing with all charges in the extended lattice.  Thus, $\gamma_{F}$ is a new flavor charge in our theory.  All ordinary BPS states, present before coupling to the probe, carry zero units of the $\gamma_{F}$ flavor charge.  However the defect itself is modeled as a new particle which carries one unit of flavor charge and hence has total charge $\gamma_{c} +\gamma_{F}$.  We extend the central charge function to the extended lattice by declaring
\begin{equation}
Z(\gamma_{c}+\gamma_{F})=M\zeta, \hspace{.5in}M\in \mathbb{R}_{>0}. \label{limitZ}
\end{equation}
The parameter $M$ thus controls the mass of the defect, and we will work in the limit $M\rightarrow \infty$.

We now state a precise proposal for the framed BPS states bound to the defect utilizing the concept of single-centered invariants described in \S \ref{MPSsec}.  Since our UV field theory is of quiver type, each charge $\gamma$ in the original charge lattice $\Gamma$ may be assigned a single-centered degeneracy $\Omega_{S}(\gamma,y)$.  We extend these single-centered invariants to the lattice $\Gamma \oplus \mathbb{Z}[\gamma_{F}]$ by
\begin{equation}
\forall  \ \gamma \in \Gamma \hspace{.5in}\Omega_{S}(\gamma+n\gamma_{F},y)=\begin{cases}\Omega_{S}(\gamma,y) & \mathrm{if}  \  n=0, \\
1 & \mathrm{if}  \ n=1 \ \mathrm {and} \ \gamma=\gamma_{c}, \\
0 & \mathrm{otherwise}.
\end{cases}
\label{extendedsinglecenter}
\end{equation}
Framed BPS states supported by the defect carry exactly one unit of flavor charge $\gamma_{F}.$ Hence, they have a total electric magnetic charge of the form
\begin{equation}
\gamma_{\mathrm{total}}=\gamma_{F}+\gamma_{c}+\gamma_{h}=\left(\gamma_{F}+\gamma_{c}\right)+\sum_{i}n_{i}\gamma_{i}, \hspace{.5in} n_{i}\in \mathbb{Z}_{\geq0}. \label{halodef}
\end{equation}
The above formula defines the \emph{halo charge,} $\gamma_{h},$ which admits a non-negative expansion in terms of the node charges $\gamma_{i}$ of the original quiver.   We conjecture that the framed BPS indices of the defect with core charge $\gamma_{c}$ are the Coulomb branch multi-centered bound states computed using \eqref{extendedsinglecenter} as the input single-centered degeneracies,
\begin{equation}
\fro(\alpha, \gamma, \zeta, y)=\Omega_{\mathrm{Coulomb}}(\gamma+\gamma_{F}, y), \label{framedquivconj}
\end{equation}
where the right-hand-side is computed using the formula \eqref{MPS} and the parameter $\zeta$ enters as the phase of the central charge in \eqref{limitZ}.

We may immediately note an important consistency check on this conjecture.  The framed BPS states of a given line defect can undergo wall-crossing as moduli are varied, and the discontinuities in the framed protected spin characters are governed by the wall-crossing formula \cite{KS1, KS2, KS3, JS, J}.   Thus we may ask:  as moduli are varied, do the proposed degeneracies defined in \eqref{framedquivconj} jump according to the wall-crossing formula?  The answer is that they do.  Indeed as remarked in \S \ref{MPSsec}, a feature of \eqref{MPS} is that for any input single-centered degeneracies, including those defined in \eqref{extendedsinglecenter}, the resulting multi-particle degeneracies vary according to the wall-crossing formula \cite{Manschot:2010qz, Sen:2011aa, Reineke5}.

Physically, equation \eqref{framedquivconj} implies a simple intuitive picture for the framed BPS states associated to a line defect $L(\alpha,\zeta).$  There is a universally stable state carrying the core charge $\gamma_{c}$.  This state is single-centered, arising from the postulated non-zero value $\Omega_{S}(\gamma_{c}+\gamma_{F},y)$ in \eqref{extendedsinglecenter}.  It describes the bare defect, and as it is single-centered, appears as a microscopic object to an infrared observer.  The remaining framed BPS states are finite mass excitations bound to this core.  They carry a halo charge $\gamma_{h}$ and may be viewed as a collection of ordinary BPS states orbiting the core as illustrated in Figure \ref{fig:ptolemyintro}.

In particular, the input single-centered degeneracies \eqref{extendedsinglecenter} imply that there are no framed BPS states where any constituent of the halo collapses forms a single-centered object with the core: such a state would yield a non-zero contribution to some single-centered degeneracy $\Omega_{S}(\gamma_{c}+\gamma_{h}+\gamma_{F},y)$ which by hypothesis vanishes unless the halo charge is trivial.  Thus, all ordinary BPS states orbit the defect at a non-zero radius which may be made parametrically large by approaching a locus of moduli space where the central charges $Z_{i}$ all align with the defect ray at phase $\zeta.$

In the remainder of this paper we will provide further evidence for our conjecture \eqref{framedquivconj} and study its consequences.

\subsubsection{Framed Quivers}
\label{framedquivers}

Our proposal \eqref{framedquivconj} for the framed BPS degeneracies of a line defect is stated intrinsically in terms of multi-particle Coulomb branch quantum mechanics.  There is a closely related Higgs branch construction utilizing quiver representation theory of an augmented quiver.  In general, the quiver representation theory problem described below does not compute the same framed degeneracies as our conjecture \eqref{framedquivconj}.   However, there is a subclass of line defects above \eqref{firstcone} and studied in detail in \S\ref{dualseeds} where the quiver representation theory does compute the correct framed degeneracies.

Let $L(\alpha,\zeta)$ denote an ultraviolet line defect with core charge $\gamma_{c}$ defined as in \eqref{coredef} via the renormalization group map.  Starting from the original quiver $Q$ we construct a new quiver as follows.
\begin{itemize}
\item Adjoin to $Q$ a new node.  The new node models the bare defect.  Arrows to and from the new node are determined by the symplectic products $\langle \gamma_{c}, \gamma_{i}\rangle$.  We denote the quiver resulting from this procedure as $Q[\gamma_{c}]$, and refer to it as the \emph{framed quiver} with framing charge $\gamma_{c}$.  Similarly the node representing the defect is referred to as the \emph{framing node}.
\item The framed quiver $Q[\gamma_{c}]$ may have new cycles formed by arrows entering or exiting the framing node.  In this situation we modify the superpotential to include new terms $\delta \mathcal{W}$.  We assume that these terms are chosen generically.  Thus, we set $\delta \mathcal{W}$ to be equal to a formal sum of all cycles involving arrows to and from the framing node.  The coefficients in the summation are to be chosen generically.
\item Extend the central charge function to the framing node as in \eqref{limitZ}.
\end{itemize}

Utilizing the framed quiver $Q[\gamma_{c}]$ we may extract a set of states which have the electric magnetic charges compatible with a core plus halo expansion as in \eqref{halodef}.  Thus, we examine representations of $Q[\gamma_{c}]$ with dimension vector of the form $(1,n_{1}, n_{2}, \cdots),$ where the first entry indicates the framing node.  We define
\begin{equation}
\fro_{\mathrm{Higgs}}(\alpha,\gamma, \zeta,y)=\Omega_{\mathrm{Higgs}}^{Q[\gamma_{c}]}(\gamma,y), \label{higgswrong}
\end{equation}
where the right-hand-side indicates the unframed BPS degeneracies for the framed quiver constructed from core charge $\gamma_{c},$ and where $\zeta$ specifies the phase of the central charge as in \eqref{limitZ}.

For general line defects $L(\alpha,\zeta),$ the degeneracies \eqref{higgswrong} do not agree with the framed BPS states predicted by our conjecture \eqref{framedquivconj}.  Moreover, in \S \ref{wilsonlineex} we analyze in detail the example of Wilson lines in $SU(2)$ super Yang-Mills theory and show that while our proposal \eqref{framedquivconj} produces states and operator product algebra expected on physical grounds, the degeneracies \eqref{higgswrong} computed by the framed quiver are incorrect.

Nevertheless, the framed quiver is still useful for many purposes.  The discrepancy between the degeneracies $\fro_{\mathrm{Higgs}}(\alpha,\gamma, \zeta,y)$ and those predicted by $\eqref{framedquivconj}$ occurs because the Higgs branch calculation of quiver representations implies a set of single-centered invariants $\Omega_{S}^{Q[\gamma_{c}]}(\gamma,y)$ which violate our assumptions \eqref{extendedsinglecenter}.  However, it is clear that the quiver representation theory of $Q[\gamma_{c}]$ may imply more non-trivial single-centered invariants but never fewer.  These additional single-centered states may themselves combine to make additional bound states which contribute to the total index, but in general it follows that the Higgs branch calculation may produce more BPS states but never fewer.  In particular, assuming our formula \eqref{framedquivconj} we have the useful implication
\begin{equation}
\fro_{\mathrm{Higgs}}(\alpha,\gamma, \zeta,y)=0\Longrightarrow \fro_{}(\alpha,\gamma, \zeta,y)=0.
\end{equation}

Another important fact about framed quivers is that there is a large class of line defects for which no spurious single-centered degeneracies occur and hence the framed quiver does compute the correct spectrum,
as follows.

\textbf{Fact:}
 Suppose that a line defect $L(\alpha,\zeta),$ has a core charge $\gamma_{c} = {\mathbf{RG}}(\alpha, \zeta, u)$ satisfying $\langle \gamma_{c},\gamma_{i}\rangle \geq 0$ for all node charges $\gamma_{i}$ of the quiver $Q$.  Then we have:
\begin{equation}
\fro_{\mathrm{Higgs}}(\alpha,\gamma, \zeta,y)=\fro_{}(\alpha,\gamma, \zeta,y). \label{firstcone}
\end{equation}

To prove the above statement we examine the Coulomb branch equations \eqref{denefeq} defining the allowed multi-centered configurations.  We fix the overall center of mass by placing the core charge at the origin.  Then, if $i$ labels the nodes of the unframed quiver, and $\vec{r}_{i,j}$ the positions of the associated centers, one constraint reads
\begin{equation}
\sum_{i}\sum_{j=1}^{n_{i}} \frac{\langle \gamma_{c},\gamma_{i}\rangle}{|\vec{r}_{i,j}|}=\chi. \label{denefcone}
\end{equation}
Note that on left-hand-side of the above, each term in the summation is non-negative by our assumptions about the core charge, while the right-hand side is a parameter that depends on the central charge evaluated at the given point in moduli space.

As the value of $\chi$ is varied, multi-centered BPS states arising from quantization of the Coulomb branch may change discontinuously.  However, the single-centered degeneracies captured by $\Omega_{S}(\gamma,y)$ are stable under such variations.  Thus to study them we may freely choose a convenient value of $\chi.$  Let us therefore choose $\chi$ to be negative.  Then, the only solution to \eqref{denefcone} is the trivial one where all $n_{i}$ vanish, and there are no  $\vec{r}_{i,j}$ to speak of.  In particular this shows that there are no spurious single-centered degeneracies hence proving \eqref{firstcone}.

Note that in fact our argument has produced a stronger result:  for any defect with core charge satisfying the hypotheses stated above \eqref{firstcone}, there is a chamber where the framed BPS spectrum consists of a single BPS state describing the isolated core charge.  We study this phenomenon and other aspects of  the defects satisfying \eqref{firstcone} in more detail in \S \ref{dualseeds}.

\section{Properties of Line Defects in Quiver Type Theories}
\label{properties}

In this section we describe the structure present in the spectrum of framed BPS states when an $\mathcal{N}=2$ theory is of quiver type.

In \S \ref{seeds} we review quiver mutation, a natural class of dualities in quiver quantum mechanics.  Consequences of quiver mutation described there are jumps in the single-centered degeneracies, and a derivation of the tropical formula \eqref{tropicallabeljump} describing the discontinuities in the map $\mathbf{RG}(\alpha, u,\zeta)$ as the modulus $u$ is varied.

In \S \ref{dualseeds} we investigate general properties of line defects whose framed BPS states are exactly captured by framed quiver representations.  As illustrated there, such line defects have universal OPEs.

\subsection{Cones, Seeds, and Quiver Mutation}
\label{seeds}

An important aspect of all theories of quiver type is that their BPS states, both framed and unframed, have a conical structure.  We extend the coefficients in the charge lattice to real numbers and consider the cone
\begin{equation}
\mathcal{C}=\left\{\sum_{i} r_{i}\gamma_{i}|r_{i}\in \mathbb{R}_{\geq0}\right\} \subset \Gamma\otimes_{\mathbb{Z}}\mathbb{R}. \label{conedef}
\end{equation}
By construction the integral points of $\mathcal{C}$ contain the charges of all occupied BPS states computed by the quiver.  We refer to $\mathcal{C}$ as the \emph{cone of particles}.  The existence of this cone is one of the fundamental properties of theories whose BPS states admit a quiver description.\footnote{Indeed certain theories, for example $\mathcal{N}=4$ SYM, are not of quiver type because their BPS particles do not lie in a cone.} In practice it is often useful to visualize $\mathcal{C}$ by projecting it onto the $Z$-plane by making use of the linear central charge map $Z:\Gamma\rightarrow\mathbb{C}$.   In this way, we obtain a cone $Z(\mathcal{C})$ in the half-space $\mathfrak{h}_{\theta}$.

It is fruitful to explore the dependence of this conical structure on the parameters of the ultraviolet $\mathcal{N}=2$ theory which have hitherto been fixed.  One such parameter is the modulus $u$ labeling the vacuum state.  Another is the angle $\vartheta$ labeling the choice of half-space $\mathfrak{h}_{\vartheta}$ defining the split into particles and antiparticles.  The cone described above depends on both parameters in a piecewise constant fashion.

An illuminating example of this dependence can be seen by fixing the modulus $u$ and rotating the angle $\vartheta.$  By construction, the projection of the cone of particles, $Z(\mathcal{C}),$ lies in the interior of $\mathfrak{h}_{\vartheta}$ and hence for small changes in the angle $\vartheta,$ $Z(\mathcal{C})$ remains in the interior of $\mathfrak{h}_{\vartheta}$.  Thus, the cone is invariant under small deformations in $\vartheta$.  However, eventually as $\vartheta$ is varied, the half-space reaches a critical angle $\vartheta_{c}$ where a ray $Z_{\gamma}$ on the boundary of $Z(\mathcal{C})$ coincides with the boundary of $\mathfrak{h}_{\vartheta_{c}}$.  At this critical angle, the cone does not exist in the open subspace $\mathfrak{h}_{\vartheta_{c}}$.  As the boundary of $\mathcal{C}$ is spanned by nodes of the quiver, the ray $Z_{\gamma}$ is occupied by some BPS particle.  As we further increase $\vartheta,$ this particle falls out of the half-space $\mathfrak{h}_{\vartheta}$ and changes its
identity to an antiparticle.  This process is illustrated in Figure \ref{fig:cones}.
\begin{figure}[here!]
  \centering
  \subfloat[$\vartheta <\vartheta_{c}$]{\label{fig:cone2}\includegraphics[width=0.45\textwidth]{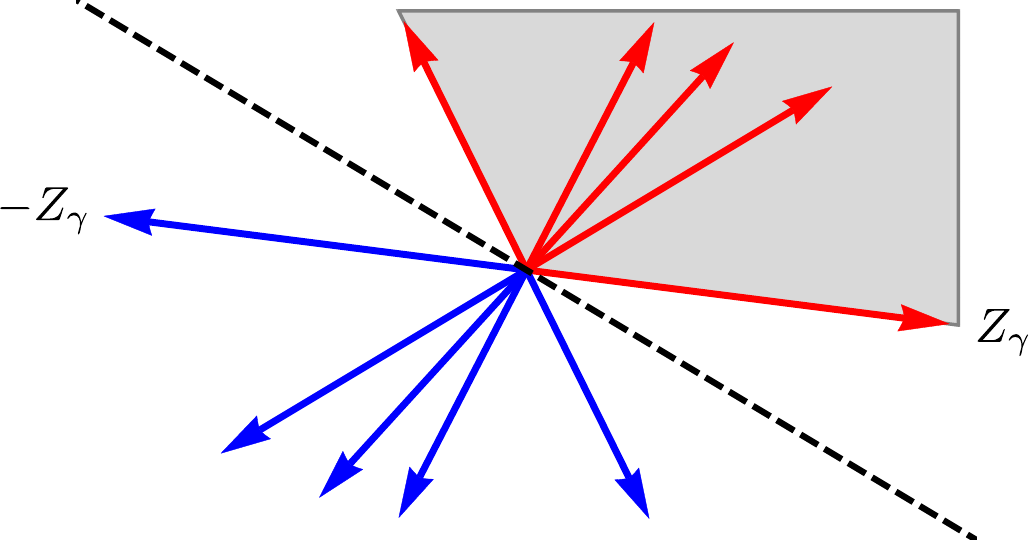}}
  \hspace{.4in}
  \subfloat[$\vartheta>\vartheta_{c}$]{\label{fig:cone1}\includegraphics[width=0.45\textwidth]{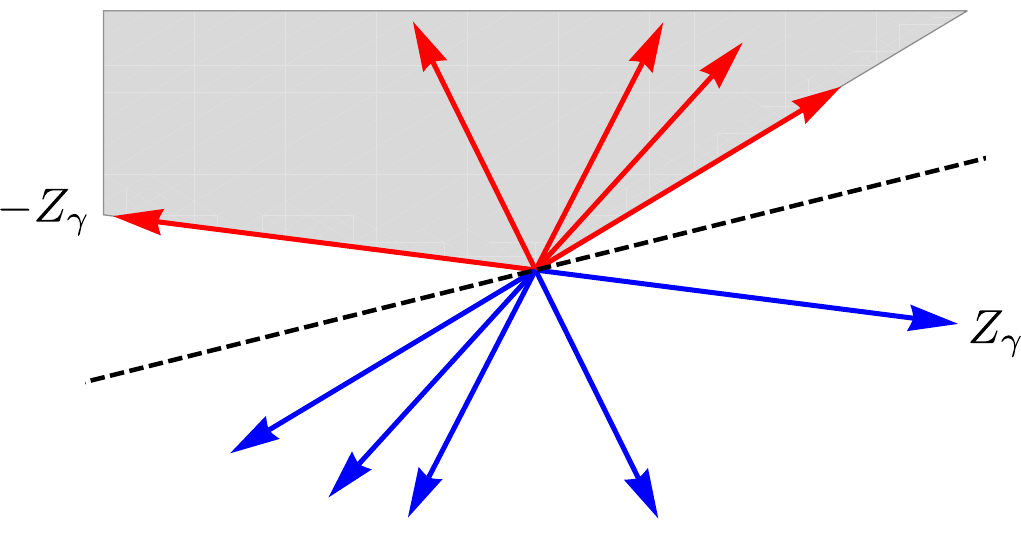}}
  \caption{A jump in the cone of particles induced by a change in the particle half-space.  Red lines indicate the central charges of particles, while blue lines denote central charges of antiparticles.  The gray shaded region indicates the projection of the cone of particles, and the boundary of the particle half-space is indicated with a dashed line.  In passing from (a) to (b) the particle with central charge $Z_{\gamma}$ changes its identity to an antiparticle and the cone jumps. At the critical value $\vartheta_{c},$ the dashed line coincides with the ray $Z_{\gamma}$ and the quiver does not exist.}
  \label{fig:cones}
\end{figure}

As the example above indicates, the cone of particles may jump upon variation of parameters.  This implies that the quiver description of the spectrum is also subject to discontinuities.  Indeed when the cone changes, the initial quiver itself, as a description of the BPS particles, must break down for the simple reason that some node of the quiver has become an antiparticle.  However, often it is possible to give a new quiver description of the new set of particles.

In general, a theory of quiver type possesses many distinct quiver descriptions of its spectrum.  At fixed values of the pair $(u,\vartheta)$ the quiver, if it exists, is unique.  The charges of the nodes of the quiver define a canonical subset of $\Gamma_{u}$ which we refer to as a \emph{seed}.   For any fixed seed $\mathfrak{s}$ the integral lattice $\mathbb{Z}\mathfrak{s}$ spanned by the seed contains the possible charges of unframed BPS states.  In the case of a framed BPS states, this is also the lattice where possible halo charges are valued.  The quotient $\Gamma_{u}/\mathbb{Z}\mathfrak{s}$ is of finite order.\footnote{A typical example where the quotient is non-trivial is $SU(2)$ SYM.  There all unframed BPS states have even electric charge, but the charge lattice $\Gamma_{u}$ contains elements with arbitrary electric charge \cite{Kapustin:2005py}.}

When $u$ is fixed and $\vartheta$ is varied, we may encounter regions of the $\vartheta$ space where, as with $\vartheta_{c}$ above, the cone ceases to exist.  Let $\Delta$ denote the union of all angles where the cone cannot be defined.  The set $[0,2\pi)-\Delta$ is separated into various disconnected components, which are cyclically ordered by increasing boundary angle $\vartheta \in [0,2\pi)-\Delta$.  Each component possesses a distinct seed, and an associated distinct quiver presentation of the spectrum.  A basic problem in a given $\N=2$ theory of quiver type is to describe all such seeds in the lattice $\Gamma_{u}$.

In simple situations, such as that illustrated in Figure \ref{fig:cones}, it is possible to describe the relationship between pairs of seeds in adjacent components of $[0,2\pi)-\Delta$.  There is a an angle $\vartheta_{c}$ at which the cone is ill-defined for the reason that a single BPS hypermultiplet state has central charge along the ray $\vartheta_{c}$, and further, there are unoccupied wedges in the central charge plane both above and below the boundary of $\mathfrak{h}_{\vartheta_{c}}$. Let $\gamma_{i}$ indicate the charge of the hypermultiplet particle exiting the half-space.  It may exit the particle half-space on the right $(R)$ or on the left $(L)$.  After this state has changed identity to an antiparticle, the new seed is related to the original seed by a transformation $\mu_{Ri}$ or $\mu_{Li}$ respectively.  These transformations are known as \emph{mutations}.  They are defined by the following formulae.
\begin{equation}
\mu_{Ri}\left(\gamma_{j}\right)= \begin{cases} -\gamma_{j} & j=i \\ \gamma_{j}-\mathrm{Min}\{\langle \gamma_{i},\gamma_{j}\rangle,0\}\gamma_{i} & j\neq i \end{cases}\hspace{.4in}\mu_{Li}\left(\gamma_{j}\right)= \begin{cases} -\gamma_{j} & j=i \\ \gamma_{j}+\mathrm{Max}\{\langle \gamma_{i},\gamma_{j}\rangle,0\}\gamma_{i} & j\neq i \end{cases}\label{mutrule}
\end{equation}
By construction, left and right mutation are inverses:
\begin{equation}
\mu_{Li} \circ \mu_{Ri}=1, \hspace{.5in}\mu_{Ri}\circ \mu_{Li}=1.
\end{equation}
After mutation, one may construct a new quiver description of the spectrum built on the mutated seed.\footnote{The superpotential of the new quiver may also be determined from that of the old quiver.  See \cite{Alim:2011kw} for details.}

Analogous remarks may be made about the behavior of the cone when $\vartheta$ is fixed and the modulus $u$ is varied.  In this case the central charges of BPS particles may exit $\mathfrak{h}_{\vartheta}$ resulting in a discontinuity.  In simple situations one obtains two quiver descriptions of the spectrum related by mutation.  However, in general, the moduli space may contain whole regions where no quiver exists.

At each point in the moduli space where a quiver exists, we may again form the collection of seeds in $\Gamma_{u}.$  We may compare the seeds at distinct moduli $u$ by parallel transporting them to a common base point $u_{0}$ in the moduli space.\footnote{This parallel transport depends on a choice of path, and hence is subject to monodromy in the local system of lattices $\Gamma_{u}$ fibered over the moduli space.}    As we explain in \S \ref{dualseeds}, the resulting collection of seeds in a fixed lattice $\Gamma_{u_{0}}$ provides significant information about OPE of line defects of the ultraviolet $\mathcal{N}=2$ field theory.

Mathematically it is known that the transformation on seeds defined by mutation determines a derived equivalence on categories of quiver representations \cite{KS1, Keller2}.  Physically this operation describes a duality in non-relativistic quantum mechanics.  From the perspective of the Coulomb branch, some results related to quiver mutation have been described in \cite{Denef:2000nb, Andriyash:2010yf}.  However, the general theory remains to be systematically developed.  We anticipate that quiver mutation continues to describe equivalences in this case as well, and assume this for the remainder of the paper.

\subsubsection{Jumps in Single-Centered Invariants}
\label{singlejumpsec}

Quiver mutations have interesting implications for the concept of single-centered invariants discussed in \S \ref{MPSsec}.  As described there, the single-centered degeneracy $\Omega_{S}(\gamma,y)$ is stable when crossing walls of marginal stability where the total degeneracy $\Omega(\gamma,y)$ may jump.  On the other hand, when the quiver is mutated, the total degeneracies are invariant while the single-centered degeneracies must jump.

One can see the necessity for such discontinuities simply from the two basic properties of single-centered degeneracies listed at the conclusion of \S \ref{MPSsec}.  In particular, the single-centered degeneracy of each node of the quiver must be one.  However, in the course of a quiver mutation the charges of nodes jump according to the formula \eqref{mutrule}.  Thus at the very least a mutation operation $\mu$ must modify the single-centered degeneracies attached to node charges $\gamma_{i}$ as
\begin{equation}
\Omega_{S}(\gamma_{i},y)\rightarrow \Omega_{S}(\mu(\gamma_{i}),y). \label{singlejump}
\end{equation}
The above leaves open the transformation properties of the non-trivial single-centered degeneracies attached to charges which are not nodes.  A general formula for these jumps would be desirable.

There is a simple physical model for the jumps \eqref{singlejump}.  Consider for simplicity a two-centered configuration at a generic point in moduli space, and let $\gamma_{1}$ and $\gamma_{2}$ be the associated charges of the centers.  Then, according to the Coulomb branch formula \eqref{denefeq} the two centers may form a bound state whose semiclassical radius $r$ takes the form
\begin{equation}
r = \frac{\langle\gamma_{1}, \gamma_{2}\rangle}{\chi}.
\end{equation}
In particular for small $\chi,$ the radius is parametrically large and hence an infrared observer can safely identify the associated states as multi-centered.

Now let us imagine adiabatically varying $\chi$ by moving in the moduli space.  We carry out this variation while holding fixed the complex mass of the state as specified by the total central charge, and fix the half-space $\mathfrak{h}_{\vartheta}$ to be the points with positive $\mathrm{Im}(Z).$ As $\chi$ is increased, the radius of the bound state decreases until it reaches a minimum at the locus where the central charges of the constituents has become real.  The infrared observer may then change their perspective on this state, from a composite two centered object, to a single-centered atomic object.  Similar phenomena have been described in the context of attractor flow geometries in \cite{Andriyash:2010yf}.

\subsubsection{RG Discontinuities from Quiver Mutation}
\label{framedmutation}

The conical structure, and seed mutation formulas described in \S\ref{seeds} have important implications for the behavior of the line defect renormalization group flow map $ {\mathbf{RG}}(\alpha, \zeta, u)$ as a function of the modulus $u.$   This map defines the core charge of an ultraviolet line defect which enters centrally in the construction of the framed BPS states.  As explained in \S \ref{troplicallabels}, when the modulus $u$ is varied the core charge is locally constant, but may jump at anti-walls, the loci where $\Re (Z_{\gamma_{h}}/\zeta)$ vanishes for some halo charge $\gamma_{h}$ of a framed BPS state.

In a theory of quiver type, we may give a simple universal formula for these jumps in the special case where the halo charge in question is a node of the quiver.  Indeed, all discontinuities in the quiver description of framed BPS states may be understood as a special case of the general discontinuities described in \S \ref{seeds}.  In particular, the anti-wall, where the discontinuity in the renormalization flow map occurs, must be identified with the boundary of the particle half-space $\mathfrak{h}_{\vartheta}$ where the discontinuity in a quiver description of the spectrum occurs.

From this discussion we conclude that a framed quiver produced from renormalization group flow possesses a preferred choice of particle half-space, namely that where the boundary of $\mathfrak{h}_{\vartheta}$ and the ray defined by $\zeta$ are orthogonal
\begin{equation}
\vartheta = \arg (\zeta)-\frac{\pi}{2}. \label{rghalfdef}
\end{equation}
This identification follows because the loci where the infrared label of the defect jumps are exactly those where quiver mutation occurs.  The structure of the particle half-space $\mathfrak{h}_{\arg (\zeta)-\frac{\pi}{2}}$ is indicated in Figure \ref{fig:rghalf}.
\begin{figure}[here!]
  \centering
  \includegraphics[width=0.4\textwidth]{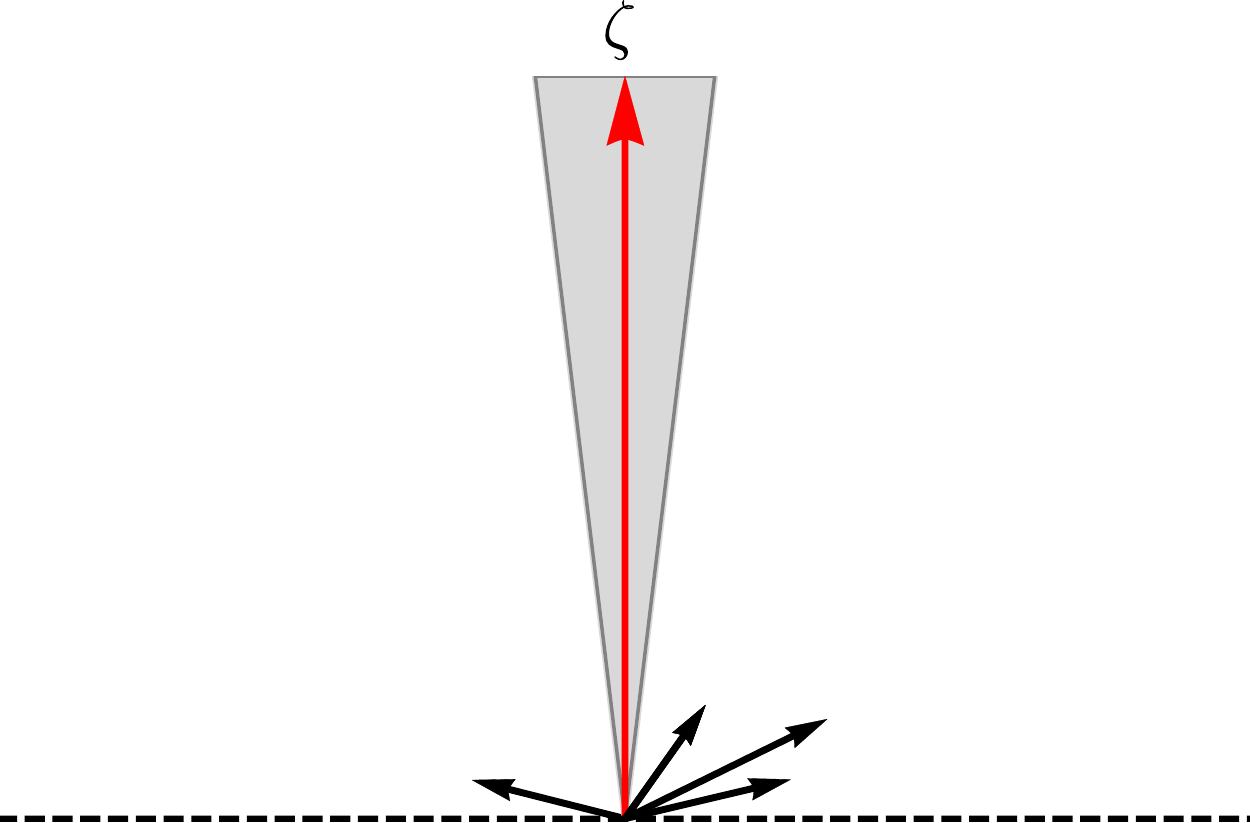}
  \caption{The particle half-space for the framed quiver implied by the RG map.  The red ray indicates the central charge of the defect $\gamma_{c}+\gamma_{F}$, while black rays are the central charges of ordinary BPS states.  The gray shaded region indicates the cone containing the framed BPS states.  In the framing limit, the length of the red ray tends to infinity and the width of the cone is infinitesimal.  The boundary of the half-space is indicated by the dashed line.  As moduli are varied and a central charge of an ordinary BPS state exits the particle half-space, the core charge changes by a mutation.}
  \label{fig:rghalf}
\end{figure}

As moduli are varied, the central charge of an ordinary node of the quiver may reach the boundary of this half-space.  When it does, the seed must be changed by mutations given by \eqref{mutrule}.  However, in the framed quiver, the framing node is a node like any other and therefore the when the the $i$th element of the seed exists the half-space on the right or left, the core charge jumps accordingly
\begin{equation}
\mu_{Ri}(\gamma_{c})=\gamma_{c}-\mathrm{Min}\{\langle \gamma_{i},\gamma_{c}\rangle,0\}\gamma_{i}, \hspace{.5in}\mu_{Li}(\gamma_{c})=\gamma_{c}+\mathrm{Max}\{\langle \gamma_{i},\gamma_{c}\rangle,0\}\gamma_{i}. \label{rgjump}
\end{equation}
The above is exactly the tropical transformation \eqref{tropicallabeljump}.  It is valid in any theory of quiver type.

We can obtain physical insight into the tropical transformation formula by placing it in the general context of the jumping of single-centered degeneracies described in \S \ref{singlejumpsec}.  The mutation rule for the discontinuity in the $\mathbf{RG}$ map is a precise version of the statement that the core charge is modified because, from the infrared point of view, it has fused with a previously bound halo.  Indeed, the difference between $\gamma_{c}$ and its mutated version $\mu(\gamma_{c})$ is always the charge of a halo of an occupied framed BPS state.  However, the mutation formula also reveals an important subtlety in the intuitive picture.  During a mutation all the seed charges change via \eqref{mutrule}, not just the core charge at the framing node.  Hence, when the ground state of the framed Hilbert space is changed we must also change our description of the excitations from those halo charges spanned by the original seed to those spanned by the mutated seed.

\subsection{Cones of Line Defects and Their OPEs}
\label{dualseeds}

In this section, we study the properties of defects defined in \S \ref{framedquivers} where our general proposal \eqref{framedquivconj} for the framed BPS states reduces to a representation theory problem on the framed quiver.  As we demonstrate below, such defects have universal OPEs.

Fix a choice of modulus $u$ and $\zeta \in \mathbb{C}^{\times}.$   According to the discussion of \S\ref{troplicallabels} the renormalization map $\mathbf{RG}(\cdot,\zeta,u),$ which extracts the core charge of a UV line defect,  is a bijection between the set of UV defects and the set of IR line defects is a large class of $\N=2$ theories.  Let us assume that it is a bijection in theories of quiver type.  Then we may identify the set of UV line defects with the lattice $\Gamma_{u}$.

From the analysis of \S \ref{framedmutation}, we know that the $\RG$ map selects a preferred particle half-space and hence a preferred seed $\mathfrak{s}.$  Let  $\mathcal{C}$ its indicate its associated cone as given by \eqref{conedef}.  We define a  dual cone, $\check{\mathcal{C}},$ by the condition that it pair positively with all elements in the seed
\begin{equation}
\check{\mathcal{C}}= \left \{\phantom{ \int} \hspace{-.14in} \check \gamma \in \Gamma_{u}\otimes_{\mathbb{Z}}\mathbb{R}| \langle \check{\gamma},\gamma\rangle\geq 0, \forall \gamma \in \mathcal{C}\right \}.
\end{equation}
Via the inverse of the renormalization map, the integral points of the dual cone $\check{\mathcal{C}}$ form a subset in the space of UV line defects.  According to the argument given at the conclusion of \S\ref{framedquivers}, for all defects with core charge $\gamma_{c}$ in $\check{\mathcal{C}},$  the framed quiver $Q[\gamma_{c}]$ computes the exact framed degeneracies.

A significant feature of the dual cone is that the UV line defects in $\check{\mathcal{C}}$ have universal OPEs.  Let $\alpha_{i}$ be the UV line defect label corresponding to a core charge $\gamma_{i}$,
\begin{equation}
\mathbf{RG}(\alpha_{i},\zeta, u)=\gamma_{i}.
\end{equation}
Then the OPE of line defects in the dual cone is simply additive in the core charge.  In other words, if $\gamma_{i} \in \check{\mathcal{C}},$ we have
\begin{equation}
\gamma_{1}+\gamma_{2}=\gamma_{3} \iff L(\alpha_{1},\zeta) *L(\alpha_{2},\zeta)=y^{\langle \gamma_{1}, \gamma_{2}\rangle}L(\alpha_{3},\zeta).  \label{coneope}
\end{equation}

To demonstrate the result \eqref{coneope}, recall from \S \ref{OPEsec} that the OPE of line defects is independent both of $\zeta$ and of the modulus $u$.  The latter enters our problem through the central charge configuration of the ordinary nodes of the quiver, while the former enters through the central charge of the framing node.  It follows that the OPE of line defects is in fact independent of any stability condition on the framed quiver and hence to compute the OPE coefficients we may assume a convenient configuration of central charges.  In particular, we may use this freedom to assume that
\begin{equation}
\arg(\zeta)<\arg(Z_{i}), \label{simplestab}
\end{equation}
for all nodes $i$ of the quiver $Q$.

Now, consider the topology of the framed quiver $Q[\gamma_{i}]$ when the core charge $\gamma_{i}$ lies in $\check{\mathcal{C}}$.  This takes the form indicated in Figure \ref{fig:framedsdual}.
\begin{figure}[here!]
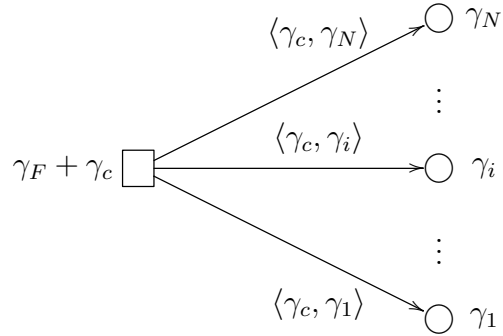

  \centering
$\xy
(10,0)+*{\bigcirc}="a"; (10,10)+*{\bigcirc}="b";  (10,-10)+*{\bigcirc}="d"; ; (13,-10)+*{\gamma_{1}}; (13,0)+*{\gamma_{i}}; (13,10)+*{\gamma_{N}}; (-20,0)*+{\color{white}{1}}*\frm{-} ="c"; (-15,0)+*{\gamma_{F}+\gamma_{c}}; (2,2)+*{\langle \gamma_{c}, \gamma_{i}\rangle};  (2,-9)+*{\langle \gamma_{c}, \gamma_{1}\rangle};  (2,9)+*{\langle \gamma_{c}, \gamma_{N}\rangle}; (10,5)+*{\vdots}; (10,-5)+*{\vdots};
\ar @{->} "c"; "a"
\ar @{->} "c"; "b"
\ar @{->} "c"; "d"
\endxy$
  \caption{Topology of a framed quiver $Q[\gamma_{c}]$ when the core charge occupies $\check{\mathcal{C}}$.  The framing node is indicated by a square, while the ordinary nodes of $Q$ are circles.  Arrows to between the nodes of $Q$ are suppressed. }
  \label{fig:framedsdual}
\end{figure}
In particular, the framing node is a source: arrows may exit this node but none enter it.

Let $R$ denote any framed representation of the quiver $Q[\gamma_{c}].$  Then $R$ admits a subrepresentation $S$ which is the identity away from the framing node and has vanishing support at the framing node.  From \eqref{simplestab}, we deduce that
\begin{equation}
\arg(S)>\arg(R).
\end{equation}
Thus $S$ is destabilizing unless it is the trivial zero representation.  Hence $R$ can be stable only if it is the single representation with vanishing halo charge.

The above yields a calculation of all stable framed BPS states of a UV line operator in $\check{\mathcal{C}}$ in the chamber \eqref{simplestab}.  This chamber coincides with the that identified in  \S\ref{framedquivers} from the Coulomb branch point of view.  The framed spectrum here is completely trivial. It consists of a single state with core charge $\gamma_{i}$ and vanishing spin. Thus, written in terms of the Heisenberg algebra of \S \ref{OPEnoncomm} the generating functional is simply
\begin{equation}
F(\alpha_{i},\zeta)=X_{\gamma_{i}}.
\end{equation}
We read off the desired OPE result \eqref{coneope} from the multiplication of these generating functionals.

While the line defects in $\check{\mathcal{C}}$ have a simple spectrum in the chamber specified by \eqref{simplestab}, in other chambers they may support a non-trivial set of states.  Nevertheless, the OPE result \eqref{coneope} continues to hold and provides interesting constraints on the spectrum.  As the states are encoded by the cohomology of quiver moduli, we may phrase this result purely in that language.  Let $\mathcal{M}_{\gamma}^{Q[\gamma_{c}]}$ denote the moduli of stable framed quiver representations with framing core charge $\gamma_{c}$ and halo charge $\gamma$.  And let $d_{\gamma}$ indicate the complex dimension of $\mathcal{M}_{\gamma}^{Q[\gamma_{c}]}.$  The generating functional of framed BPS states for $\gamma_{c}\in \check{\mathcal{C}}$ is given by
\begin{equation}
F(\gamma_{c}, u, \zeta, y)= \sum_{\gamma \in \mathcal{C}} \Omega_{\mathrm{Higgs}}^{Q[\gamma_{c}]}(\gamma,y)X_{\gamma_{c}+\gamma}
=\sum_{\gamma \in \mathcal{C}}\sum_{p,q=0}^{d_{\gamma}}h^{p,q}(\mathcal{M}_{\gamma}^{Q[\gamma_{c}]})(-1)^{p-q}y^{2p-d_{\gamma}}X_{\gamma_{c}+\gamma}
\end{equation}
where on the right-hand-side, the $\zeta$ and $u$ dependence is implicit in the dependence of the quiver moduli spaces on the central charges specifying the stability condition.  By multiplying these functionals together and comparing to \eqref{coneope} we obtain, for $\gamma_{c_{i}}\in \check{\mathcal{C}}$
\begin{equation}
\Omega_{\mathrm{Higgs}}^{Q[\gamma_{c_{1}}+\gamma_{c_{2}}]}(\gamma,y)=\sum_{\substack{\gamma_{1},\gamma_{2}\in \mathcal{C} \\ \gamma_{1}+\gamma_{2}=\gamma}}\Omega_{\mathrm{Higgs}}^{Q[\gamma_{c_{1}}]}(\gamma_{1},y)\Omega_{\mathrm{Higgs}}^{Q[\gamma_{c_{2}}]}(\gamma_{2},y)y^{\langle \gamma_{1},\gamma_{2}\rangle+\langle \gamma_{1},\gamma_{c_{2}}\rangle+\langle \gamma_{c_{1}},\gamma_{2}\rangle} \label{quiverrecursion}
\end{equation}
The above result is a non-trivial recursion relation on the cohomology of framed quiver representations.  We exhibit an explicit example of this formula in \S \ref{ADframedcomp}.

\subsubsection{Mutated Dual Cones}
\label{mutateddualseeds}

The simple features of the dual cones of line defects may be extended to a larger class of defects by utilizing the mutation technology of \S \ref{seeds}.

We utilize the map $\mathbf{RG}(,\zeta,u)$ to identify the set of line defects with the charge lattice $\Gamma$.  The lattice $\Gamma$ may in turn be identified with $\mathbb{Z}^{\mathrm{rk}(\Gamma)}$ by expressing any given element $\gamma$ in terms of the seed elements $\gamma_{i}$ of the preferred seed $\mathfrak{s}$ of \S \ref{framedmutation}.  Embedded inside $\mathbb{Z}^{\mathrm{rk}(\Gamma)}$ is a dual cone $\check{\mathcal{C}}_{\mathfrak{s}}$ of line defects obeying the simple algebra described in the previous section.

Now suppose that $\zeta$ or the moduli $u$ are varied.  Then the quiver may transform by a mutation $\mu$.  The mutated seed $\mu(\mathfrak{s})$ also has an associated dual cone of line defects $\check{\mathcal{C}}_{\mu(\mathfrak{s})}$, and line defects with core charge in $\check{\mathcal{C}}_{\mu(\mathfrak{s})}$ obey the simple OPE algebra \eqref{coneope}.  As a set, $\check{\mathcal{C}}_{\mu(\mathfrak{s})}$ exists inside the collection of UV defects of the fixed $\mathcal{N}=2$ field theory.  However to exhibit its embedding in $\mathbb{Z}^{\mathrm{rk}(\Gamma)}$, identified with the coefficient expansion of the core charge with respect to the original seed $\mathfrak{s}$, we must account for the fact that in the course of a mutation the map $\mathbf{RG}$ has undergone a discontinuity as described in \S \ref{framedmutation}.  Thus, we mutate back to the original seed and obtain $\check{\mathcal{C}}_{\mu(\mathfrak{s})}$ embedded inside $\mathbb{Z}^{\mathrm{rk}(\Gamma)}$.

For each seed related to $\mathfrak{s}$ by a sequence of mutations we obtain as above a dual cone embedded in $\mathbb{Z}^{\mathrm{rk}(\Gamma)}$.  Each such dual cone obeys the simple algebra stated in \eqref{coneope}.  To elucidate their geometry it is useful to be more explicit about their embedding.

Consider for example a seed related to $\mathfrak{s}$ by a left mutation $\mu_{Li}$ defined in equation \eqref{mutrule}.  Then, as a subset of $\mathbb{Z}^{\mathrm{rk}(\Gamma)}$, the dual cone $\check{\mathcal{C}}_{\mu_{L_{i}}(\mathfrak{s})}$ is defined to be those core charges $\gamma_{c}$ of the form
\begin{equation}
\gamma_{c}=\gamma+\langle \gamma,\gamma_{i}\rangle\gamma_{i},
\end{equation}
where the charge $\gamma$ satisfies the incidence relations defining $\check{\mathcal{C}}_{\mu_{L_{i}}(\mathfrak{s})}$ with respect to the seed $\mu_{L_{i}}(\mathfrak{s})$
\begin{equation}
\langle\gamma,\mu_{L_{i}}(\gamma_{j})\rangle \geq0 \Longrightarrow \langle \gamma, \gamma_{j} \rangle =\begin{cases}  \leq 0 & \mathrm{If} \ i=j \\
\geq \langle \gamma_{i}, \gamma\rangle \mathrm{Max}\left(\langle\gamma_{i},\gamma_{j} \rangle, 0\right) & \mathrm{If} \ i\neq j
\end{cases}
\end{equation}
From these inequalities we deduce that those core charges contained in the intersection $\check{\mathcal{C}}_{\mathfrak{s}}\bigcap \check{\mathcal{C}}_{\mu_{L_{i}}(\mathfrak{s})}$ are characterized by
\begin{equation}
\langle \gamma_{c}, \gamma_{i} \rangle =0, \hspace{.5in} \langle \gamma_{c}, \gamma_{j} \rangle \geq 0, \hspace {.2in}\forall i \neq j.
\end{equation}
This is a codimension one face of each of the cones in question.

The argument above may be repeated for any pair of seeds related by a mutation to demonstrate that the associated dual cones meet along a codimension one face.  More generally, we may examine the collection of seeds $\mathfrak{s}_{I}$ related to $\mathfrak{s}$ by an arbitrary sequence of mutations.  The cones $\check{\mathcal{C}}_{\mathfrak{s}_{I}}$ may either meet along codimension one faces as above, or, due to non-trivial topology in the space of mutations, they may coincide.  The totality of these cones and their relative geometry provides significant information about the OPE algebra of line defects.  We study explicit examples of these cones in \S \ref{dualseedsex}.

\subsubsection{Adapted Quivers, Cyclic, and Cocyclic Representations}
\label{adapted}

In \S \ref{framedmutation}, we have seen how the defect renormalization map $\mathbf{RG}$ selects a preferred particle half-space \eqref{rghalfdef}.  However, in explicit computations it is frequently useful to utilize the known behavior of the framed quiver $Q[\gamma_{c}]$ under changes in $\vartheta$ to change to a different half-space $\mathfrak{h}_{\vartheta}$.  In this section we describe the simplifications that may occur if $\mathfrak{h}_{\vartheta}$ is chosen judiciously, and relate our constructions of framed quivers and spectra to known concepts in representation theory.  As with the totality of \S\ref{dualseeds}, in the following we assume that the core charge $\gamma_{c}$ lies in a dual cone $\check{\mathcal{C}}_{\mathfrak{s}}$ so that the framed quiver construction of \S\ref{framedquivers} accurately computes the framed spectra.

We begin with the following definitions concerning framed quivers and their stability conditions.
\begin{itemize}
\item A triple $(Q[\gamma_{c}], \zeta, \mathfrak{h}_{\vartheta})$ is \emph{left adapted} if the framing ray $\zeta$ is the leftmost extremal occupied ray in the half-space $\mathfrak{h}_{\vartheta}$.
\item A triple $(Q[\gamma_{c}], \zeta, \mathfrak{h}_{\vartheta})$ is \emph{right adapted} if the framing ray $\zeta$ is the rightmost extremal occupied ray in the half-space $\mathfrak{h}_{\vartheta}$.
\end{itemize}
Examples of left and right adapted stability conditions are illustrated in Figure \ref{fig:adapt}.  For a given UV line defect $L(\alpha,\zeta)$, it may happen that the framed quiver implied by the map $ {\mathbf{RG}}(\cdot, \zeta, u)$ is left or right adapted.  However, more generally the left and right adapted framed quivers are simply useful computational devices.  Frequently, we may relate calculations carried out with adapted quivers to those carried out with the physical framed quiver implied by $ {\mathbf{RG}}(\cdot, \zeta, u)$ by a sequence of seed mutations and wall-crossings.

\begin{figure}[here!]
  \centering
  \subfloat[Left Adapted Stability Condition]{\label{fig:Ladapt}\includegraphics[width=0.45\textwidth]{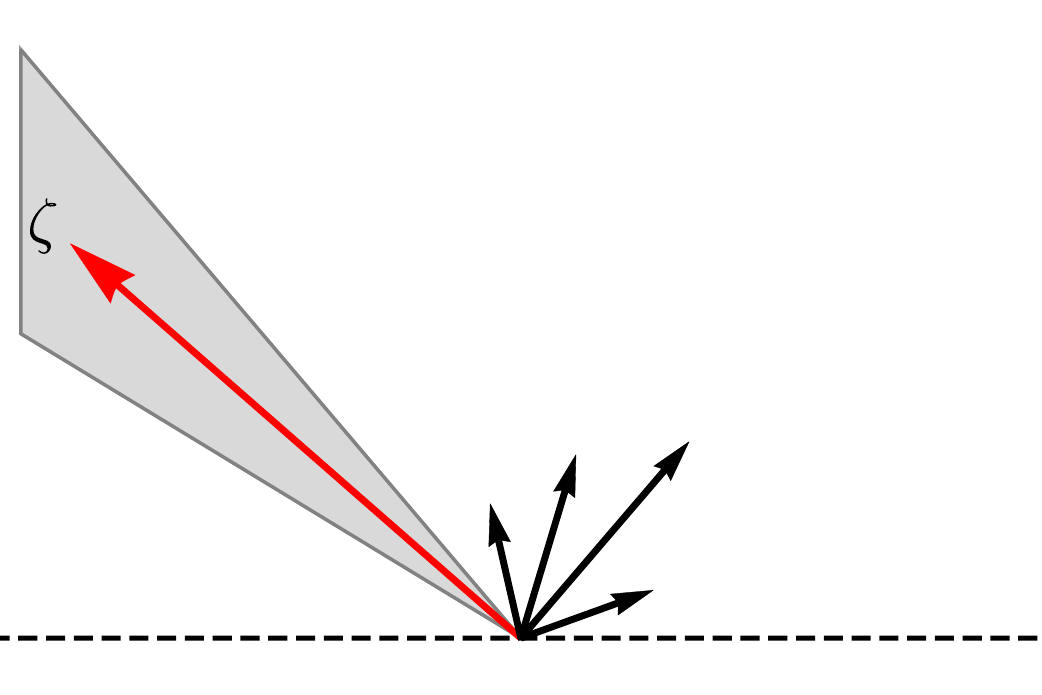}}
  \hspace{.4in}
  \subfloat[Right Adapted Stability Condition]{\label{fig:Radapt}\includegraphics[width=0.45\textwidth]{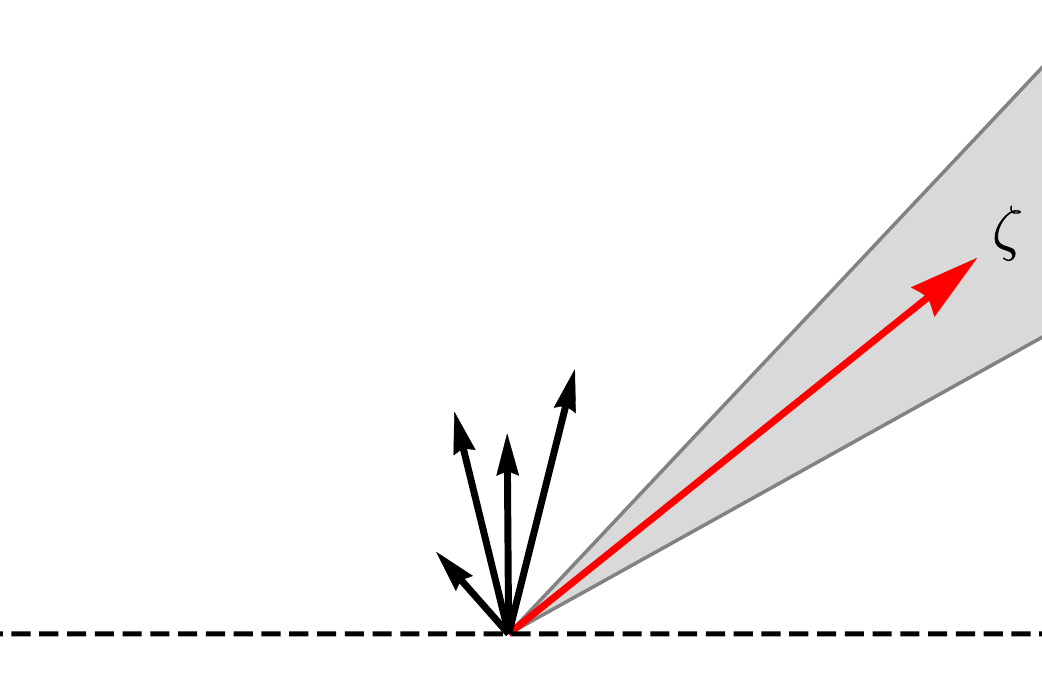}}
  \caption{Adapted stability conditions plotted in the particle half-space.  The framing ray is indicated in red and possible halo rays are indicated in black.  The gray shaded region is the infinitesimal cone containing framed BPS states.  In (a) the stability condition is left adapted.  In (b) the stability condition is right adapted.}
  \label{fig:adapt}
\end{figure}

The utility of these definitions is that if a framed quiver and stability condition are adapted, then the notion of a stable framed representation simplifies to a purely algebraic condition on quiver representations.  Indeed, consider a general framed quiver and framed representation $R$.  Let $V$ denote the one-dimensional vector space supported at the framing node.  Then $R$ has a distinguished set of subrepresentations, namely those which contain the subspace $V,$ and there are two natural classes of framed representations defined utilizing $V$.
\begin{itemize}
\item A pair $(R,V)$ is \emph{cyclic} if the only non-zero subrepresentation of $R$ containing $V$ is $R$.
\item A pair $(R,V)$ is \emph{cocyclic} if all non-zero subrepresentations of $R$ contain $V$.
\end{itemize}
Note that cyclic and cocyclic representations are algebraic concepts that make no reference to any notion of stability condition.

The importance of these distinguished classes of framed representations is that, if the framed quiver and stability condition is left adapted, then a framed representation is stable if and only if it is cyclic.  And similarly, if a framed quiver and stability condition is right adapted, then a framed representation is stable if and only if it is cocyclic.  This result is a direct consequence of the ray geometry illustrated in Figure \ref{fig:adapt} together with the definition of the framing limit stability condition \eqref{limitZ}.

Moduli spaces of cyclic and cocyclic quiver representations are particularly amenable to analysis using localization techniques.  They have appeared previously in physics in the related context of counting D-brane bound states in non-compact Calabi-Yau geometries where a heavy non-compact brane forms the framing node  \cite{MR2403807, MR2592501, Ooguri:2008yb}.

\section{Examples of Framed Spectra and OPEs}
\label{examples}

In this section we study explicit examples of framed quivers and spectra, and associated line defect OPEs.  We focus on the class of $\mathcal{N}=2$ theories whose unframed quivers have two nodes and $k>0$ arrows, the so-called Kronecker quivers $Q_{k}.$  For $k=1$ the UV theory is the Argyres-Douglas CFT defined in \cite{Argyres:1995jj, Argyres:1995xn}.  For $k=2$ the UV theory is $SU(2)$ super Yang-Mills studied in \cite{Seiberg:1994rs, Seiberg:1994aj}.  For $k>2$ this quiver is not known to occur as a complete description of the spectrum of any UV theory, however it occurs as a subsector many $\mathcal{N}=2$ theories of quiver type, including in particular  $SU(n)$ super Yang-Mills with $n>2$ \cite{Galakhov:2013oja}.

Throughout the remainder of this section we fix an initial point $u$ on the moduli space and an initial seed $\gamma_{1}, \gamma_{2}.$   Associated to this seed is a pair of variables $X_{\gamma_{1}},$ and $X_{\gamma_{2}}$ generating the Heisenberg algebra of \S  \ref{OPEnoncomm}
\begin{equation}
X_{\gamma_{1}}X_{\gamma_{2}}=y^{2k}X_{\gamma_{2}}X_{\gamma_{1}}.
\end{equation}
Using the defining relations \eqref{heisenberg}, all remaining variables $X_{\gamma}$ in the Heisenberg algebra may be written as monomials in the two generators above.  The quivers of interest are illustrated in Figure \ref{fig:kronecker}.

\begin{figure}[here!]
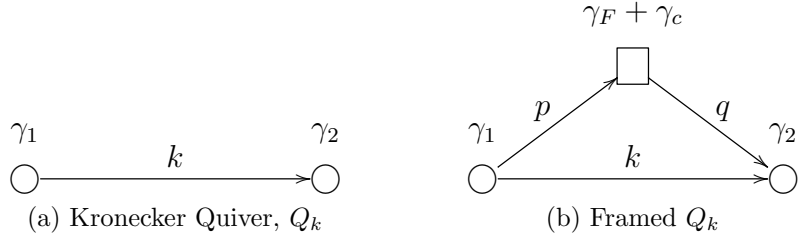

  \centering
  \subfloat[Kronecker Quiver, $Q_{k}$]{\label{fig:qk}\xy
(-10,0)+*{\bigcirc}="a"; (10,0)+*{\bigcirc}="b";  (-10,3)+*{\gamma_{1}}; (10,3)+*{\gamma_{2}}; (0,1.5)+*{k};
\ar @{->} "a"; "b"
\endxy}
  \hspace{.6in}
 \subfloat[Framed $Q_{k}$]{\label{fig:qkf}\xy
(-10,0)+*{\bigcirc}="a"; (10,0)+*{\bigcirc}="b";  (-10,3)+*{\gamma_{1}}; (10,3)+*{\gamma_{2}}; (0,15)*+{\color{white}{1}}*\frm{-} ="c"; (0,11)+*{\gamma_{F}+\gamma_{c}}; (0,1.5)+*{k};  (-6,4.5)+*{p};  (6,4.5)+*{q};
\ar @{->} "a"; "b"
\ar @{->} "a"; "c"
\ar @{->} "c"; "b"
\endxy}
  \caption{Example quivers.  In (a) the Kronecker quiver $Q_{k}$.  In (b) a framed Kronecker quiver.  The framing node is represented as a square.  The core charge is encoded by the arrows to and from the framing node indicated by $(p,q)\in \mathbb{Z}\oplus \mathbb{Z}$.}
  \label{fig:kronecker}
\end{figure}

\subsection{Seeds and Dual Cones}
\label{dualseedsex}
We first exhibit the partition of the space of line operators into cones dual to seeds as described in \S \ref{dualseeds}.

To carry out the computation it is fruitful to observe that the action of mutation on the Kronecker quiver $Q_{k}$ is again $Q_{k}.$  Thus, in this example we may view mutations as linear transformations on the charge labels of the nodes of the quiver.  Let $M_{i}$ for $i=1,2$ indicate the transformation on charge labels defined by left mutation at the source node, $i=1,$ and the sink node, $i=2,$ respectively.  Then $M_{i}$ are given by the following $SL(2,\mathbb{Z})$ transformations
\begin{equation}
M_{1}=\left(\begin{array}{cc}k & -1 \\1 & 0\end{array}\right), \hspace{.5in} M_{2}= \left(\begin{array}{cc}0 & 1 \\ -1 & 0\end{array}\right). \label{mutationmatrices}
\end{equation}
Similarly, right mutations are given by the inverse transformations.

The integer $k$ controls the eigenvalues $\lambda_{\pm}$ of $M_{1},$ and there are three qualitatively distinct cases.
\begin{itemize}
\item $k=1$.  The eigenvalues are complex roots of unity, $\lambda_{\pm}=\frac{1\pm i\sqrt{3}}{2}$. The matrix $M_{1}$ is a periodic, elliptic transformation.
\item $k=2$.  The eigenvalues are degenerate, $\lambda_{\pm}=1.$  The matrix $M_{1}$ is not of finite order and is a parabolic transformation.
\item $k>2$.  The eigenvalues are real and reciprocals, $\lambda_{\pm}=\frac{k\pm \sqrt{k^{2}-4}}{2}$.  The matrix $M_{1}$ is not of finite order and is a hyperbolic transformation.
\end{itemize}
We consider each of these cases in turn.  We shall see that they exemplify three possible phenomena.  When $k=1$ the dual cones are finite in number and fill the entire charge lattice.  When $k=2,$ there are infinitely many dual cones and they accumulate.  The complement of the set of dual cones is a single ray.  Finally, when $k>2,$ there are infinitely many cones which accumulate and whose complement is an open set.

\subsubsection{$k=1:$ Argyres-Douglas Theory}
\label{adconesanalysis}
This is the case relevant to the Argyres-Douglas CFT.  From the initial seed $\{\gamma_{1},\gamma_{2}\}$ we generate five seeds by mutation.
\begin{equation}
\xy
(-10,0)+*{\{\gamma_{1},\gamma_{2}\}}="a"; (10,0)+*{\{\gamma_{1}+\gamma_{2}, -\gamma_{1}\}}="b";  (30,0)+*{\{\gamma_{2}, -\gamma_{1}-\gamma_{2}\}}="c"; (50,0)+*{\{-\gamma_{1},-\gamma_{2}\}}="d";  (20,-5)+*{\{-\gamma_{2},\gamma_{1}\}}="e";
(0,1.5)+*{\mu_{L_{1}}}="f"; (20,1.5)+*{\mu_{L_{1}}}="g";  (40,1.5)+*{\mu_{L_{1}}}="h";  (5,-3.9)+*{\mu_{L_{2}}}="i"; (35,-3.9)+*{\mu_{L_{2}}}="j";
\ar @{->} "a"; "b"
\ar @{->} "b"; "c"
\ar @{->} "c"; "d"
\ar @{->} "a"; "e"
\ar @{->} "e"; "d"
\endxy
\end{equation}

Associated to each seed indicated above is a dual cone $\check{\mathcal{C}}$ in the space of line operators.  We aim to exhibit these cones using the identification of the UV line operators with the charge lattice provided by the initial seed.  The dual cone associated to the initial seed is
\begin{equation}
\check{\mathcal{C}}_{\{\gamma_{1}, \gamma_{2}\}}=\left\{\phantom{\int}\hspace{-.18in}p \gamma_{1}-q\gamma_{2}| p,q\geq0\right\}.
\end{equation}

To obtain the cones dual to other seeds we first find the those charges which pair positively with the seed in question, and then we mutate back to our original seed to see how this cone is embedded in the fixed charge lattice.  Thus for instance
\begin{equation}
\check{\mathcal{C}}_{\{\gamma_{1}+\gamma_{2}, -\gamma_{1}\}}=\mu_{R_{2}}\left(\left\{\phantom{\int}\hspace{-.18in}p (\gamma_{1}+\gamma_{2})+q\gamma_{1}| p,q\geq0\right\}\right)=\left\{\phantom{\int}\hspace{-.18in}p \gamma_{2}+q\gamma_{1}| p,q\geq0\right\}.
\end{equation}
Similarly, the remaining dual cones are given by
\begin{eqnarray}
\check{\mathcal{C}}_{\{\gamma_{2}, -\gamma_{1}-\gamma_{2}\}} &= &\left\{\phantom{\int}\hspace{-.18in}-p \gamma_{1}+q\gamma_{2}| p,q\geq0\right\}, \nonumber\\
\check{\mathcal{C}}_{\{-\gamma_{1}, -\gamma_{2}\}}& = &\left\{\phantom{\int}\hspace{-.18in}-p(\gamma_{1}+ \gamma_{2})-q\gamma_{1}| p,q\geq0\right\}, \\
\check{\mathcal{C}}_{\{-\gamma_{2}, \gamma_{1}\}}& = &\left\{\phantom{\int}\hspace{-.18in}-p \gamma_{2}-q(\gamma_{1}+\gamma_{2})| p,q\geq0\right\}. \nonumber
\end{eqnarray}
The geometry of these five cones is illustrated in Figure \ref{fig:adcones}.
\begin{figure}[here!]
  \centering
 \includegraphics[width=0.45\textwidth]{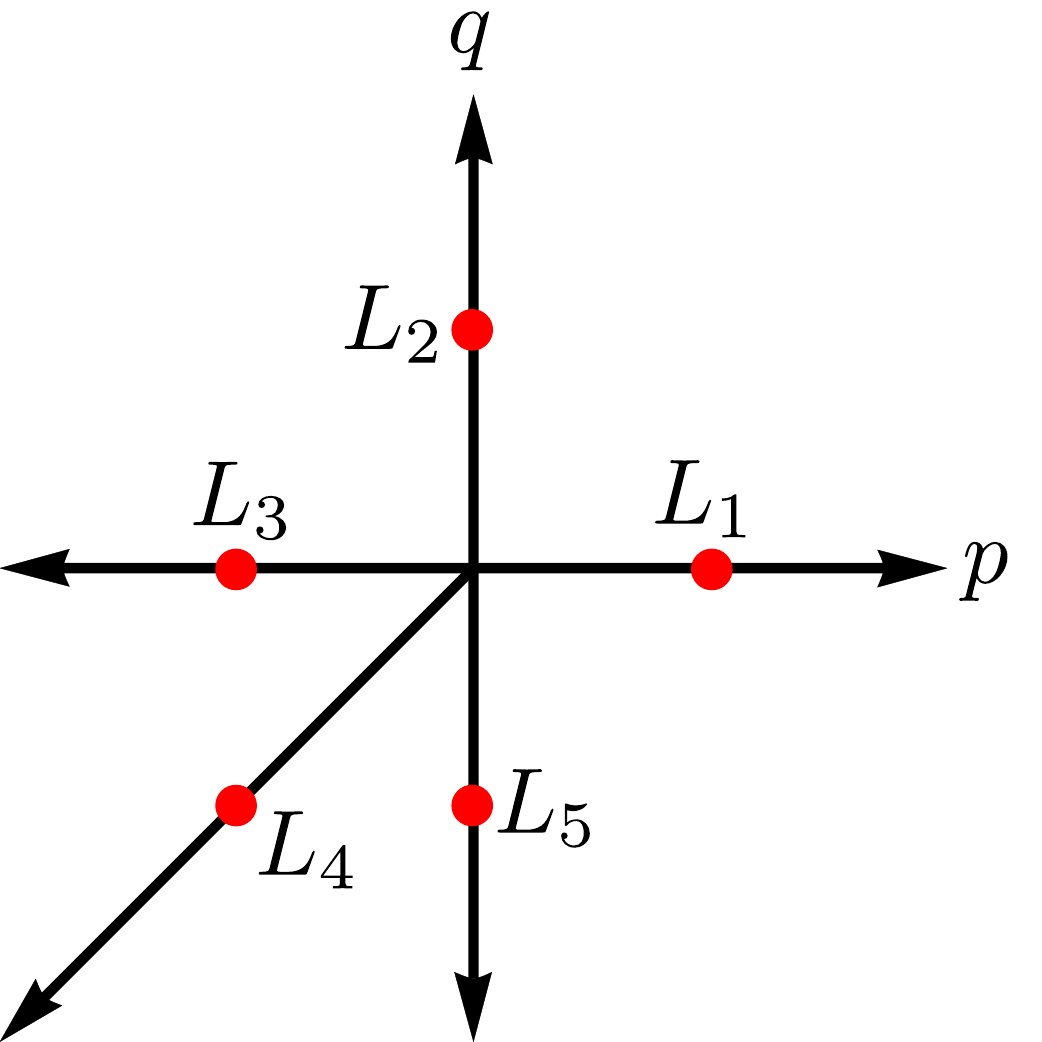}
  \caption{Geometry of dual cones for the Kronecker quiver $Q_{1}$.  The space $\Gamma\otimes_{\mathbb{Z}}\mathbb{R}$ is identified with $\mathbb{R}\oplus \mathbb{R}$ by expressing charges in terms of the seed as $\gamma=p\gamma_{1}+q\gamma_{2}.$ The boundaries of the cones are indicated by the black arrows.  The red dots denote the generators $L_{i}.$ }
  \label{fig:adcones}
\end{figure}

We observe that there are five generators of the boundary rays of the cones, and we label them $L_{i},$ with $i=1, \cdots 5$.  Using the $\mathbf{RG}$ map associated to the initial seed we may alternatively label these UV defects by their core charge as
\begin{equation}
\mathbf{RG}(L_{1})=\gamma_{1}, \hspace{.2in}\mathbf{RG}(L_{2})=\gamma_{2}, \hspace{.2in} \mathbf{RG}(L_{3})=-\gamma_{1}, \hspace{.2in}\mathbf{RG}(L_{4})=-\gamma_{1}-\gamma_{2}, \hspace{.2in}\mathbf{RG}(L_{5})=-\gamma_{2}. \label{ADgens}
\end{equation}
Together with the unit operator, the $L_{i}$ generate the OPE algebra of BPS line defects in the Argyres Douglas CFT.  In particular, as the five cones of Figure \ref{fig:adcones} completely fill the charge lattice $\Gamma,$ any line defect other than the $L_{i}$ occurs in the OPE of the $L_{i}$.  For instance according to the general arguments of \S \ref{dualseeds} we have
\begin{equation}
\mathbf{RG}(L)=p\gamma_{1}+q\gamma_{2}, \hspace{.2in} p,q\geq0 \Longrightarrow \underbrace{L_{1}*L_{1}* \cdots * L_{1}}_{p}*\underbrace{L_{2}* L_{2}*\cdots *L_{2}}_{q}=y^{pq}L.
\end{equation}
We complete our analysis of the OPE by determining the framed BPS spectra of the five generators in \S \ref{ADframedcomp}.

\subsubsection{$k=2: SU(2)$ SYM}
This is the case relevant to the $SU(2)$ SYM theory.  From the initial seed $\{\gamma_{1},\gamma_{2}\}$ we generate an infinite sequence of seeds by mutation.
\begin{equation}
\xy
(0,0)+*{\{\gamma_{1},\gamma_{2}\}}="a"; (0,7)+*{\{2\gamma_{1}+\gamma_{2}, -\gamma_{1}\}}="b";    (15,7)+*{\phantom{a}\cdots\phantom{a}}="c";  (40,7)+*{\{\gamma_{1}+n(\gamma_{1}+\gamma_{2}), \gamma_{2}-n(\gamma_{1}+\gamma_{2})\}}="d"; (65,7)+*{\phantom{a}\cdots\phantom{a}}="e"; (0,-7)+*{\{-\gamma_{2}, 2\gamma_{2}+\gamma_{1}\}}="f";    (15,-7)+*{\phantom{a}\cdots\phantom{a}}="g";  (40,-7)+*{\{\gamma_{1}-n(\gamma_{1}+\gamma_{2}), \gamma_{2}+n(\gamma_{1}+\gamma_{2})\}}="h"; (65,-7)+*{\phantom{a}\cdots\phantom{a}}="i"; (2.5,3.5)+*{\mu_{L_{1}}}="j"; (2.5,-3.5)+*{\mu_{R_{2}}}="k"; (9.5,5)+*{\mu_{L_{1}}}="l"; (9.5,-5)+*{\mu_{R_{2}}}="m" ; (22,5)+*{\mu_{L_{1}}}="n" ; (22,-5)+*{\mu_{R_{2}}}="o"; (58.5,5)+*{\mu_{L_{1}}}="p" ; (58.5,-5)+*{\mu_{R_{2}}}="q"
\ar @{->} "a"; "b"
\ar @{->} "b"; "c"
\ar @{->} "c"; "d"
\ar @{->} "d"; "e"
\ar @{->} "a"; "f"
\ar @{->} "f"; "g"
\ar @{->} "g"; "h"
\ar @{->} "h"; "i"
\endxy
\end{equation}
From these seeds we extract an infinite number of dual cones.  We embed these cones in the charge lattice $\Gamma$ using the $\mathbf{RG}$ map associated to the initial seed.  Let $n\geq0$, be a non-negative integer.  Then we may label the cones as
\begin{eqnarray}
\check{\mathcal{A}}&= &\left\{\phantom{\int}\hspace{-.18in}p \gamma_{1}+q\gamma_{2}| p,q\geq0\right\}, \nonumber\\
\check{\mathcal{B}}_{n}& = &\left\{\phantom{\int}\hspace{-.18in}p(\gamma_{1}-n(\gamma_{1}+ \gamma_{2}))-q(\gamma_{2}+n(\gamma_{1}+\gamma_{2}))| p,q\geq0\right\}, \\
\check{\mathcal{C}}_{n}& = &\left\{\phantom{\int}\hspace{-.18in}-p(\gamma_{1}+n(\gamma_{1}+ \gamma_{2}))+q(\gamma_{2}-n(\gamma_{1}+\gamma_{2}))| p,q\geq0\right\}. \nonumber
\end{eqnarray}

Notice that in this example the cones do not cover the entire charge lattice $\Gamma$.  Rather, the cones accumulate against the ray spanned by the vector $-(\gamma_{1}+\gamma_{2})$.  Physically, this is the ray spanned by purely electric Wilson lines in the ultraviolet.  The geometry of the cones is illustrated in Figure \ref{fig:su2cones}.
\begin{figure}[here!]
  \centering
 \includegraphics[width=0.45\textwidth]{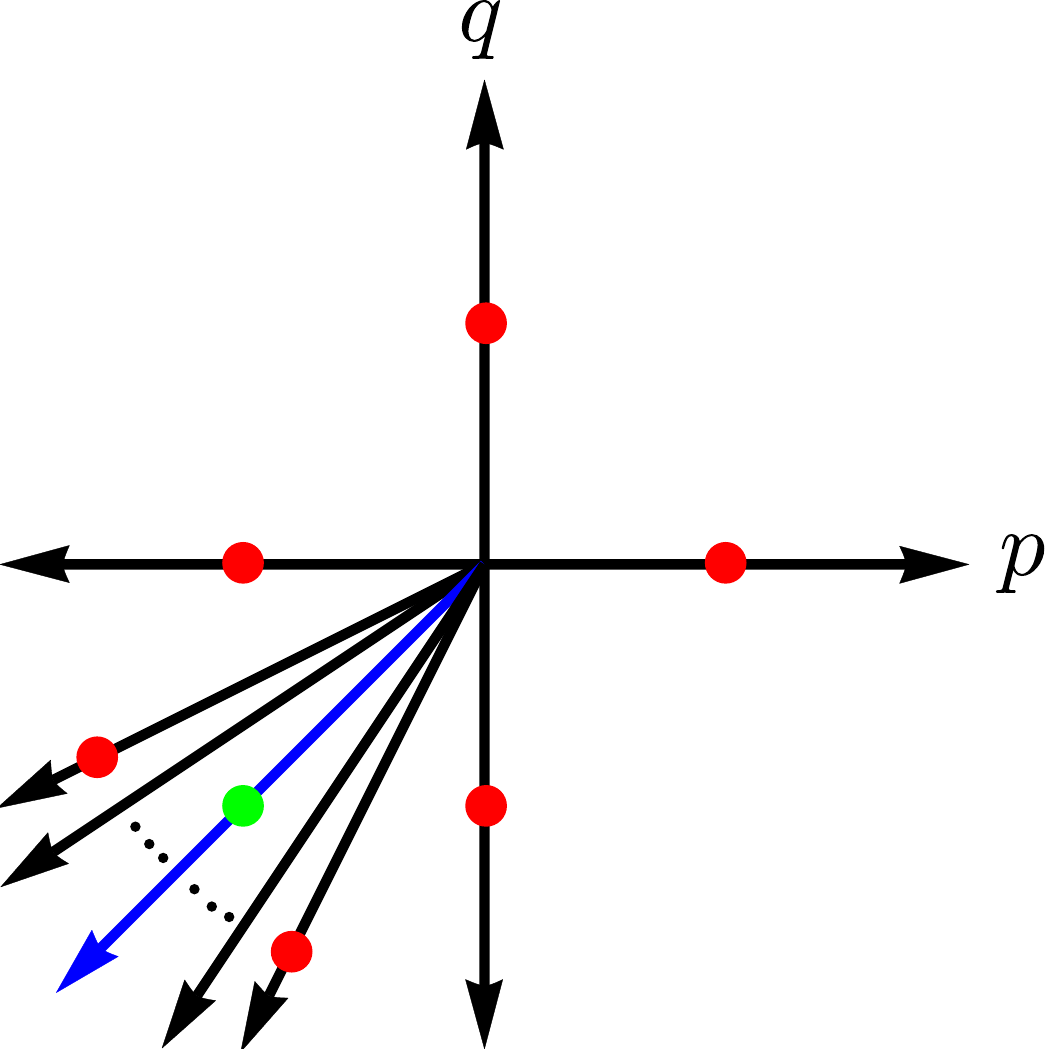}
  \caption{Geometry of dual cones for the Kronecker quiver $Q_{2}$.  The space $\Gamma\otimes_{\mathbb{Z}}\mathbb{R}$ is identified with $\mathbb{R}\oplus \mathbb{R}$ by expressing charges in terms of the seed as $\gamma=p\gamma_{1}+q\gamma_{2}.$ The boundaries of the cones are indicated by the black arrows.  The cones accumulate against the blue ray.  The red dots denote the integral generators on the boundaries of the cones. The green dot denotes an integral point on the limiting ray.}
  \label{fig:su2cones}
\end{figure}

The integral points on the boundaries of the cones, as well as the integral points on the limiting ray, are a natural generating set for the semiring of line defects.  They form three countably infinite families $L_{n}, L_{n}^{*},$ and $W_{n}.$ We label them by their core charge as
\begin{equation}
\mathbf{RG}(L_{n})=\gamma_{1}-n(\gamma_{1}+\gamma_{2}), \hspace{.2in}\mathbf{RG}(L^{*}_{n})=\gamma_{2}-n(\gamma_{1}+\gamma_{2}), \hspace{.2in}\mathbf{RG}(W_{n})=-\frac{n}{2}\left(\gamma_{1}+\gamma_{2}\right). \label{wilsondyons}
\end{equation}
The multiplication rules for any two defects in a fixed cone follows from the general analysis of \S \ref{dualseeds}.  We  study the framed states of the line defects $W_{n}$ occupying the limiting ray in \S \ref{wilsonlineex}.

\subsubsection{$k>2$}
This case is not known to describe any UV complete field theory but occurs as a subsector of many known models \cite{Galakhov:2013oja}.  From the initial seed $\{\gamma_{1},\gamma_{2}\}$ we again generate an infinite sequence of seeds by mutation.
\begin{equation}
\xy
(0,0)+*{\{\gamma_{1},\gamma_{2}\}}="a"; (0,7)+*{\{k\gamma_{1}+\gamma_{2}, -\gamma_{1}\}}="b";    (30,7)+*{\{(k^{2}-1)\gamma_{1}+k\gamma_{2}, -k\gamma_{1}-\gamma_{2}\}}="c";  (55,7)+*{\phantom{a}\cdots\phantom{a}}="d";  (0,-7)+*{\{-\gamma_{2}, k\gamma_{2}+\gamma_{1}\}}="f";    (30,-7)+*{\{(-k\gamma_{2}-\gamma_{1}, (k^{2}-1)\gamma_{2}+k\gamma_{1}\}}="g"; (55,-7)+*{\phantom{a}\cdots\phantom{a}}="h";  (2.5,3.5)+*{\mu_{L_{1}}}="j"; (2.5,-3.5)+*{\mu_{R_{2}}}="k"; (12.5,5)+*{\mu_{L_{1}}}="l"; (12.5,-5)+*{\mu_{R_{2}}}="m" ; (48.5,5)+*{\mu_{L_{1}}}="p" ; (48.5,-5)+*{\mu_{R_{2}}}="q"
\ar @{->} "a"; "b"
\ar @{->} "b"; "c"
\ar @{->} "c"; "d"
\ar @{->} "a"; "f"
\ar @{->} "f"; "g"
\ar @{->} "g"; "h"
\endxy
\end{equation}
By raising the matrices appearing in \eqref{mutationmatrices} to arbitrary powers, we deduce that the general seed obtained by a sequence of $\ell\geq0$ left or right mutations takes the form
\begin{eqnarray}
\mu_{L_{1}}^{\ell}\left(\phantom{\int}\hspace{-.18in}{\{\gamma_{1},\gamma_{2}\}}\right) & = & \{a_{\ell}\gamma_{1}+a_{\ell-1}\gamma_{2}, -a_{\ell-1}\gamma_{1}-a_{\ell-2}\gamma_{2}\}, \\
\mu_{R_{2}}^{\ell}\left(\phantom{\int}\hspace{-.18in}{\{\gamma_{1},\gamma_{2}\}}\right) & = & \{-a_{\ell-1}\gamma_{2}-a_{\ell-2}\gamma_{1},a_{\ell}\gamma_{2}+a_{\ell-1}\gamma_{1}\}, \nonumber
\end{eqnarray}
where the coefficients $a_{q}$ appearing above are defined by the following functions of $k$
\begin{equation}
a_{2n} \equiv  \sum_{j=0}^{n}\binom{n+j}{2j}(-1)^{n+j}k^{2j}-\delta_{-1,n}, \hspace{.5in}  a_{2n+1} \equiv  \sum_{j=0}^{n}\binom{n+j+1}{2j+1}(-1)^{n+j}k^{2j+1}.
\end{equation}
They satisfy the recurrence relation
\begin{equation}
a_{\ell+1}=ka_{\ell}-a_{\ell-1}.
\end{equation}

From these seeds we again extract an infinite number of dual cones.  We embed these cones in the charge lattice $\Gamma$ using the $\mathbf{RG}$ map associated to the initial seed.  Let $n\geq0$, be a non-negative integer.  Then we may label the cones as
\begin{eqnarray}
\check{\mathcal{A}}&= &\left\{\phantom{\int}\hspace{-.18in}p \gamma_{1}+q\gamma_{2}| p,q\geq0\right\}, \nonumber\\
\check{\mathcal{B}}_{n}& = &\left\{\phantom{\int}\hspace{-.18in}-p\left(a_{n-1}\gamma_{1}+a_{n}\gamma_{2}\right)-q\left(a_{n-2}\gamma_{1}+a_{n-1}\gamma_{2}\right)| p,q\geq0\right\}, \label{kcones} \\
\check{\mathcal{C}}_{n}& = &\left\{\phantom{\int}\hspace{-.18in}-p\left(a_{n}\gamma_{1}+a_{n-1}\gamma_{2}\right)-q\left(a_{n-1}\gamma_{1}+a_{n-2}\gamma_{2}\right)| p,q\geq0\right\}. \nonumber
\end{eqnarray}

The cones defined in \eqref{kcones} do not cover the entire charge lattice.  As $n$ tends to infinity, the cones  $\check{\mathcal{B}}_{n}$, and $\check{\mathcal{C}}_{n}$ degenerate to rays whose slope is controlled by the eigenvalues of the mutation matrix $\eqref{mutationmatrices}$
\begin{eqnarray}
\check{\mathcal{B}}_{n}&\longrightarrow& \check{\mathcal{B}}_{\infty}=\left\{\phantom{\int}\hspace{-.18in}-p \left(\gamma_{1}+\left(\frac{k+\sqrt{k^{2}-4}}{2}\right)\gamma_{2}\right)| p\geq0\right\},\\
\check{\mathcal{C}}_{n}&\longrightarrow &\check{\mathcal{C}}_{\infty}=\left\{\phantom{\int}\hspace{-.18in}-p \left(\gamma_{1}+\left(\frac{k-\sqrt{k^{2}-4}}{2}\right)\gamma_{2}\right)| p\geq0\right\}. \nonumber
\end{eqnarray}
These limiting rays have irrational slope and hence contain no integral points of $\Gamma$.  There is an open set $\mathcal{U}$ in $\Gamma \otimes_{\mathbb{Z}}\mathbb{R}$ which is spanned by the limiting rays
\begin{equation}
\mathcal{U}=\left\{\phantom{\int}\hspace{-.18in}-p \left(\gamma_{1}+\left(\frac{k+\sqrt{k^{2}-4}}{2}\right)\gamma_{2}\right)-q \left(\gamma_{1}+\left(\frac{k-\sqrt{k^{2}-4}}{2}\right)\gamma_{2}\right)| p,q>0\right\}.
\end{equation}
Defects with core charge in $\mathcal{U}$ do not lie in any cone.  The geometry is illustrated in Figure \ref{fig:kcones}.
\begin{figure}[here!]
  \centering
 \includegraphics[width=0.45\textwidth]{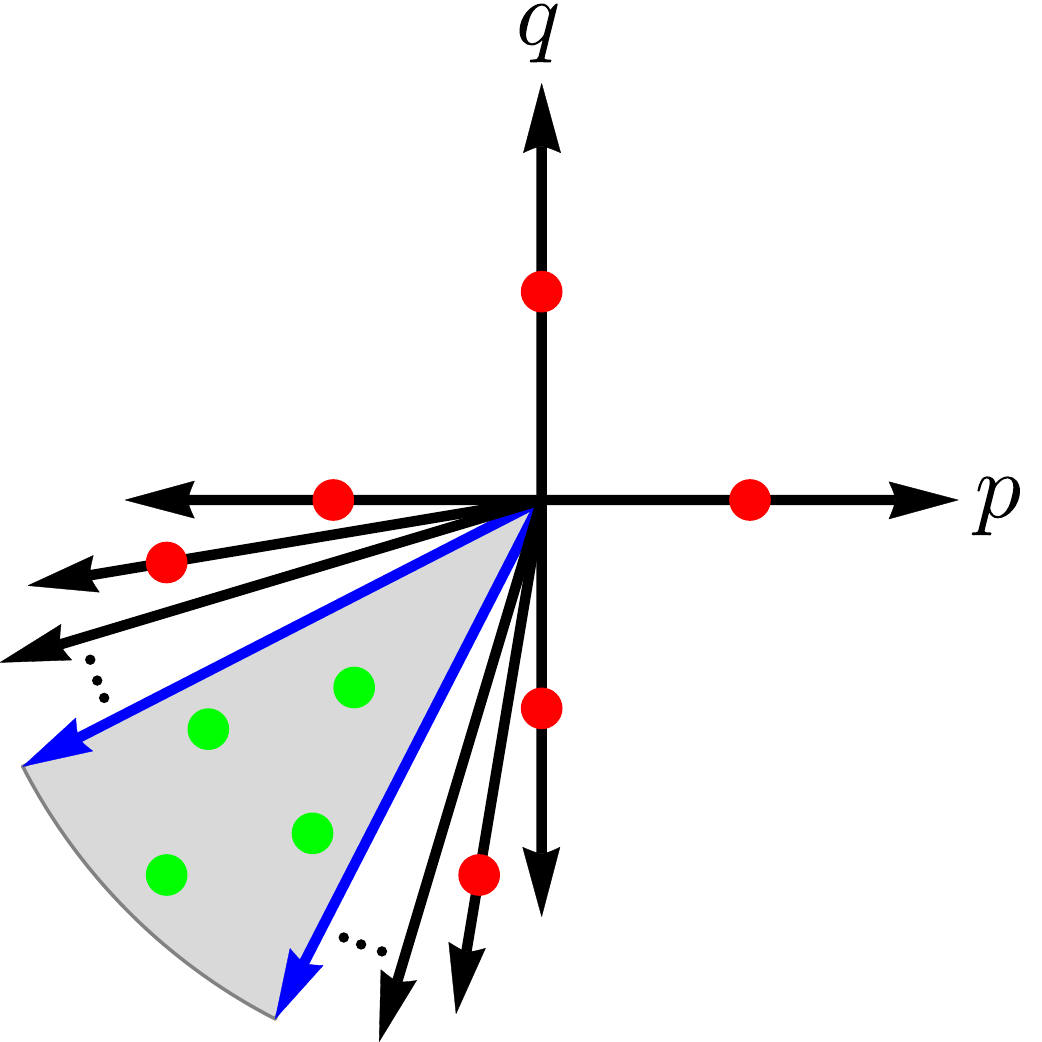}
  \caption{Geometry of dual cones for the Kronecker quiver $Q_{k}$ with $k>2$.  The space $\Gamma\otimes_{\mathbb{Z}}\mathbb{R}$ is identified with $\mathbb{R}\oplus \mathbb{R}$ by expressing charges in terms of the seed as $\gamma=p\gamma_{1}+q\gamma_{2}.$ The boundaries of the cones are indicated by the black arrows.  The cones accumulate against the blue rays.  The gray shaded area is the region $\mathcal{U}$ of core charges not contained in any cone.  The red dots denote the integral generators on the boundaries of the cones. The green dots denote integral points in $\mathcal{U}$.}
  \label{fig:kcones}
\end{figure}

As in our previous examples, the geometry of the cones suggests a natural generating set of the algebra of line defects.  We include the defects whose core charges are given by the integral points spanning the boundaries of the cones, denoted $L_{n}$ and $L_{n}^{*}$
\begin{equation}
\mathbf{RG}(L_{n})=-a_{n-2}\gamma_{1}-a_{n-1}\gamma_{2}, \hspace{.5in} \mathbf{RG}(L_{n}^{*})=-a_{n-1}\gamma_{1}-a_{n-2}\gamma_{2}.
\end{equation}
And, in addition, the generating set contains all defects $U_{r,s}$ whose core charge is an integral point in $\mathcal{U}$
\begin{equation}
\mathbf{RG}(U_{r,s})=r \gamma_{1}+s\gamma_{2}\in \mathcal{U}.
\end{equation}
The exploration of the algebra of these defects is an interesting topic for future analysis.

\subsection{Framed Spectra and OPEs}
\label{explicitspectra}

We now determine examples of framed spectra.  For the case $k=1$ we determine the spectra of the five generators $L_{i}.$  From this we extract the complete non-commutative OPE and verify its non-trivial predictions for the cohomology of quiver moduli.  For the case $k=2,$ we determine the framed states supported by the Wilson lines in $SU(2)$ SYM.  The latter requires input from a Coulomb branch analysis.

\subsubsection{$k=1:$ Argyres-Douglas Theory}
\label{ADframedcomp}

We first exhibit the spectrum of framed BPS states supported by the five defects $L_{i}$ defined in \eqref{ADgens}.  As each such defect lies in a dual cone, we may perform the analysis using a Higgs branch computation in quiver quantum mechanics.  We assume that the stability condition is chosen so that the framed quiver is right-adapted in the sense of \S \ref{adapted} so that
\begin{equation}
\arg(\zeta)<\arg(Z_{\gamma_{i}}), \hspace{.5in} \forall i.
\end{equation}

\begin{figure}[here!]
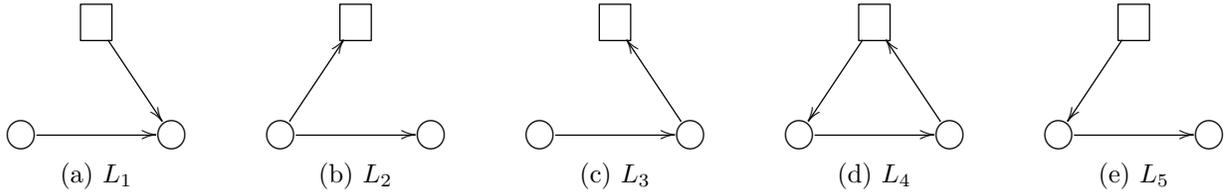

  \centering
 \subfloat[$L_{1}$]{\label{fig:L1}\xy
(0,0)+*{\bigcirc}="a"; (10,0)+*{\bigcirc}="b";  (10,15)*+{\color{white}{1}}*\frm{-} ="c";
\ar @{->} "a"; "b"
\ar @{->} "c"; "b"
\endxy}
\hspace{.35in}
 \subfloat[$L_{2}$]{\label{fig:L2}\xy
(0,0)+*{\bigcirc}="a"; (10,0)+*{\bigcirc}="b";  (10,15)*+{\color{white}{1}}*\frm{-} ="c";
\ar @{->} "a"; "b"
\ar @{->} "a"; "c"
\endxy}
\hspace{.35in}
 \subfloat[$L_{3}$]{\label{fig:L3}\xy
(0,0)+*{\bigcirc}="a"; (10,0)+*{\bigcirc}="b";  (10,15)*+{\color{white}{1}}*\frm{-} ="c";
\ar @{->} "a"; "b"
\ar @{->} "b"; "c"
\endxy}
\hspace{.35in}
 \subfloat[$L_{4}$]{\label{fig:L4}\xy
(0,0)+*{\bigcirc}="a"; (10,0)+*{\bigcirc}="b";  (10,15)*+{\color{white}{1}}*\frm{-} ="c";
\ar @{->} "a"; "b"
\ar @{->} "b"; "c"
\ar @{->} "c"; "a"
\endxy}
\hspace{.35in}
 \subfloat[$L_{5}$]{\label{fig:L5}\xy
(0,0)+*{\bigcirc}="a"; (10,0)+*{\bigcirc}="b";  (10,15)*+{\color{white}{1}}*\frm{-} ="c";
\ar @{->} "a"; "b"
\ar @{->} "c"; "a"
\endxy}
  \caption{The framed quivers for the five generators $L_{i}$ of the Argyres-Douglas theory.}
  \label{fig:adframed1}
\end{figure}

The five framed quivers are shown in Figure \ref{fig:adframed1}. All of these quivers are in the mutation class of the $A_{3}$ Dynkin quiver, and their representation theory is well known \cite{MR0332887}.  The framed BPS states in the cocyclic chamber are stated explicitly in \cite{Alim:2011kw} yielding the following generating functionals
\begin{eqnarray}
F(L_{1},\zeta) & = & X_{\gamma_{1}}, \nonumber \\
F(L_{2},\zeta) & = & X_{\gamma_{2}}+X_{\gamma_{1}+\gamma_{2}}, \nonumber \\
F(L_{3},\zeta) & = & X_{-\gamma_{1}}+X_{\gamma_{2}-\gamma_{1}}+X_{\gamma_{2}},  \label{adgensres} \\
F(L_{4},\zeta) & = & X_{-\gamma_{1}-\gamma_{2}}+X_{-\gamma_{1}}, \nonumber \\
F(L_{5},\zeta) & = & X_{-\gamma_{2}}. \nonumber
\end{eqnarray}
Notice that both $L_{1}$ and $L_{5}$ support only a single framed BPS state in this chamber consistent with the fact that these defects lie in the dual cone to the seed $\{\gamma_{1}, \gamma_{2}\}.$  The remaining $L_{i}$ support a non-trivial BPS spectrum consisting of spin zero states.

From the result $\eqref{adgensres}$ and the multiplication rules for the symbols $X_{\gamma}$ we may extract the non-commutative OPE coefficients .  Let us view the index $i$ as periodic mod $5$.  Then we find
\begin{equation}
L_{i}\ast L_{i+2}=1+yL_{i+1}, \hspace{.5in} L_{i}\ast L_{i-2}=1+y^{-1}L_{i-1},
\end{equation}
where in the above, the symbol $1$ denotes the line defect with vanishing core charge.  In the commutative case $y=1$ the above reproduces answers found in \cite{Gaiotto:2010be}.  Thus, we have recovered a non-trivial property of the Argyres-Douglas CFT using a purely infrared calculation in quiver quantum mechanics.

\paragraph{Example of the OPE}\mbox{} \\

As a further application of our general results, we now illustrate the OPE in an example of the general recursion formula \eqref{quiverrecursion}.  Recall from \S \ref{OPEsec}, that the operator product expansion of line defects is a chamber independent property of a quantum field theory.  It is interesting to see this in practice for explicit line defects in a chamber where the framed BPS spectrum is non-trivial.

 Let $p_{1}, p_{2}, q_{1}, q_{2}$ be non-negative integers.  We consider a defect whose core charge is $p_{i}\gamma_{1}+q_{i}\gamma_{2}$.  From the analysis of \S \ref{adconesanalysis} we know that such defects lie in a dual cone and hence must satisfy the algebra
\begin{equation}
L_{p_{1}\gamma_{1}+q_{1}\gamma_{2}}\ast L_{p_{2}\gamma_{1}+q_{2}\gamma_{2}} = \left(y^{p_{1}q_{2}-p_{2}q_{1}}\right)L_{(p_{1}+p_{2})\gamma_{1}+(q_{1}+q_{2})\gamma_{2}}. \label{adopetest}
\end{equation}
To exhibit this multiplication rule in an interesting setting we evaluate the framed BPS states directly for such defects in a chamber where the spectrum contains states of non-vanishing spin.

To carry out the calculation, we work in chamber which is left-adapted in the sense of \S\ref{adapted} so that
\begin{equation}
\arg(\zeta)>\arg(Z_{i}), \hspace{.5in}\forall i.
\end{equation}
Stable representations are then cyclic with respect to the framing subspace.   We consider a core and halo charge of the form
\begin{equation}
\gamma_{c}=p\gamma_{1}+q\gamma_{2}, \hspace{.5in}\gamma_{h}=n\gamma_{1}+m\gamma_{2}. \label{corehaloadex}
\end{equation}
The framed representations appear as
\begin{equation}
\xy
(-10,0)+*{\mathbb{C}^{n}}="a"; (10,0)+*{\mathbb{C}^{m}}="b"; (0,15)*+{\mathbb{C}} ="c";   (-6,4.5)+*{q};  (6,4.5)+*{p};
\ar @{->} "a"; "b"
\ar @{->} "a"; "c"
\ar @{->} "c"; "b"
\endxy
\label{adrepexample}
\end{equation}
where now $p$ and $q$ indicate the number of linear maps between the vector spaces in the diagram above.  Our aim is to enforce the cyclic property on this representation and extract the moduli space.

We first observe that if $n$ is non-zero then \eqref{adrepexample} admits a proper subrepresentation of the form shown in \eqref{destablize1}.
\begin{equation}
\xy
(-10,0)+*{\mathbb{C}^{n}}="a"; (-20,0)+*{0}="d"; (10,0)+*{\mathbb{C}^{m}}="b"; (20,0)+*{\mathbb{C}^{m}}="e"; (0,15)*+{\mathbb{C}} ="c";  (0,30)*+{\mathbb{C}} ="f";   (-6,4.5)+*{q};  (6,4.5)+*{p};  (12.5,10)+*{p}; (1.5,11)+*{I_{1}}; (15,1.5)+*{I_{m}};
\ar @{->} "a"; "b"
\ar @{->} "a"; "c"
\ar @{->} "c"; "b"
\ar @{->} "d"; "a"
\ar @{->} "e"; "b"
\ar @{->} "f"; "c"
\ar @/_4pc/  "d"; "e"
\ar @/^1pc/  "d"; "f"
\ar @/^1pc/  "f"; "e"
\endxy
\label{destablize1}
\end{equation}
Where in the above, $I_{k}$ indicates the $k$-dimensional identity map.  From this we conclude that if $n$ is non-zero the representation cannot be cyclic.  Hence $n$ equals zero, and we are reduced to studying representations of the form
\begin{equation}
\xy
(-9,0)+*{\mathbb{C}}="a"; (9,0)+*{\mathbb{C}^{m}.}="b";  (0,7.7)+*{A_{1}}; (0,3.5)+*{A_{2}};  (0,.5)+*{\vdots}; (0,-4)+*{A_{p}};
\ar @/^3pc/ "a"; "b"
\ar @/^1pc/ "a"; "b"
\ar @/_1pc/ "a"; "b"
\endxy
\end{equation}
Possible destabilizing subrepresentations are diagrams
\begin{equation}
\xy
(-9,0)+*{\mathbb{C}}="a"; (9,0)+*{\mathbb{C}^{m}}="b"; (0,3.5)+*{A_{1}};  (0,.5)+*{\vdots}; (0,-4)+*{A_{p}};
(-9,-13)+*{\mathbb{C}}="c"; (9,-13)+*{\mathbb{C}^{k}.}="d";  (0,-9.5)+*{a_{1}};  (0,-12.5)+*{\vdots}; (0,-17)+*{a_{p}};
\ar @/^1pc/ "a"; "b"
\ar @/_1pc/ "a"; "b"
\ar @/^1pc/ "c"; "d"
\ar @/_1pc/ "c"; "d"
\ar @{->} "c"; "a"
\ar @{->} "d"; "b"
\endxy
\end{equation}
where $k<m$, the vertical maps are injective and $a_{i}$ is the restriction of $A_{i}$ to the given subspace.  Such subrepresentations exist if and only if the span of the maps $A_{i}$ in $\mathbb{C}^{m}$, $\bigoplus \mathrm{Im}(A_{i})$, lies is a proper subspace.  Thus the representation is cyclic and hence stable if and only if $\bigoplus \mathrm{Im}(A_{i})$ spans $\mathbb{C}^{m}.$

We may now pass to the desired moduli space.  Assemble the linear maps $A_{i}$ into an $m\times p$ matrix.  According to the cyclic stability condition this matrix has maximal rank $m.$   Passing to the moduli space means quotienting this data by $Gl(m,\mathbb{C})$ acting by matrix multiplication.  Thus we conclude that the moduli space is a Grassmannian of $m$-planes in $\mathbb{C}^{p}$.

In summary the moduli space of framed BPS states for charges of the form \eqref{corehaloadex} is
\begin{equation}
\mathcal{M}_{\gamma_{c}=p\gamma_{1}q\gamma_{2}, \gamma_{h}=n\gamma_{1}+m\gamma_{2}}=\begin{cases}\mathrm{Empty} & \mathrm{if}  \  n>0 \ \mathrm{or} \  m>p, \\
Gr(m,p) & \mathrm{if}  \ n=0 \ \mathrm{and} \ m\leq p.
\end{cases} \label{grassmanian}
\end{equation}

From the result \eqref{grassmanian} we may extract the generating functional of framed BPS states in the cyclic chamber.   The Grassmannian has non-trivial cohomology $h^{p,q}$ only along the diagonal of the Hodge diamond where $p=q$, and its Poincar\'{e} polynomial takes the simple form
\begin{equation}
P_{Gr(m,p)}(t)=\sum_{s}b_{s}(Gr(m,p))t^{s}=\frac{\prod_{i=1}^{p}\left(1-t^{2i}\right)}{\prod_{i=1}^{m}\left(1-t^{2i}\right)\prod_{i=1}^{p-m}\left(1-t^{2i}\right)}
\end{equation}
It follows that the generating functional of framed BPS states is
\begin{equation}
F(L_{p\gamma_{1}+q\gamma_{2}}, \zeta)=\sum_{m=0}^{p}\left(\frac{y^{m(m-p)}\prod_{i=1}^{p}\left(1-y^{2i}\right)}{\prod_{i=1}^{m}\left(1-y^{2i}\right)\prod_{i=1}^{p-m}\left(1-y^{2i}\right)}\right)X_{p\gamma_{1}+(q+m)\gamma_{2}}. \label{adhighspin}
\end{equation}
The result \eqref{adhighspin} illustrates the no-exotics phenomenon of \S \ref{framedhilbert}: the coefficient of each monomial $X_{\gamma}$ is a Laurent polynomial in $y$ with positive integral coefficients.

The  generators $L_{\gamma_{1}}$ and $L_{\gamma_{2}}$ have the simple spectra
\begin{equation}
F(L_{\gamma_{1}}, \zeta)=X_{\gamma_{1}}+X_{\gamma_{1}+\gamma_{2}}, \hspace{.5in}F(L_{\gamma_{2}}, \zeta)=X_{\gamma_{2}}.
\end{equation}
Using these expressions, one may readily verify the multiplication formula
\begin{eqnarray}
F(L_{\gamma_{1}}, \zeta) \ast F(L_{p\gamma_{1}+q\gamma_{2}}, \zeta)& = & y^{q}F(L_{(p+1)\gamma_{1}+q\gamma_{2}}, \zeta), \\
F(L_{\gamma_{2}}, \zeta) \ast F(L_{p\gamma_{1}+q\gamma_{2}}, \zeta) &= &y^{-p}F(L_{p\gamma_{1}+(q+1)\gamma_{2}}, \zeta), \nonumber
\end{eqnarray}
thus proving the result \eqref{adopetest} by induction.

\subsubsection{$k=2: SU(2)$ SYM}
\label{wilsonlineex}

Next we study the framed spectra in the pure $SU(2)$ theory.  For line defects in dual cones, the situation is similar to that studied in the Argyres-Douglas theory.  In accordance with the general analysis of \S \ref{dualseeds}, there is a chamber where each such defect supports a single framed state and the spectrum in all remaining chambers follows via application of the wall-crossing formula.

By contrast, the line defects lying along the limiting ray of Figure \ref{fig:su2cones} do not occupy any dual cone.  We denote these defects by $W_{n},$ with $n\geq0.$  Physically, they describe the WIlson lines in the $n+1$ dimensional irreducible representation of $SU(2)$.  A correct calculation of the framed BPS states associated to these defects requires us to use our multi-centered formula \eqref{framedquivconj}, and hence provides an explicit example where the framed quiver construction of \S \ref{framedquivers} produces an incorrect answer for the spectrum.

To illustrate these phenomena in detail, we provide an independent calculation of the spectrum from the two branches.  First, a Higgs branch calculation, is carried out using representation theory of the framed quiver.  The resulting spectrum \eqref{wilsonhiggsspec} defines a candidate defect $W_{n}^{\mathrm{Higgs}}$.  We verify that the commutative OPE of these objects is
\begin{equation}
W_{n}^{\mathrm{Higgs}} W_{m}^{\mathrm{Higgs}}=W_{n+m}^{\mathrm{Higgs}}.  \label{hwilsonOPE}
\end{equation}
In the second calculation we explain how to geometrically extract the multi-centered Coulomb branch states appearing our general proposal \eqref{framedquivconj}.   The resulting spectrum \eqref{explicit1} defines the defect $W_{n}$.  We demonstrate that they satisfy the non-commutative OPE
\begin{equation}
W_{n}\ast W_{m}=\sum_{k=0}^{\mathrm{min}(n,m)}W_{|n-m|+2k}.  \label{cwilsonOPE}
\end{equation}

Of these two results \eqref{hwilsonOPE} and \eqref{cwilsonOPE}, it is the latter which is the correct multiplication rule.  Indeed, from the physical description of the defect $W_{n}$ as the Wilson line in the $n+1$ dimensional representation of $SU(2),$ we know two properties of these defects.
\begin{itemize}
\item If framed states carrying non-trivial magnetic charge are ignored, then the charges of the remaining purely electric states must comprise exactly the weight decomposition of the $n+1.$  This follows because on the Coulomb branch, $SU(2)$ is broken to its maximal torus $U(1)$.
\item The $SU(2)$ theory is asymptotically free.  Hence from a UV calculation of the OPE, it is clear that the defects $W_{n}$ must obey the algebra of $SU(2)$ representations.  This is true even though the framed BPS spectrum of these defects in general contains states with both non-trivial electric and magnetic charges.
\end{itemize}
The defects $W_{n}$ defined from our multi-centered prescription \eqref{framedquivconj} satisfy these properties, while the defects $W_{n}^{\mathrm{Higgs}}$ defined by framed quiver representations fail on both accounts.  These results are the most significant piece of evidence for our conjecture presented in this paper.

\paragraph{Higgs Branch Calculation}\mbox{} \\

We begin with the Higgs branch calculation of the framed spectra of the defect $W^{\mathrm{Higgs}}_{n}$.  We are to analyze representations of the quiver shown below.
\begin{equation}
\xy
(-10,0)+*{\bigcirc}="a"; (10,0)+*{\bigcirc}="b";   (0,15)*+{\color{white}{1}}*\frm{-} ="c";  (0,1.5)+*{2};  (-3.5,3.5)+*{n};  (3.5,3.5)+*{n};  (0,-2.5)+*{A_{i}};  (7.5,4.5)+*{B_{j}}; (-7.9,4.5)+*{C_{k}};
\ar @{->} "a"; "b"
\ar @{->} "b"; "c"
\ar @{->} "c"; "a"
\endxy
\label{framedwilson}
\end{equation}
The symbols $A_{1}, A_{2},$ $B_{1}, \cdots, B_{n},$ and $C_{1}, \cdots C_{n}$ denote the indicated arrows.  We work in region of stability space where
\begin{equation}
\arg(\zeta)<\arg(Z_{i}), \hspace{.5in}\forall i,
\end{equation}
so that the quiver is right-adapted in the sense of \S \ref{adapted}, and stability coincides with the algebraic cocyclic property.  This in turn is equivalent to the following injectivity conditions on the maps
\begin{equation}
\bigcap_{i=1}^{2}\mathrm{Ker}(A_{i})=0, \hspace{.5in}\bigcap_{i=1}^{n}\mathrm{Ker}(B_{i})=0. \label{su2stabres}
\end{equation}
Finally, the quiver \eqref{framedwilson} has non-trivial cycles and requires a potential $\mathcal{W}$.  We take this potential to be generic and cubic.  Without loss of generality we thus have
\begin{equation}
\mathcal{W}=\sum_{i=1}^{n}C_{i}\circ B_{i} \circ A_{1}-\sum_{i=1}^{n}\lambda_{i}C_{i}\circ B_{i}\circ A_{2},
\end{equation}
where in the above $\lambda_{i}$ are distinct non-zero complex numbers.  Thus, we impose the following relations on representations
\begin{eqnarray}
\frac{\partial\mathcal{W}}{\partial{A_{1}}}=0 & \Longrightarrow & \sum_{i=1}^{n}C_{i}\circ B_{i}=0, \label{abrel} \\
\frac{\partial\mathcal{W}}{\partial{A_{2}}}=0 & \Longrightarrow & \sum_{i=1}^{n}\lambda_{i}C_{i}\circ B_{i}=0, \\
\frac{\partial\mathcal{W}}{\partial{C_{i}}}=0 & \Longrightarrow & B_{i}\circ (A_{1}-\lambda_{i}A_{2}) =0,\label{barel}\\
\frac{\partial\mathcal{W}}{\partial{B_{i}}}=0 & \Longrightarrow &  (A_{1}-\lambda_{i}A_{2})\circ C_{i} =0. \label{carel}
\end{eqnarray}

Consider a representation with halo charge $\gamma_{h}=r{\gamma_{1}}+s\gamma_{2}$ shown below.
\begin{equation}
\xy
(-10,0)+*{\mathbb{C}^{r}}="a"; (10,0)+*{\mathbb{C}^{s}}="b";   (0,15)*+{\mathbb{C}} ="c";   (0,-2.5)+*{A_{i}};  (7.5,4.5)+*{B_{j}}; (-7.9,4.5)+*{C_{k}};
\ar @{->} "a"; "b"
\ar @{->} "b"; "c"
\ar @{->} "c"; "a"
\endxy
\end{equation}
We refer to the moduli space of stable, flat representations of the form shown above as $\mathcal{M}^{n}_{r,s}$.  We constrain the representations with the following results.

\begin{prop} \label{Ainj} The maps $A_{1}$ and $A_{2}$ are both injective.
\end{prop}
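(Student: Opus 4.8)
The plan is to read off the injectivity of $A_1$ and $A_2$ directly from the two cocyclic stability conditions \eqref{su2stabres} together with the single family of superpotential relations \eqref{barel}. Notably, the sum relations \eqref{abrel} and the relations coming from $\partial\mathcal{W}/\partial B_i$ will not be needed for this proposition; the only numerical input is that each $\lambda_i$ is nonzero.

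First I would prove $A_1$ is injective. Let $v \in \mathbb{C}^r$ with $A_1 v = 0$. For every $i = 1,\dots,n$, relation \eqref{barel} gives $B_i\bigl((A_1 - \lambda_i A_2)v\bigr) = 0$, that is $B_i(A_1 v) = \lambda_i B_i(A_2 v)$; since $A_1 v = 0$ and $\lambda_i \neq 0$, this forces $B_i(A_2 v) = 0$. As this holds for all $i$, we get $A_2 v \in \bigcap_{i=1}^{n}\mathrm{Ker}(B_i)$, which is $0$ by the second condition in \eqref{su2stabres}. Hence $A_2 v = 0$ as well, so $v \in \mathrm{Ker}(A_1)\cap\mathrm{Ker}(A_2) = \bigcap_{i=1}^{2}\mathrm{Ker}(A_i) = 0$ by the first condition in \eqref{su2stabres}. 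Thus $v = 0$ and $A_1$ is injective. The injectivity of $A_2$ then follows by the mirror argument: if $A_2 w = 0$, then \eqref{barel} yields $B_i(A_1 w) = \lambda_i B_i(A_2 w) = 0$ for all $i$, so $A_1 w \in \bigcap_i \mathrm{Ker}(B_i) = 0$, whence $w \in \mathrm{Ker}(A_1)\cap\mathrm{Ker}(A_2) = 0$.

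There is essentially no serious obstacle here: once \eqref{barel} and \eqref{su2stabres} are in hand the proof is a short diagram chase. The only points worth flagging are that one uses $\lambda_i \neq 0$ in an essential way (to convert $B_i(A_1 v) = 0$ into $B_i(A_2 v) = 0$), and that \emph{all} $n$ relations in \eqref{barel} must be invoked simultaneously so as to land inside $\bigcap_i \mathrm{Ker}(B_i)$. For $n = 0$ the statement is vacuous, since in that case cocyclicity forces $r = s = 0$.
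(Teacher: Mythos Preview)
Your proof is correct and follows essentially the same approach as the paper: both use the relation \eqref{barel} together with $\lambda_i \neq 0$ to show that any element of $\ker A_1$ maps under $A_2$ into $\bigcap_i \ker B_i$, then invoke the two cocyclic conditions \eqref{su2stabres}. The only cosmetic difference is that the paper phrases it as a proof by contradiction and appeals to ``symmetry'' for $A_2$, whereas you write a direct argument and spell out the mirror case explicitly.
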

\begin{proof} Suppose on the contrary that $A_{1}$ has a kernel containing the non-zero vector $v$.  According to the stability condition \eqref{su2stabres}, $v$ cannot be in the kernel of $A_{2}$.  By the relation \eqref{barel}, we deduce that $B_{i}\circ A_{2}(v)=0$ for all $i,$ thus violating the constraint \eqref{su2stabres}.  By symmetry, $A_{2}$ must also be injective.
\end{proof}

\begin{prop} \label{Czero} The maps $C_{i}$ are zero for all $i$.
\end{prop}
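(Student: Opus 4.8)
The plan is to argue by contradiction. Set $c_i := C_i(1) \in \mathbb{C}^r$, so that $C_i \equiv 0$ is equivalent to $c_i = 0$, and suppose some $c_j \neq 0$. The strategy is to manufacture from $c_j$ a nonzero vector in $\mathbb{C}^s$ annihilated by all of $B_1,\dots,B_n$, contradicting the stability condition \eqref{su2stabres}.

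First I would feed the generator of the framing space into relation \eqref{carel}, obtaining $(A_1-\lambda_j A_2)(c_j)=0$, i.e. $A_1 c_j = \lambda_j A_2 c_j$. Put $u_j := A_2 c_j$; by Proposition~\ref{Ainj} the map $A_2$ is injective, so $u_j \neq 0$, and $A_1 c_j = \lambda_j u_j$. Next, for each $i$ I would use relation \eqref{barel}, which gives $B_i(A_1 c_j) = \lambda_i B_i(A_2 c_j) = \lambda_i B_i(u_j)$, and compare it with the identity just obtained, which gives $B_i(A_1 c_j) = \lambda_j B_i(u_j)$. Hence $(\lambda_j-\lambda_i)B_i(u_j)=0$, and since the $\lambda_i$ are pairwise distinct, $B_i(u_j)=0$ for every $i\neq j$. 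Finally I would evaluate relation \eqref{abrel}, $\sum_i C_i\circ B_i = 0$, on the vector $u_j$: since $C_i\bigl(B_i(u_j)\bigr) = B_i(u_j)\,c_i$, every term with $i\neq j$ drops out by the previous step, leaving $B_j(u_j)\,c_j = 0$; as $c_j\neq 0$ this forces $B_j(u_j)=0$ as well. Therefore $u_j \in \bigcap_{i=1}^n \mathrm{Ker}(B_i)$, which equals $0$ by \eqref{su2stabres}, contradicting $u_j\neq 0$. Hence all $c_j = 0$, so each $C_i$ vanishes.

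I expect the only nonroutine step to be the last one: one must evaluate $\sum_i C_i B_i = 0$ on the specific vector $u_j = A_2 c_j$ rather than on a generic input, since it is precisely the eigenvalue-type relations \eqref{barel}--\eqref{carel}, together with distinctness of the $\lambda_i$, that force $u_j$ into $\mathrm{Ker}(B_i)$ for all $i\neq j$ and thereby isolate the single surviving term. Everything else is bookkeeping with the Jacobian relations of the cubic superpotential and the injectivity statement of Proposition~\ref{Ainj}; notably the $\partial\mathcal{W}/\partial A_2$ relation is not even needed.
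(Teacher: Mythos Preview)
Your proof is correct and follows essentially the same approach as the paper: assume some $C_j$ is nonzero, use relation \eqref{carel} to get an eigenvalue-type equation, use \eqref{barel} together with distinctness of the $\lambda_i$ to show $B_i(u_j)=0$ for $i\neq j$, then use \eqref{abrel} to force $B_j(u_j)=0$ as well, contradicting stability. The paper's version reaches the contradiction by first invoking stability to get $B_j(u_j)\neq 0$ and then showing \eqref{abrel} gives $C_j(v)=0$, whereas you derive $B_j(u_j)=0$ from \eqref{abrel} first and then invoke stability; these are logically equivalent rearrangements of the same argument.
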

\begin{proof}
Suppose that $C_{1}$ is nonzero.  Let $v$ denote a basis vector at the framing space.  By assumption, $w=C_{1}(v)$ is nonzero.  According to the relation \eqref{carel}, we have
\begin{equation}
A_{1}(w)=\lambda_{1}A_{2}(w).
\end{equation}
Applying \eqref{barel} we obtain
\begin{equation}
(\lambda_{1}-\lambda_{i})B_{i}\circ A_{2}(w)=0, \hspace{.5in} \forall i.
\end{equation}
By Proposition \ref{Ainj}, $A_{2}$ is injective; hence the above implies that
\begin{equation}
A_{2}(w)\in \bigcap_{i=2}^{b}\mathrm{Ker}(B_{i}),
\end{equation}
and from the stability condition \eqref{su2stabres} we deduce that $B_{1}\circ A_{2}(w) \neq0$.  Since the framing space is one-dimensional we must have $B_{1}\circ A_{2}(w)=\alpha v,$ with $\alpha$ a nonzero constant.  Finally, apply \eqref{abrel} to $A_{2}(w)$ to obtain
\begin{equation}
0= \sum_{i=1}^{n}C_{i}\circ B_{i}\circ A_{2}(w)=C_{1}\circ B_{1} \circ A_{2}(w) =\alpha C_{1}(v),
\end{equation}
thus contradicting the assumption that $C_{1}$ is non-vanishing.  By symmetry, $C_{i}$ therefore vanishes for all $i$.
\end{proof}

Via the above propositions, the quiver representations under investigation are reduced to the form of \eqref{simplifiesfsu2}.
\begin{equation}
\xy
(0,0)+*{\mathbb{C}^{r}}="a"; (10,0)+*{\mathbb{C}^{s}}="b";  (20,0)+*{\mathbb{C}} ="c";   (5,-2.5)+*{A_{i}};  (15,-2.5)+*{B_{j}};
\ar @{->} "a"; "b"
\ar @{->} "b"; "c"
\endxy
\label{simplifiesfsu2}
\end{equation}
Moreover as a consequence of the injectivity constraints, the dimension vectors are restricted to the range
\begin{equation}
0\leq r\leq s\leq n.
\end{equation}
To proceed further, we make use of the well-known classification \cite{MR1476671}  indecomposable representations of the quiver $Q_{2}.$\footnote{We thank Hugh Thomas for helpful comments on an earlier version of the following argument.}

\begin{prop} \label{kroneckermodule} The indecomposable representations of the quiver $Q_{2}$ where the maps $A_{i}$ are injective consist of the following.
\begin{itemize}
\item Representations $V_{\alpha}$ of dimension vector $(1,1),$
\begin{equation}
\xy
(0,0)+*{\mathbb{C}}="a"; (10,0)+*{\mathbb{C}}="b";    (5,-2.5)+*{A_{i}};
\ar @{->} "a"; "b"
\endxy
\end{equation}
Such representations are classified (up to isomorphism) by a point $\alpha \in \mathbb{C}^{\ast},$ where $A_{1}=\alpha$, and $A_{2}=1$.
\item Representations $J_{k}$ of dimension vector $(k,k)$ with $k>1$,
\begin{equation}
\xy
(0,0)+*{\mathbb{C}^{k}}="a"; (10,0)+*{\mathbb{C}^{k}}="b";    (5,-2.5)+*{A_{i}};
\ar @{->} "a"; "b"
\endxy
\end{equation}
Such representations are classified (up to isomorphism) by a point $\alpha \in \mathbb{C}^{\ast},$ where $A_{2}=I_{k}$, and
\begin{equation}
A_{1}=\left(\begin{array}{cccccc}
\alpha & 1 & 0 & \cdots &0 & 0\\
0 & \alpha & 1  & \cdots  &0 & 0\\
\vdots & \vdots & \vdots & \ddots  & \vdots & \vdots \\
0 & 0 & 0 & \cdots & \alpha & 1\\
0 & 0 & 0 & \cdots & 0 & \alpha
\end{array}
 \right),
\end{equation}
is a $k\times k$ Jordan block matrix with eigenvalue $\alpha$.
\item Representations $D_{k}$ of dimension vector $(k-1,k),$
\begin{equation}
\xy
(0,0)+*{\mathbb{C}^{k-1}}="a"; (10,0)+*{\mathbb{C}^{k}}="b";    (5,-2.5)+*{A_{i}};
\ar @{->} "a"; "b"
\endxy
\end{equation}
Such representations are rigid and labelled by $k>0$.  Up to isomorphism, we may assume that the maps $A_{i}$ take the form
\begin{equation}
A_{1}=\left(\begin{array}{c}I_{k-1}\\0\end{array}\right), \hspace{.5in}A_{2}=\left(\begin{array}{c}0\\I_{k-1}\end{array}\right), \label{aspecial}
\end{equation}
where in the above $I_{k}$ denotes the $k\times k$ identity matrix.
\end{itemize}
\end{prop}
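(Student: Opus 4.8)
The plan is to deduce the statement from the classical classification of indecomposable representations of the Kronecker quiver $Q_2$ (as recalled in \cite{MR1476671}), by simply discarding those indecomposables for which $A_1$ or $A_2$ fails to be injective. Recall that, up to isomorphism, every indecomposable $Q_2$-representation is exactly one of the following: a \emph{preprojective} module of dimension vector $(n,n+1)$ for some $n\geq 0$; a \emph{preinjective} module of dimension vector $(n+1,n)$ for some $n\geq 0$; or a \emph{regular} module of dimension vector $(k,k)$ for some $k\geq 1$, indexed by a point $\lambda\in\mathbb{P}^1$. First I would observe that if both $A_1$ and $A_2$ are injective then the dimension vector $(a,b)$ satisfies $a\leq b$; this rules out every preinjective module at once (they have $a=n+1>n=b$, and even the case $n=0$, namely $\C\to 0$, has a non-injective arrow), so an indecomposable with injective arrows must be either preprojective or regular.

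Next I would handle the two surviving families separately. For a preprojective module of dimension vector $(k-1,k)$ one passes to a basis in which the two arrows become the ``shift'' maps $A_1=\binom{I_{k-1}}{0}$ and $A_2=\binom{0}{I_{k-1}}$ of \eqref{aspecial}; these are manifestly injective, the pencil $\lambda A_1-A_2$ is injective for every $\lambda$, and the endomorphism ring is $\C$, so the module is indecomposable and rigid — this is the module called $D_k$. For a regular module one uses rational canonical form: when $\lambda\neq\infty$ one may take $A_2=I_k$ and $A_1=\lambda I_k+N_k$ a single $k\times k$ Jordan block, while when $\lambda=\infty$ one may take $A_1=I_k$ and $A_2=N_k$ nilpotent. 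In the latter case $A_2$ is not injective, so those modules are excluded; in the former $A_1=\lambda I_k+N_k$ is injective precisely when $\lambda\neq 0$. Hence exactly the regular modules with $\lambda\in\C^{\ast}$ survive, and these are the $V_\alpha$ for $k=1$ and the $J_k$ (with eigenvalue $\alpha=\lambda$) for $k>1$, each indecomposable because its endomorphism ring $\C[N_k]/(N_k^k)$ is local, and pairwise non-isomorphic since $k$ is the dimension and $\alpha$ the eigenvalue. Conversely, each representation listed in the proposition has both arrows injective (immediate from the given normal forms) and is indecomposable by the endomorphism-ring computations just indicated, which closes the loop and gives the claimed list.

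The main obstacle here is not conceptual but bookkeeping: one must correctly match the three abstract families to the concrete normal forms in the statement and verify the injectivity condition family-by-family, being careful that the preinjectives and the $\lambda=\infty$ regular modules genuinely drop out while the preprojectives and the finite, nonzero-eigenvalue regular modules genuinely survive. If instead one wanted a fully self-contained argument not invoking \cite{MR1476671}, the real work would move into establishing the trichotomy itself — most efficiently via Kronecker's theory of equivalence of matrix pencils applied to the pencil $\lambda A_1-A_2$ — and that, rather than anything in the case analysis above, is the substantive input.
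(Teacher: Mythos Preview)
Your proposal is correct, and in fact it supplies more than the paper does: the paper does not prove this proposition at all but simply records it as a well-known consequence of the classical classification of indecomposable $Q_2$-representations, citing \cite{MR1476671}. Your approach---take the full Kronecker trichotomy (preprojective, preinjective, regular) and discard the indecomposables on which some $A_i$ fails to be injective---is exactly the right way to extract the stated list from that classification, and your case analysis (preinjectives out by dimension count, regulars at $\lambda\in\{0,\infty\}$ out by nilpotency, everything else surviving) is accurate. One small phrasing issue: when you first introduce the regular modules you index them only by $\lambda\in\mathbb{P}^1$, whereas of course there is one for each pair $(\lambda,k)$ with $k\geq 1$; you clearly know this since your normal forms depend on $k$, but it would read more cleanly to say so up front.
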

Returning to the framed representations \eqref{simplifiesfsu2}, we apply Proposition \ref{kroneckermodule}, to decompose the portion of the representation involving the unframed nodes.  Let $R$ denote this restricted representation.

\begin{prop} \label{jordandestroy} The representations $J_{k}$ are excluded from appearing in $R$ by the relations following from the potential.
\end{prop}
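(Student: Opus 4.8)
The plan is to argue by contradiction, assuming that the restricted representation $R$ has a direct summand isomorphic to $J_k$ for some $k>1$ and deriving a conflict with the cocyclic stability conditions \eqref{su2stabres} once the potential relations are imposed.

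First I would isolate the only potential relation that plays a role here. After Proposition~\ref{Czero} every term of $\mathcal{W}$ involving a $C_i$ drops out of the remaining data, but the relation obtained by varying $C_i$ survives and reads $B_i\circ(A_1-\lambda_i A_2)=0$ for each $i=1,\dots,n$. Writing $M_i:=A_1-\lambda_i A_2\colon\mathbb{C}^r\to\mathbb{C}^s$, I note that $A_1,A_2$, and hence every $M_i$, are block-diagonal with respect to the decomposition of $R$ into indecomposables supplied by Proposition~\ref{kroneckermodule}, so this relation holds summand by summand.

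Next I would analyze a hypothetical $J_k$ summand. Call its node-$1$ and node-$2$ spaces $W_1$ and $W_2$ (both of dimension $k$); by Proposition~\ref{kroneckermodule} the map $A_2$ restricts to an isomorphism $W_1\xrightarrow{\sim}W_2$, and under this identification $A_1|_{W_1}$ is a single Jordan block with some eigenvalue $\alpha\in\mathbb{C}^\ast$. Hence $\mathrm{Im}(M_i|_{W_1})\subseteq W_2$ has dimension $k$ if $\lambda_i\neq\alpha$ and dimension $k-1$ if $\lambda_i=\alpha$, and the relation above forces $\mathrm{Im}(M_i|_{W_1})\subseteq\mathrm{Ker}(B_i)$ for all $i$. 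Then I would split into two cases according to whether $\alpha$ equals one of the $\lambda_i$. If no $\lambda_i$ equals $\alpha$, then $\mathrm{Im}(M_i|_{W_1})=W_2$ for every $i$, so $W_2\subseteq\bigcap_{i=1}^{n}\mathrm{Ker}(B_i)=0$, which is impossible since $W_2\cong\mathbb{C}^k\neq0$. If $\alpha=\lambda_j$ for the unique such index $j$ (unique because the $\lambda_i$ are distinct), then $W_2\subseteq\mathrm{Ker}(B_i)$ still for $i\neq j$, while for $i=j$ one gets only that the $(k-1)$-dimensional subspace $W_2':=\mathrm{Im}(M_j|_{W_1})$ of $W_2$ lies in $\mathrm{Ker}(B_j)$; combining these yields $W_2'\subseteq\bigcap_{i=1}^{n}\mathrm{Ker}(B_i)=0$, contradicting $\dim W_2'=k-1\geq1$. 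In either case $J_k$ cannot occur.

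I do not expect a genuine obstacle. The one point I would state carefully is that the identity $B_iM_i=0$ lives on all of $\mathbb{C}^r$, and that block-diagonality of $M_i$ is exactly what permits the summand-wise conclusion $B_i\big(\mathrm{Im}(M_i|_{W_1})\big)=0$, hence a test of the stability condition on $\mathbb{C}^s$ against the node-$2$ space of a single indecomposable. The whole argument turns on the elementary observation that a size-$k$ Jordan block minus its eigenvalue has rank $k-1$, positive exactly when $k>1$ --- precisely the feature distinguishing $J_k$ from $V_\alpha=J_1$, for which $A_1-\alpha A_2=0$ and the argument correctly does not apply.
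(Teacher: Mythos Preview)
Your proof is correct and follows essentially the same approach as the paper's: both exploit the relation $B_i\circ(A_1-\lambda_iA_2)=0$ together with the Jordan block structure to force a nonzero subspace of $W_2$ into $\bigcap_i\mathrm{Ker}(B_i)$, contradicting \eqref{su2stabres}. The only difference is packaging---you argue via ranks and images of $M_i|_{W_1}$, while the paper picks the eigenvector $v$ explicitly and then a preimage $w$ with $(A_1-\lambda_iA_2)(w)=A_2(v)$---but the underlying idea is identical.
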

\begin{proof}
Consider the relation
\begin{equation}
B_{i}\circ (A_{1}-\lambda_{i}A_{2})=0,
\end{equation}
applied to an eigenvector $v$ of the Jordan matrix $A_{1}$.  We obtain
\begin{equation}
(\alpha-\lambda_{i})B_{i}\circ A_{2}(v)=0.
\end{equation}
If the eigenvalue $\alpha$ is not equal to $\lambda_{i}$ for all $i$, then all the maps $B_{i}$ annihilate the non-zero vector $A_{2}(v)$ which violates the stability condition \eqref{su2stabres}.  Therefore, there exists an $i$ such that $\alpha =\lambda_{i},$ and $B_{i}\circ A_{2}(v)\neq 0$

Now, as $\lambda_{i}$ is equal to the eigenvalue for the Jordan matrix $A_{1},$ the map $A_{1}-\lambda_{i}A_{2}$ is a Jordan matrix with eigenvalue zero.  We may then pick a non-zero vector $w$ such that
\begin{equation}
\left(A_{1}-\lambda_{i}A_{2}\right)(w)=A_{2}(v).
\end{equation}
But then applying $B_{i}$ to the above equation we obtain
\begin{equation}
0=B_{i}\circ \left(A_{1}-\lambda_{i}A_{2}\right)(w)=B_{i}\circ A_{2}(v),
\end{equation}
which is a contradiction.
\end{proof}

As a consequence of Proposition \ref{jordandestroy}, the restricted representation $R$ can be specified by an integer, $\ell,$ with $0\leq \ell \leq r$, and a partition, $k,$ of $s-\ell$ into $s-r$ positive parts.  With $k=(k_{1}, k_{2}, \cdots, k_{s-r}).$ The decomposition is then
\begin{equation}
R=D_{k_{1}}\oplus D_{k_{2}}\oplus \cdots \oplus D_{k_{s-r}}\oplus V_{\alpha_{1}}\oplus V_{\alpha_{2}}\oplus \cdots \oplus V_{\alpha_{\ell}}. \label{decomprest}
\end{equation}

In the following, it will also be important to know the automorphism group of the above representation.  Each summand has as automorphim group $\mathbb {C}^{*}$ generated by simultaneous transformations proportional to the identity matrix at each node.  If each of the summands appearing in \eqref{decomprest} is distinct, then the automorphism group of such an $R$ is simply the product of the automorphims of each summand leading to $\left(\mathbb{C}^{*}\right)^{s-r+\ell}$.  More generally, suppose that the decomposition \eqref{decomprest} contains $a$ non-isomorphic representations each appearing with multiplicity $m_{a}$.  Then the automorphism group is enhanced to
\begin{equation}
\prod_{i=1}^{a}Gl(m_{a},\mathbb{C}), \label{isos}
\end{equation}
where in the above each factor acts non-trivially on the subspace spanned by a set of isomorphic representations.

We now apply the relations and stability conditions to determine how the remaining maps $B_{j}$ behave on the various subspaces in $R.$

\begin{prop} \label{lambdaalpha} For each $j$ there exists an $i$ such that $\alpha_{j}=\lambda_{i}$.
\end{prop}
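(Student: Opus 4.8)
The plan is to apply the potential relation \eqref{barel}, coming from $\partial\mathcal{W}/\partial C_{i}$, together with the cocyclic stability condition \eqref{su2stabres}, directly to each summand $V_{\alpha_{j}}$ occurring in the decomposition \eqref{decomprest} of the restricted representation $R$.

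First I would unwind how such a summand sits inside the ambient representation. By the classification in Proposition \ref{kroneckermodule}, the summand $V_{\alpha_{j}}$ consists of a line $\mathbb{C}u_{j}\subset\mathbb{C}^{r}$ at the first unframed node and a line $\mathbb{C}w_{j}\subset\mathbb{C}^{s}$ at the second, where we may take $w_{j}=A_{2}u_{j}$ and then $A_{1}u_{j}=\alpha_{j}w_{j}$; since $V_{\alpha_{j}}$ is a nonzero indecomposable we have $w_{j}\neq0$. I would then evaluate the relation $B_{i}\circ(A_{1}-\lambda_{i}A_{2})=0$ on the vector $u_{j}$. Because the decomposition \eqref{decomprest} is a direct sum of subrepresentations, $A_{1}u_{j}$ and $A_{2}u_{j}$ both lie in the line $\mathbb{C}w_{j}$, and $(A_{1}-\lambda_{i}A_{2})u_{j}=(\alpha_{j}-\lambda_{i})w_{j}$. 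Hence $(\alpha_{j}-\lambda_{i})\,B_{i}(w_{j})=0$ for every $i=1,\dots,n$.

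Finally, suppose toward a contradiction that $\alpha_{j}\neq\lambda_{i}$ for all $i$. Then the displayed identity forces $B_{i}(w_{j})=0$ for every $i$, so $w_{j}\in\bigcap_{i=1}^{n}\mathrm{Ker}(B_{i})$. Since $w_{j}\neq0$, this contradicts the stability condition \eqref{su2stabres}; therefore $\alpha_{j}=\lambda_{i}$ for some $i$, as claimed. I do not anticipate a genuine obstacle here: the argument is a one-line consequence of the relations and stability. The only points needing care are to identify the generators of the summand $V_{\alpha_{j}}$ correctly inside the full representation so that the global relation \eqref{barel} can be evaluated on $u_{j}$, and to note that it is precisely the $\partial_{C_{i}}$ relation \eqref{barel} — not the companion $\partial_{B_{i}}$ relation \eqref{carel} — that produces the required vanishing.
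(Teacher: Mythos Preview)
Your proof is correct and follows essentially the same argument as the paper, which simply says ``Proceed as in Proposition~\ref{jordandestroy}'': apply the relation $B_i\circ(A_1-\lambda_iA_2)=0$ to a generator of $V_{\alpha_j}$ to obtain $(\alpha_j-\lambda_i)B_i(w_j)=0$, and conclude via the stability condition \eqref{su2stabres}. Your write-up is in fact more explicit than the paper's one-line proof, correctly identifying the relevant vectors in the summand and the precise relation used.
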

\begin{proof}
Proceed as in Proposition \ref{jordandestroy}.
\end{proof}

Note that as a corollary we also see that the maps $B_{i}$ vanishes on $V_{\alpha_{j}}$ unless $\lambda_{i}=\alpha_{j}$.  Further, as each $B_{i}$ has rank one, we also learn that the $\alpha_{j}$ are distinct.

The above analysis solves the constraints implied by the potential on the subspace of \eqref{decomprest} spanned by the $V_{\alpha_{a}}$.  We now solve the relations on the space spanned by each $D_{k_{q}}$ appearing in $R$.  To describe the result it is useful to pick explicit bases for each of the vector spaces in the representation $D_{k_{q}}.$  Thus let $e^{q}_{1}, \cdots, e^{q}_{k_{q}-1}$ be a basis for the domain of $D_{k_{q}},$ and let $f^{q}_{1}, \cdots, f^{q}_{k_{q}}$ be a basis for the range.   According to Proposition \ref{kroneckermodule} we may assume
\begin{equation}
A_{1}(e^{q}_{j})=f^{q}_{j}, \hspace{.5in}A_{2}(e^{q}_{j})=f^{q}_{j+1}. \label{explicitbasisef}
\end{equation}

\begin{prop} \label{recursive}
Applied to the representation $D_{k_{q}},$ the map $B_{i}$ is completely determined by its value, $b_{i}^{q}$, on the vector $f_{k_{q}}^{q}$.
\end{prop}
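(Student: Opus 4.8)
The plan is to read off the action of $B_i$ on all of $D_{k_q}$ directly from the single relation \eqref{barel}, namely $B_i \circ (A_1 - \lambda_i A_2) = 0$, combined with the explicit normal form \eqref{explicitbasisef} for $A_1$ and $A_2$ on the chosen bases. Note that the companion relation \eqref{carel} plays no role, since the maps $C_i$ vanish identically by Proposition \ref{Czero}, so that on these representations \eqref{barel} is the only constraint involving $B_i$.

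First I would apply \eqref{barel} to the basis vector $e_j^q$ of the domain of $D_{k_q}$, for each $j = 1, \ldots, k_q - 1$. Using $A_1(e_j^q) = f_j^q$ and $A_2(e_j^q) = f_{j+1}^q$ from \eqref{explicitbasisef}, this yields $B_i(f_j^q) - \lambda_i B_i(f_{j+1}^q) = 0$, i.e.\ the downward recursion $B_i(f_j^q) = \lambda_i\, B_i(f_{j+1}^q)$. Iterating from $j = k_q - 1$ down to $j = 1$ then gives $B_i(f_j^q) = \lambda_i^{\,k_q - j}\, b_i^q$ for all $j$, where $b_i^q := B_i(f_{k_q}^q)$. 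Since $f_1^q, \ldots, f_{k_q}^q$ is a basis of the range of $D_{k_q}$ and $\lambda_i$ is a fixed nonzero scalar, the restriction of $B_i$ to the summand $D_{k_q}$ is completely determined by the single vector $b_i^q$, which is exactly the assertion.

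I do not expect any genuine obstacle here: the proposition is a bookkeeping consequence of one relation together with the chosen normal form, and the only point requiring care is using \eqref{barel} in the correct order (the composite $B_i \circ (A_1 - \lambda_i A_2)$, with $B_i$ \emph{postcomposed}), which is what the superpotential derivative $\partial \mathcal{W}/\partial C_i$ produces. This identification of $B_i$ on each $D_{k_q}$ by its value $b_i^q$ at the top of the basis is precisely what is needed to parametrize the moduli space $\mathcal{M}^n_{r,s}$ in the subsequent analysis.
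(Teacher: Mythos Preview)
Your proof is correct and follows essentially the same approach as the paper: both apply the relation $B_i\circ(A_1-\lambda_i A_2)=0$ to the basis vectors $e_j^q$, use the normal form \eqref{explicitbasisef} to obtain the recursion $B_i(f_j^q)=\lambda_i B_i(f_{j+1}^q)$, and iterate to conclude $B_i(f_j^q)=\lambda_i^{k_q-j}b_i^q$. Your version is slightly more explicit in noting that $\lambda_i\neq 0$ and that \eqref{carel} is irrelevant because $C_i=0$, but the argument is the same.
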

\begin{proof}
Again we examine the equation
\begin{equation}
B_{i}\circ (A_{1}-\lambda_{i}A_{2})=0.
\end{equation}
Applied to a vector $e^{q}_{j}$ and using \eqref{explicitbasisef} we find
\begin{equation}
B_{i}(f^{q}_{j})=\lambda_{i}B_{i}(f^{q}_{j+1}),
\end{equation}
which has as general solution
\begin{equation}
B_{i}(f_{j}^{q})=b_{i}^{q}\lambda_{i}^{k_{q}-j}.
\end{equation}
\end{proof}

Propositions \ref{lambdaalpha} and \ref{recursive} along with the decomposition \eqref{decomprest} completely solve the relations implied by the potential as well as the stability condition which constrains both $A_{1}$ and $A_{2}$ to be injective.  The remaining restriction to study is that $\cap_{i}\mathrm{Ker}(B_{i})=0.$

Consider the decomposition \eqref{decomprest}.  According to the argument given for Proposition \ref{lambdaalpha} for each $V_{\alpha_{j}}$ there exists a unique $B_{i}$  which is nonzero on $ V_{\alpha_{j}}.$  Further, we may use the $\mathbb{C}^{*}$ isomorphism factor to scale the value of $B_{i}|_{V_{\alpha_{j}}}$ to unity.  As we must make a choice of assignment $j \rightarrow i$, we obtain $\binom{n}{\ell}$ distinct configurations of the $B$'s restricted to the direct sum of $V$'s.

Next examine the partition $k=(k_{1}, \cdots, k_{s-r})$ appearing in \eqref{decomprest}. Suppose that the partition contains $c$ distinct factors with multiplicities $m_{1}, \cdots, m_{c}$.  We consider the maps $B_{i}$ restricted to the subspace $D_{k_{1}}\oplus D_{k_{2}}\oplus \cdots \oplus D_{k_{m}}$ where $k_{1}=k_{2}=\cdots k_{m}$ and no other $k_{i}$ appearing in the partition $k$ are equal to $k_{1}.$ Via Proposition \ref{recursive} we know that the map $B_{i}$ on this subspace is determined by the numbers $b_{i}^{q}$ where $q$ ranges from $1$ to $m,$ the multiplicity in question.  A necessary condition for stability is that the $m \times n$ matrix with entries $b_{i}^{q}$ must have maximal rank $m$.  As described in \eqref{isos}, the automorphism group of $D_{k_{1}}\oplus D_{k_{2}}\oplus \cdots \oplus D_{k_{m}}$ is $Gl(m,\mathbb{C})$ and it acts on the matrix $b_{i}^{q}$ via standard matrix multiplication.  The quotient space of this action is the Grassmannian $Gr(m,n).$  Applying this result to the
entire partition, we have proved the following:
\begin{prop}\label{thmhiggs}
The set of stable representations whose restriction to the Kronecker subquiver appears as  in \eqref{decomprest} is naturally embedded inside the space
\begin{equation}
X_{k,\ell}=Gr(m_{1},n)\times Gr(m_{2},n) \times \cdots \times Gr(m_{c},n)\times \{p_{1}, p_{2}, \cdots , p_{\binom{n}{\ell}}\}, \label{prodgrass}
\end{equation}
where in the above $p_{i}$ is a point, and the partition $k$ contains $c$ distinct elements each with multiplicities $m_{1}, \cdots, m_{c}.$  Each point in $X_{k,\ell}$ solves the relations imposed by the potential, solves the injectivity constraint on the maps $A_{i}$ and has
\begin{equation}
\left(\bigcap_{i=1}^{n}\mathrm{Ker}(B_{i})\right)\bigcap\left(V_{\alpha_{1}}\oplus \cdots \oplus V_{\alpha_{\ell}}\right)=0, \hspace{.5in}\left(\bigcap_{i=1}^{n}\mathrm{Ker}(B_{i})\right)\bigcap\left(\langle f_{k_{d}}^{1},f_{k_{d}}^{2}, \cdots,f_{k_{d}}^{m_{d}} \rangle\right)=0.
\end{equation}
\end{prop}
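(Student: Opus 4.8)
We are given a stable framed representation of the quiver \eqref{framedwilson} whose restriction $R$ to the Kronecker subquiver is the direct sum \eqref{decomprest}. By Proposition~\ref{Czero} the maps $C_i$ vanish, so the only data left to describe are the functionals $B_i\colon \mathbb{C}^s\to\mathbb{C}$, and among the relations \eqref{abrel}--\eqref{carel} only \eqref{barel} is non-vacuous; by Proposition~\ref{Ainj} the $A_j$ are injective. The plan is to use the gauge group $Gl(r)\times Gl(s)$ to put $R$ in the normal form of Proposition~\ref{kroneckermodule} with each summand in its canonical presentation; the residual symmetry is then $\mathrm{Aut}(R)$, whose reductive part is $L=\left(\prod_{d=1}^{c}Gl(m_d,\mathbb{C})\right)\times(\mathbb{C}^{*})^{\ell}$, the $Gl(m_d,\mathbb{C})$ factor mixing the $m_d$ isomorphic copies of $D_{k_d}$ and the $\ell$ factors $\mathbb{C}^{*}$ rescaling the pairwise non-isomorphic summands $V_{\alpha_1},\dots,V_{\alpha_\ell}$. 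Propositions~\ref{lambdaalpha} and \ref{recursive} (together with the remark following the former) record exactly how relation \eqref{barel} constrains the $B_i$ on each summand: on $V_{\alpha_a}$ only $B_{i(a)}$ with $\lambda_{i(a)}=\alpha_a$ is nonzero, the indices $i(a)$ are pairwise distinct, and $B_{i(a)}|_{V_{\alpha_a}}$ is a scalar; on $D_{k_q}$ the restriction of $B_i$ is determined by the single number $b^{q}_i=B_i(f^{q}_{k_q})$ via $B_i(f^{q}_j)=b^{q}_i\lambda_i^{k_q-j}$.

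Given this normal form I would then construct the map into $X_{k,\ell}$. After using the factors $\mathbb{C}^{*}$ to rescale each nonzero scalar $B_{i(a)}|_{V_{\alpha_a}}$ to $1$, the data attached to the $V$-summands is just the choice of the $\ell$-element subset $\{\alpha_1,\dots,\alpha_\ell\}\subseteq\{\lambda_1,\dots,\lambda_n\}$, which is intrinsic to $R$ by Krull--Schmidt and gives one of the $\binom{n}{\ell}$ points $p_1,\dots,p_{\binom{n}{\ell}}$. For each distinct Kronecker length $k_d$ with multiplicity $m_d$, the numbers $b^{q}_i$ assemble into an $m_d\times n$ matrix $b^{(d)}$ on which the automorphism factor $Gl(m_d,\mathbb{C})$ acts by left multiplication (change of basis of the isotypic block), so the row space of $b^{(d)}$ is a well-defined point of $Gr(m_d,n)$ provided $b^{(d)}$ has full rank $m_d$; and full rank is precisely the restriction of the stability condition $\bigcap_i\mathrm{Ker}\,B_i=0$ to the span $\langle f^{1}_{k_d},\dots,f^{m_d}_{k_d}\rangle$, on which $B_i(\sum_q v_q f^{q}_{k_d})=\sum_q v_q b^{q}_i$. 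Collecting these data yields the asserted assignment into $X_{k,\ell}$, and it is injective because, with $R$ and the $A_j$ fixed and $C_i=0$, the framed representation is recovered from its image up to the action of $\mathrm{Aut}(R)$. Conversely, a point of $X_{k,\ell}$ consists of a choice of $\ell$-subset together with a full-rank $b^{(d)}$ for each $d$; reconstructing the $B_i$ from these by the Proposition~\ref{recursive} formula and the normalization $B_{i(a)}|_{V_{\alpha_a}}=1$ produces a representation that satisfies \eqref{barel} by construction, solves \eqref{abrel} and \eqref{carel} trivially, has $A_1,A_2$ injective (canonical forms), and obeys the two displayed kernel conditions --- the first since the $B_{i(a)}$ separate the $v_{\alpha_a}$ (the $\alpha_a$, hence the $i(a)$, being distinct), the second since each $b^{(d)}$ has trivial left kernel.

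The main obstacle will be controlling $\mathrm{Aut}(R)$, and in particular its unipotent radical, which comes from the nonzero homomorphisms between distinct indecomposable summands of $R$ (for the Kronecker quiver there are nontrivial maps $D_{k'}\to D_{k}$ and $D_{k}\to V_{\alpha}$): one has to check that these act through $L$ on the invariants extracted above, or can be absorbed into a further normalization, so that the Grassmannian identification and the injectivity genuinely hold. A convenient route is the standard fact that for a cocyclic --- hence stable --- framed representation the $Gl(r)\times Gl(s)$-stabilizer is trivial, so $\mathrm{Aut}(R)$ acts freely on the compatible stable $B_i$'s, after which one reads off the image inside $X_{k,\ell}$ directly. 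This is also why the statement asserts only an \emph{embedding}: the stable locus sits inside $X_{k,\ell}$ as the subvariety cut out both by this unipotent normalization and by the leftover part of the condition $\bigcap_i\mathrm{Ker}\,B_i=0$ --- the part constraining vectors supported on the non-top basis vectors $f^{q}_j$ with $j<k_q$ --- which is not implied by full rank of the $b^{(d)}$ alone.
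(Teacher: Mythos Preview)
Your proposal is correct and follows essentially the same approach as the paper: both reduce via Propositions~\ref{Czero}--\ref{recursive}, normalize the $V_{\alpha}$-summands using the $\mathbb{C}^{*}$ factors to obtain the discrete set of $\binom{n}{\ell}$ points, and then mod out the full-rank matrices $b^{(d)}$ by the $Gl(m_d,\mathbb{C})$ isotypic automorphisms to land in the product of Grassmannians. You are in fact more careful than the paper in flagging the unipotent radical of $\mathrm{Aut}(R)$ (coming from $\mathrm{Hom}$'s between distinct indecomposables), which the paper suppresses by recording only the reductive part of the automorphism group; since the proposition only claims an embedding and is used downstream solely for the torus-fixed-point count, this subtlety does not affect the conclusion.
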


The above result does not completely specify the moduli space of stable quiver representations. However combined with a localization argument it does provide us with enough information to extract the cohomology.

Consider the scaling which acts on the maps $B_{i}$ as $B_{i}\rightarrow \beta_{i}B_{i}$ where the $n$-tuple $(\beta_{1},\cdots , \beta_{n})$ lies in the algebraic torus $\left(\mathbb{C}^{*}\right)^{n}.$  This action preserves the relations $B_{i}\circ(A_{1}-\lambda_{i}A_{2})=0,$ and hence acts on the moduli space.  We extract the Euler characteristic of the moduli space by summing over fixed points.

The summation over fixed points is facilitated by observing the the torus in question acts on $X_{k,\ell}$ preserving each factor and further that the action on each Grassmannian is standard.  Thus, consider an integer $k_{1}$ occurring in the partition with multiplicity $m$ leading to a $Gr(m,n)$ factor in \eqref{prodgrass}.  A fixed point in this factor is given by choosing an $m\times m$ minor of the matrix $b_{i}^{q}$ and setting this minor to be the identity matrix with all other entries vanishing.  For instance
\begin{equation}
\left(\begin{array}{ccccccc}
b_{1}^{1} & b_{2}^{1} & \cdots& b_{m}^{1} &  b_{m+1}^{1}&\cdots & b_{n}^{1} \\
b_{1}^{2} & b_{2}^{2} & \cdots& b_{m}^{2} & b_{m+1}^{1}& \cdots & b_{n}^{2} \\
\vdots & \vdots & \ddots & \vdots & \vdots &\ddots &  \vdots \\
b_{1}^{m} & b_{2}^{m} & \cdots& b_{m}^{m} & b_{m+1}^{1}& \cdots & b_{n}^{m} \\
\end{array}\right)=
\left(\begin{array}{ccccccc}
1& 0& \cdots&0&  0&\cdots & 0 \\
0 & 1 & \cdots&0 & 0& \cdots &0 \\
\vdots & \vdots & \ddots & \vdots & \vdots &\ddots &  \vdots \\
0&0 & \cdots& 1& 0& \cdots & 0 \\
\end{array}\right),
\label{fixedex}
\end{equation}
is a fixed point.

Suppose at the fixed point above that $k_{1}>1.$  Then we claim that the maps $B_{i}$ do not satisfy the stability condition \eqref{su2stabres}.  Indeed, consider the summand in question
\begin{equation}
D_{k_{1}}\oplus D_{k_{2}} \oplus \cdots \oplus D_{k_{m}},
\end{equation}
where $k_{1}=k_{2}=\cdots k_{m}$.  At the fixed point \eqref{fixedex} we have
\begin{equation}
\mathrm{Ker}(B_{i})\supseteq\begin{cases}
D_{k_{1}}\oplus D_{k_{2}} \oplus  \cdots \oplus D_{k_{i-1}}\oplus S_{i}\oplus D_{k_{i+1}} \oplus D_{k_{m}} & \mathrm{If} \ i\leq m,\\
D_{k_{1}}\oplus D_{k_{2}} \oplus \cdots \oplus D_{k_{m}} & \mathrm{If} \ i>m.
\end{cases} \label{kertoobig}
\end{equation}
In the above $S_{i}$ is a codimension one subspace of $D_{k_{i}}$ and has non-zero dimension if $k_{1}>1$.  From \eqref{kertoobig} we learn that
\begin{equation}
\bigcap_{i=1}^{n}\mathrm{Ker}(B_{i}) \supseteq S_{1} \oplus S_{2} \oplus \cdots \oplus S_{m} \neq 0,
\end{equation}
violating stability.  From this argument we deduce:

\begin{prop} \label{simplefixed}
Fixed points in the moduli space may only occur when all $k_{i}$ in the partition $k$ are one.  Hence, the decomposition \eqref{decomprest} simplifies to
\begin{equation}
R=\underbrace{D_{1}\oplus D_{1}\oplus \cdots \oplus D_{1}}_{s-r}\oplus V_{\alpha_{1}}\oplus V_{\alpha_{2}}\oplus \cdots \oplus V_{\alpha_{r}}. \label{decompsimp}
\end{equation}
\end{prop}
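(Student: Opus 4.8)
The plan is to pin down the torus fixed points using the explicit model of Proposition~\ref{thmhiggs} together with the normal form for the maps $B_i$ supplied by Proposition~\ref{recursive}. First I would observe that the $(\mathbb{C}^*)^n$ action $B_i \mapsto \beta_i B_i$ preserves each Grassmannian factor $Gr(m_d, n)$ appearing in $X_{k,\ell}$ of \eqref{prodgrass} and acts there as the standard diagonal torus. Consequently a fixed point in such a factor is, by the usual description of torus-fixed points of a Grassmannian, a coordinate $m_d$-plane; concretely, representing it by the matrix $(b_i^q)$ of Proposition~\ref{recursive} (well-defined up to the $Gl(m_d,\mathbb{C})$ automorphism factor of \eqref{isos}), there is a subset $I \subset \{1,\dots,n\}$ of size $m_d$ and a bijection $I \to \{1,\dots,m_d\}$, $i \mapsto q(i)$, with $b_i^q = \delta_{q,q(i)}$ for $i \in I$ and $b_i^q = 0$ for $i \notin I$.

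Next I would convert this into a kernel estimate. Fix a block $D_{k_1}\oplus\dots\oplus D_{k_{m_d}}$ of mutually isomorphic summands of $R$, all of dimension vector $(K-1,K)$, and assume toward a contradiction that $K > 1$. By Proposition~\ref{recursive} the restriction of $B_i$ to $D_{k_q}$ is determined by $b_i^q$ via $B_i(f_j^q) = b_i^q \lambda_i^{k_q - j}$; hence $B_i$ vanishes on $D_{k_q}$ when $b_i^q = 0$, and when $b_i^q \neq 0$ (using $\lambda_i \neq 0$) the map $B_i|_{D_{k_q}}$ is a nonzero functional, so its kernel is a codimension-one, hence nonzero, subspace. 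At the fixed point above this means: for $i \notin I$ the whole block lies in $\mathrm{Ker}(B_i)$, and for $i \in I$ the subspace $\bigoplus_{q \neq q(i)} D_{k_q}$ together with the codimension-one kernel of $B_i|_{D_{q(i)}}$ lies in $\mathrm{Ker}(B_i)$. Intersecting over all $i$ therefore gives $\bigcap_{i=1}^n \mathrm{Ker}(B_i) \supseteq \bigoplus_{q=1}^{m_d} \mathrm{Ker}(B_{i(q)}|_{D_{k_q}})$, a subspace of dimension $m_d(K-1) > 0$.

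This contradicts the stability requirement $\bigcap_i \mathrm{Ker}(B_i) = 0$ from \eqref{su2stabres}, so no summand $D_{k_q}$ with $k_q > 1$ can occur at a fixed point; every $k_q$ equals $1$. Combined with Proposition~\ref{jordandestroy}, which already excludes the $J_k$, and the decomposition \eqref{decomprest} --- together with the bookkeeping that $s-r$ copies of $D_1$ (dimension vector $(0,1)$) force the $V$-multiplicity to be exactly $r$ --- this yields the claimed form \eqref{decompsimp}. The step I expect to need the most care is the first one: making sure that, after the $Gl(m_d,\mathbb{C})$ and residual $\mathbb{C}^*$ reductions, the torus fixed point really does drive $(b_i^q)$ to the standard coordinate pattern simultaneously across all blocks and across the point factor $\{p_1,\dots,p_{\binom{n}{\ell}}\}$ of \eqref{prodgrass}, so that the kernel computation and the dimension count above are unambiguous.
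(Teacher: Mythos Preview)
Your proposal is correct and follows essentially the same route as the paper: identify the torus action on each Grassmannian factor of $X_{k,\ell}$ as the standard one, so that fixed points are coordinate $m_d$-planes in the matrix $(b_i^q)$; then, assuming some $k_q>1$, read off from Proposition~\ref{recursive} that at such a fixed point each $B_i$ either kills the entire block or has a codimension-one kernel on one summand, forcing $\bigcap_i \mathrm{Ker}(B_i)\neq 0$ and contradicting \eqref{su2stabres}. Your bookkeeping and the final reduction to \eqref{decompsimp} match the paper's argument, and the caution you flag about the coordinate form of the fixed point is exactly the step the paper treats by asserting the standard fixed-point description of $Gr(m,n)$.
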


Finally, we may conclude the calculation by counting fixed points directly.  We first pick $r$ of the $B_{i}$ to evaluate nontrivially on the $V_{\alpha_{i}}$ in accordance with Proposition \ref{lambdaalpha}.  Next, we must pick $s-r$ of the $B_{i}$ and set the matrix $b_{i}^{q}$ to be of the form \eqref{fixedex} with the non-degenerate minor spanned by the chosen $B_{i}$.  It is easy to see that there can be no overlap between the two sets of $B_{i}'$s.  Indeed, as in Proposition \ref{simplefixed} if there is an overlap with say $B_{i}$ having non-zero restriction to both a $V_{\alpha_{j}}$ and a $D_{1}$ summand in \eqref{decompsimp}, then the kernel of $B_{i}$ is a one-dimensional subspace contained in $V_{\alpha_{j}} \oplus D_{1},$ on which all other $B_{j}'$s also vanish.   Summing over contributions thus yields the following result.

\begin{thm}\label{eulercalc}
The Euler characteristic of the moduli space of stable, flat quiver representations of is given by
\begin{equation}
\chi(\mathcal{M}_{r,s}^{n})=\binom{n}{r}\binom{n-r}{s-r}.
\end{equation}
\end{thm}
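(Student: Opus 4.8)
The plan is to compute $\chi(\mathcal{M}_{r,s}^{n})$ by torus localization with respect to the algebraic torus $T=(\mathbb{C}^{*})^{n}$ that rescales the arrows, $B_{i}\mapsto\beta_{i}B_{i}$. As observed just above, this action preserves the relations coming from $\mathcal{W}$ and the stability conditions \eqref{su2stabres}, hence descends to $\mathcal{M}_{r,s}^{n}$. Since $\mathcal{M}_{r,s}^{n}$ is a complex quasi-projective variety equipped with an algebraic $T$-action, its topological Euler characteristic equals that of the fixed locus, $\chi(\mathcal{M}_{r,s}^{n})=\chi\big((\mathcal{M}_{r,s}^{n})^{T}\big)$; I would then argue that the fixed locus is a finite set of points, so that $\chi$ simply counts $T$-fixed points.

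The next step is to identify the $T$-fixed points using Propositions \ref{Ainj}--\ref{thmhiggs}. By Propositions \ref{Ainj}, \ref{Czero} and \ref{kroneckermodule} one reduces to representations of the shape \eqref{simplifiesfsu2}, and by Proposition \ref{thmhiggs} a stable such representation has Kronecker restriction of the form \eqref{decomprest}, $R=D_{k_{1}}\oplus\cdots\oplus D_{k_{s-r}}\oplus V_{\alpha_{1}}\oplus\cdots\oplus V_{\alpha_{\ell}}$, and sits inside a product of Grassmannians $X_{k,\ell}$ on which $T$ acts factorwise and standardly. A point of such a product is $T$-fixed only if, in each factor $Gr(m_{d},n)$, the associated minor of $(b_{i}^{q})$ is a coordinate minor; but the argument of Proposition \ref{simplefixed} shows that whenever some part $k_{d}>1$ this configuration has $\bigcap_{i}\mathrm{Ker}(B_{i})\neq 0$, contradicting stability. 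Hence every $T$-fixed point has all $k_{i}=1$, forcing $\ell=r$ and $R=D_{1}^{\oplus(s-r)}\oplus V_{\alpha_{1}}\oplus\cdots\oplus V_{\alpha_{r}}$ as in \eqref{decompsimp}.

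Finally I would enumerate these fixed points. By Proposition \ref{lambdaalpha} and the remark following it, the values $\alpha_{1},\dots,\alpha_{r}$ are distinct elements of $\{\lambda_{1},\dots,\lambda_{n}\}$, and for each $V_{\alpha_{j}}$ the unique map $B_{i}$ with $\lambda_{i}=\alpha_{j}$ acts nontrivially, with value normalized to $1$ by the $\mathbb{C}^{*}$ automorphism; this contributes a factor $\binom{n}{r}$. On the block $D_{1}^{\oplus(s-r)}$ the remaining data is an $(s-r)\times n$ matrix $(b_{i}^{q})$ of maximal rank modulo $Gl(s-r,\mathbb{C})$, i.e.\ a point of $Gr(s-r,n)$, whose $T$-fixed points are the $(s-r)$-element coordinate subsets of $\{1,\dots,n\}$; if such an index $i$ coincided with one of the $r$ indices already used on the $V$-block, then, exactly as in Proposition \ref{simplefixed}, $\mathrm{Ker}(B_{i})$ would contain a line inside $V_{\alpha_{j}}\oplus D_{1}$ annihilated by every $B_{i'}$, violating stability. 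So the subset must avoid the $r$ chosen indices, giving $\binom{n-r}{s-r}$ further choices, independently of the first. Multiplying yields $\chi(\mathcal{M}_{r,s}^{n})=\binom{n}{r}\binom{n-r}{s-r}$.

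The main obstacle I anticipate is the rigorous justification that $\chi(\mathcal{M}_{r,s}^{n})$ equals the number of $T$-fixed points: Proposition \ref{thmhiggs} only realizes the stable locus as a subvariety of $X_{k,\ell}$ rather than describing $\mathcal{M}_{r,s}^{n}$ outright, so one must check that this moduli space is tame enough (quasi-projective, or admits a $T$-equivariant stratification) for the identity $\chi(M)=\chi(M^{T})$ to hold, and in particular that the fixed points located above are genuinely isolated and not hidden inside positive-dimensional fixed components of the incompletely described moduli space.
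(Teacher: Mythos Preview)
Your proposal is correct and follows essentially the same approach as the paper: both use the $(\mathbb{C}^{*})^{n}$-action rescaling the $B_{i}$, invoke Propositions \ref{thmhiggs} and \ref{simplefixed} to reduce the fixed locus to the case $R=D_{1}^{\oplus(s-r)}\oplus V_{\alpha_{1}}\oplus\cdots\oplus V_{\alpha_{r}}$, and then count by choosing $r$ indices for the $V$-summands and a disjoint set of $s-r$ indices for the coordinate minor on the $D_{1}$-block, with the same kernel argument ruling out overlaps. The paper does not address your stated concern about the tameness of $\mathcal{M}_{r,s}^{n}$ any more carefully than you do; it simply asserts the localization and proceeds.
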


In fact, with only slightly more work we obtain a more detailed picture of the cohomology.

\begin{thm} \label{algebraic}
The cohomology of $\mathcal{M}^{n}_{r,s}$ is all supported along the middle of the Hodge decomposition.  In other words, $h^{p,q}(\mathcal{M}^{n}_{r,s})$ vanishes if $p$ is different from $q$.
\end{thm}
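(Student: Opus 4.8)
The plan is to upgrade the fixed-point enumeration underlying Theorem \ref{eulercalc} into an honest affine cell decomposition of $\mathcal{M}^{n}_{r,s}$ by means of the Bia\l{}ynicki--Birula stratification, and then to invoke the standard fact that a smooth projective variety admitting an affine paving has cohomology of pure Hodge--Tate type, i.e. concentrated on the diagonal $p=q$.

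First I would pin down the fixed locus of the algebraic torus action $B_{i}\mapsto\beta_{i}B_{i}$, $(\beta_{1},\dots,\beta_{n})\in(\mathbb{C}^{*})^{n}$, introduced above. By Propositions \ref{Ainj}, \ref{Czero}, \ref{lambdaalpha}, \ref{recursive} and \ref{simplefixed}, a fixed representation has all $C_{i}=0$, both $A_{i}$ injective, and restricts on the Kronecker subquiver as in \eqref{decompsimp}, namely $R=D_{1}^{\oplus(s-r)}\oplus V_{\alpha_{1}}\oplus\cdots\oplus V_{\alpha_{r}}$ with the $\alpha_{j}$ distinct elements of $\{\lambda_{1},\dots,\lambda_{n}\}$; since the only $(\mathbb{C}^{*})^{n}$-fixed points of each Grassmannian factor $Gr(m_{d},n)$ occurring in \eqref{prodgrass} are its coordinate points, such a representation is rigid. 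Thus the fixed locus is precisely the finite set of the $\binom{n}{r}\binom{n-r}{s-r}$ points already counted; call them $F_{1},\dots,F_{N}$. Granting that $\mathcal{M}^{n}_{r,s}$ is a smooth projective variety --- consistent with the standing assumption on quiver moduli (cf. the footnote after \eqref{indexform}); concretely it is a GIT quotient by a product of general linear groups of the smooth locus cut out of the space of maps $A_{i},B_{i},C_{i}$ by the generic relations \eqref{abrel}--\eqref{carel}, with the cocyclic stability condition, for which stability and semistability coincide --- I would then choose a generic one-parameter subgroup $\mathbb{C}^{*}\hookrightarrow(\mathbb{C}^{*})^{n}$ with the same fixed points.

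Next I would apply Bia\l{}ynicki--Birula: the flow of this $\mathbb{C}^{*}$ yields a decomposition $\mathcal{M}^{n}_{r,s}=\bigsqcup_{a=1}^{N}W_{a}$ into locally closed subvarieties with $W_{a}\cong\mathbb{C}^{d_{a}}$, where $d_{a}=\dim_{\mathbb{C}}T^{+}_{F_{a}}\mathcal{M}^{n}_{r,s}$ is the dimension of the positive-weight part of the tangent space, and the $W_{a}$ can be ordered so that each closure is a union of lower strata. A smooth projective variety with such an affine paving has $H^{*}$ torsion-free, supported in even degrees, and spanned by the fundamental classes $[\overline{W}_{a}]$; each $\overline{W}_{a}$ is a closed algebraic subvariety, hence $[\overline{W}_{a}]\in H^{c_{a},c_{a}}$ with $c_{a}=\dim_{\mathbb{C}}\mathcal{M}^{n}_{r,s}-d_{a}$. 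This gives $h^{p,q}(\mathcal{M}^{n}_{r,s})=0$ whenever $p\neq q$, which is Theorem \ref{algebraic}; as a byproduct the Poincar\'e polynomial equals $\sum_{a}t^{2d_{a}}$, which refines Theorem \ref{eulercalc} (to which it specializes at $t=1$).

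The main obstacle I anticipate is the first step: verifying that the $(\mathbb{C}^{*})^{n}$-fixed locus is genuinely zero-dimensional --- i.e. that no tangent weight vanishes so as to produce a positive-dimensional fixed component --- and that $\mathcal{M}^{n}_{r,s}$ is indeed smooth and projective so that Bia\l{}ynicki--Birula is available. Should a positive-dimensional fixed component appear, my fallback is that by Proposition \ref{thmhiggs} it is itself a product of Grassmannians and hence Hodge--Tate, so one may run the relative Bia\l{}ynicki--Birula decomposition over these components --- each stratum being an affine-space bundle over a Hodge--Tate base --- and reach the same conclusion.
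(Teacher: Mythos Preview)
Your proposal is correct and follows essentially the same strategy as the paper: both exploit the torus action $B_i\mapsto\beta_iB_i$ with its finite fixed locus and invoke Bia\l{}ynicki--Birula/Morse theory to conclude the cohomology is Hodge--Tate. The only cosmetic difference is in the final step: the paper observes that $\dim\mathcal{M}^{n}_{r,s}=(s-r)(n-s+r)=\dim Gr(s-r,n)$, so that near each fixed point the tangent weight decomposition literally coincides with that of the standard torus action on $Gr(s-r,n)$, and then borrows the Hodge--Tate conclusion from the Grassmannian; you instead run the abstract affine-paving argument directly, which is slightly more self-contained and does not rely on the dimension coincidence.
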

\begin{proof}
Each fixed point in the analysis above contributes a given cohomology class in the cohomology of $\mathcal{M}^{n}_{r,s}$.  The Hodge bidegree of this class is determined by examining the tangent space to $\mathcal{M}^{n}_{r,s}$ at the the fixed point.  This vector space is a representation of the localizing torus $\C^{n}$ and the weight decomposition of this representation with respect to a generic element of the torus determines the Hodge decomposition via a standard argument from Morse theory.

Now observe that the moduli space $\mathcal{M}^{n}_{r,s}$ has complex dimension
\begin{equation}
\mathrm{dim}\left(\mathcal{M}^{n}_{r,s}\right) \equiv d(n,r,s)=(s-r)(n-s+r), \label{dimform}
\end{equation}
and this dimension coincides with that of $Gr(s-r,n).$  From Propositions  \ref{thmhiggs} and \ref{simplefixed}, we know that the locus in $\mathcal{M}^{n}_{r,s}$
which contains fixed points splits into a number of connected components, each of which is embedded in  $Gr(s-r,n).$  Moreover the action of the localizing torus on $\mathcal{M}^{n}_{r,s}$ is the same as the standard torus action on the Grassmannian.

It follows from this analysis that the local structure of the torus fixed points in $\mathcal{M}^{n}_{r,s}$, in particular the weight decomposition of the tangent space, agrees with that of the same fixed point when it is viewed as a point of the Grassmannian.  Since the Grassmannian has cohomology only along the middle of the Hodge decomposition, we conclude that $\mathcal{M}^{n}_{r,s}$ does as well.
\end{proof}

From Theorems \ref{eulercalc} and \ref{algebraic} we now pass to an explicit formula for the framed degeneracies.  Via a specialization of \eqref{indexform} we have
\begin{equation}
\fro_{\mathrm{Higgs}}\left(W_{n},\left(r-\frac{n}{2}\right)\gamma_{1}+\left(s-\frac{n}{2}\right)\gamma_{2},y=1\right)=\binom{n}{r}\binom{n-r}{s-r}. \label{wilsonhiggsspec}
\end{equation}
We assemble these degeneracies into a commutative generating functional
\begin{equation}
F(W_{n}^{\mathrm{Higgs}})= \sum_{s=0}^{n}\sum_{r=0}^{s}\binom{n}{r}\binom{n-r}{s-r}X_{\left(r-\frac{n}{2}\right)\gamma_{1}+\left(s-\frac{n}{2}\right)\gamma_{2}}. \label{wilsonhiggsfunc}
\end{equation}
These formulae are the final results of the Higgs branch calculation.

From \eqref{wilsonhiggsspec} and \eqref{wilsonhiggsfunc} one may readily see that the Higgs branch has produced an unphysical spectrum of framed BPS states. For example, the case $r=s$ above describes the purely electric states.  On physical grounds these should have degeneracy one, but instead the Higgs branch has yielded many more states.  A related observation is that the generating functionals \eqref{wilsonhiggsfunc} obey the incorrect multiplication law
\begin{equation}
F(W_{n}^{\mathrm{Higgs}})F(W_{m}^{\mathrm{Higgs}})=F(W_{n+m}^{\mathrm{Higgs}}), \label{incorrecthiggsmult}
\end{equation}
as may be readily verified using Pascal's recursion relations on binomial coefficients.  We conclude that, in general, the representation theory of a framed quiver does not correctly compute the framed BPS states.

Though \eqref{incorrecthiggsmult} is not the correct algebra of $SU(2)$ Wilson lines, it is nevertheless curiously simple: it is the algebra of highest weights of irreducible $SU(2)$ representations.  Why this should be so is presently unclear.

\paragraph{Coulomb Branch Calculation}\mbox{} \\

Now we study the spectrum of framed BPS states predicted by our general proposal \eqref{framedquivconj}.  This requires us to use the Coulomb branch formula \eqref{MPS}.  We have not been able to evaluate the result in closed form.  Fortunately the paper \cite{Manschot:2013sya} includes a Mathematica notebook which implements the Coulomb branch formula.  Using this notebook we have studied what our proposal gives for the framed BPS states attached to the line defects $W_n$ for small $n$.\footnote{In a previous preprint version of this paper, we proposed an alternative method of calculating the result, and verified that it gave the expected results for $n \le 4$.  However, Hugh Thomas pointed out that beginning at $n=5$, this alternative method would give a result which would not obey the tensor product algebra of $SU(2)$ representations.  Fortunately, as we report here, implementing our general
proposal \eqref{framedquivconj} directly does give the expected result for $n=5$, at least for those coefficients we were able to check --- this includes some coefficients which differ from what our alternative method predicted.
We have not understood precisely what was wrong with our alternative method. We thank Hugh Thomas for many useful comments and suggestions on this subject.}
As usual, we package the results into a generating functional:
\begin{equation}
F(W_{n})=\sum_{s=0}^{n}\sum_{r=0}^{s}\fro\left(W_{n},\left(r-\frac{n}{2}\right)\gamma_{1}+\left(s-\frac{n}{2}\right)\gamma_{2},y\right)X_{\left(r-\frac{n}{2}\right)\gamma_{1}+\left(s-\frac{n}{2}\right)\gamma_{2}}. \label{wilsoncoulombfunc}
\end{equation}
We find
\begin{eqnarray} \label{mma-results}
F(W_{0}) & = & \left[\phantom{\int}\hspace{-.15in}X_{0}\right], \nonumber\\
F(W_{1}) & = & \left[\phantom{\int}\hspace{-.15in}X_{-\frac{1}{2}(\gamma_{1}+\gamma_{2})}+X_{\frac{1}{2}(\gamma_{1}+\gamma_{2})}\right]+X_{-\frac{1}{2}\gamma_{1}+\frac{1}{2}\gamma_{2}}, \nonumber\\
F(W_{2}) & = & \left[\phantom{\int}\hspace{-.15in}X_{-(\gamma_{1}+\gamma_{2})}+X_{0}+X_{(\gamma_{1}+\gamma_{2})}\right]+\left(y^{-1}+y\right)X_{-\gamma_{1}}+X_{-\gamma_{1}+\gamma_{2}}+\left(y^{-1}+y\right)X_{\gamma_{2}},  \nonumber\\
F(W_{3}) & = & \left[\phantom{\int}\hspace{-.15in}X_{-\frac{3}{2}(\gamma_{1}+\gamma_{2})}+X_{-\frac{1}{2}(\gamma_{1}+\gamma_{2})}+X_{\frac{1}{2}(\gamma_{1}+\gamma_{2})}+X_{\frac{3}{2}(\gamma_{1}+\gamma_{2})}\right]+\left(y^{-2}+1+y^{2}\right) X_{-\frac{3}{2}\gamma_{1}-\frac{1}{2}\gamma_{2}}\nonumber\\
& + & \phantom{\int}\hspace{-.15in} \left(y^{-2}+1+y^{2}\right) X_{-\frac{3}{2}\gamma_{1}+\frac{1}{2}\gamma_{2}}+\left(y^{-2}+2+y^{2}\right) X_{-\frac{1}{2}\gamma_{1}+\frac{1}{2}\gamma_{2}}+ X_{-\frac{3}{2}\gamma_{1}+\frac{3}{2}\gamma_{2}}   \nonumber\\
& + & \phantom{\int}\hspace{-.15in}  \left(y^{-2}+1+y^{2}\right) X_{-\frac{1}{2}\gamma_{1}+\frac{3}{2}\gamma_{2}}+ \left(y^{-2}+1+y^{2}\right) X_{\frac{1}{2}\gamma_{1}+\frac{3}{2}\gamma_{2}},        \nonumber \\
F(W_{4}) & = &  \left[\phantom{\int}\hspace{-.15in}X_{-2(\gamma_{1}+\gamma_{2})}+X_{-(\gamma_{1}+\gamma_{2})}+X_{0}+X_{(\gamma_{1}+\gamma_{2})}+X_{2(\gamma_{1}+\gamma_{2})}\right]  \label{explicit1} \\
&+&   \phantom{\int}\hspace{-.15in}\left(y^{-3}+y^{-1}+y+y^{3}\right) X_{-2\gamma_{1}-\gamma_{2}} +\left(y^{-4}+y^{-2}+2+y^{2}+y^{4}\right) X_{-2\gamma_{1}}    \nonumber \\
& + & \phantom{\int}\hspace{-.15in} \left(y^{-3}+2y^{-1}+2y+y^{3}\right) X_{-\gamma_{1}} +\left(y^{-3}+y^{-1}+y+y^{3}\right) X_{-2\gamma_{1}+\gamma_{2}}  \nonumber \\
& + &  \phantom{\int}\hspace{-.15in}  \left(y^{-4}+2y^{-2}+3+2y^{2}+y^{4}\right) X_{-\gamma_{1}+\gamma_{2}} + \left(y^{-3}+2y^{-1}+2y+y^{3}\right) X_{\gamma_{2}}+X_{-2\gamma_{1}+2\gamma_{2}}  \nonumber \\
& + & \phantom{\int}\hspace{-.15in}    \left(y^{-3}+y^{-1}+y+y^{3}\right) X_{-\gamma_{1}+2\gamma_{2}}+  \left(y^{-4}+y^{-2}+2+y^{2}+y^{4}\right) X_{2\gamma_{2}}         \nonumber \\
& + &  \phantom{\int}\hspace{-.15in}    \left(y^{-3}+y^{-1}+y+y^{3}\right) X_{\gamma_{1}+2\gamma_{2}}. \nonumber
\end{eqnarray}
By direct evaluation from \eqref{mma-results} we then find that the generating functionals $F(W_{n})$ for
small $n$ indeed satisfy the tensor product algebra of $SU(2)$ representations:
\begin{eqnarray}
F(W_{1})*F(W_{1}) & = & F(W_{0})+F(W_{2}), \\
F(W_{1})*F(W_{2}) & = & F(W_{1})+F(W_{3}),  \nonumber  \\
F(W_{1})*F(W_{3}) & = & F(W_{2})+F(W_{4}). \nonumber
\end{eqnarray}
This verifies \eqref{cwilsonOPE} at least in these cases.   We have also evaluated all of the terms in $F(W_5)$ except for the coefficients of $X_{\frac32 \gamma_1 + \frac52 \gamma_2}$ and $X_{\frac52 \gamma_1 + \frac52 \gamma_2}$ (the running time required to compute these using the notebook of \cite{Manschot:2013sya} on the PC we used appears to be greater than two weeks.)  All terms which we did compute are consistent with the expected relation
\begin{eqnarray}
F(W_1)*F(W_4) &=& F(W_3) + F(W_5).
\end{eqnarray}

It would be very interesting to know whether the $F(W_n)$ produced by the Coulomb branch formula indeed obey \eqref{cwilsonOPE} in general.

\section*{Acknowledgements}
We thank S. Cecotti, F. Denef, A. Goncharov, J. Manschot, B. Pioline, and A. Sen for discussions.  We thank H. Thomas for discussions and for pointing out a problem in the first preprint version of this paper.  We thank the Simons Center for Geometry and Physics for hospitality during the 2012 Simons Workshop where this work was initiated. The work of C.C. is supported by a Junior Fellowship at the Harvard Society of Fellows.  The work of A.N. is supported by  National Science Foundation grant DMS-1151693.

\bibliography{FramedQuivers}

\end{document}